\newcommand{\openone}{\leavevmode\hbox{\small1\normalsize\kern-.33em1}}
\def\UrlSpecials{\do\~{\kern -.15em\lower .7ex\hbox{~}\kern .04em}} \catcode`~=13 
\newcommand{\nn}{\nonumber}
\newcommand{\calA}{\mathcal{A}}
\newcommand{\calB}{\mathcal{B}}
\newcommand{\calC}{\mathcal{C}}
\newcommand{\calD}{\mathcal{D}}
\newcommand{\calE}{\mathcal{E}}
\newcommand{\calF}{\mathcal{F}}
\newcommand{\calL}{\mathcal{L}}
\newcommand{\calP}{\mathcal{P}}
\newcommand{\calR}{\mathcal{R}}
\newcommand{\calS}{\mathcal{S}}
\newcommand{\calT}{\mathcal{T}}
\newcommand{\calU}{\mathcal{U}}
\newcommand{\calV}{\mathcal{V}}
\newcommand{\calW}{\mathcal{W}}
\newcommand{\calX}{\mathcal{X}}
\newcommand{\calY}{\mathcal{Y}}
\newcommand{\calZ}{\mathcal{Z}}
\newcommand{\rmd}{\mathrm{d}}
\newcommand{\rmQ}{\mathrm{Q}}
\newcommand{\bbR}{\mathbb{R}}
\DeclareMathAlphabet{\mathbsf}{OT1}{cmss}{bx}{n}
\DeclareMathAlphabet{\mathssf}{OT1}{cmss}{m}{sl}% slanted sans serif
\newcommand{\rvR}{\mathsf{R}}
\DeclareSymbolFont{bsfletters}{OT1}{cmss}{bx}{n}  
\DeclareSymbolFont{ssfletters}{OT1}{cmss}{m}{n}
\DeclareMathSymbol{\bsfGamma}{0}{bsfletters}{'000}
\DeclareMathSymbol{\ssfGamma}{0}{ssfletters}{'000}
\DeclareMathSymbol{\bsfDelta}{0}{bsfletters}{'001}
\DeclareMathSymbol{\ssfDelta}{0}{ssfletters}{'001}
\DeclareMathSymbol{\bsfTheta}{0}{bsfletters}{'002}
\DeclareMathSymbol{\ssfTheta}{0}{ssfletters}{'002}
\DeclareMathSymbol{\bsfLambda}{0}{bsfletters}{'003}
\DeclareMathSymbol{\ssfLambda}{0}{ssfletters}{'003}
\DeclareMathSymbol{\bsfXi}{0}{bsfletters}{'004}
\DeclareMathSymbol{\ssfXi}{0}{ssfletters}{'004}
\DeclareMathSymbol{\bsfPi}{0}{bsfletters}{'005}
\DeclareMathSymbol{\ssfPi}{0}{ssfletters}{'005}
\DeclareMathSymbol{\bsfSigma}{0}{bsfletters}{'006}
\DeclareMathSymbol{\ssfSigma}{0}{ssfletters}{'006}
\DeclareMathSymbol{\bsfUpsilon}{0}{bsfletters}{'007}
\DeclareMathSymbol{\ssfUpsilon}{0}{ssfletters}{'007}
\DeclareMathSymbol{\bsfPhi}{0}{bsfletters}{'010}
\DeclareMathSymbol{\ssfPhi}{0}{ssfletters}{'010}
\DeclareMathSymbol{\bsfPsi}{0}{bsfletters}{'011}
\DeclareMathSymbol{\ssfPsi}{0}{ssfletters}{'011}
\DeclareMathSymbol{\bsfOmega}{0}{bsfletters}{'012}
\DeclareMathSymbol{\ssfOmega}{0}{ssfletters}{'012}
\newcommand{\hatT}{\hat{T}}
\newcommand{\hatx}{\hat{x}}
\newcommand{\hatX}{\hat{X}}
\newcommand{\haty}{\hat{y}}
\newcommand{\hatY}{\hat{Y}}
\DeclareMathOperator*{\argmin}{arg\,min}
\newtheorem{theorem}{Theorem} 
\newtheorem{lemma}[theorem]{Lemma}
\newtheorem{proposition}[theorem]{Proposition}
\newtheorem{definition}{Definition}
\newtheorem{remark}{Remark}
\begin{document}

\title{Discrete Lossy Gray-Wyner Revisited: Second-Order Asymptotics, Large and Moderate Deviations}

\author{Lin Zhou  $\qquad$Vincent Y.~F.~Tan$\qquad$
        Mehul Motani
\thanks{The authors are with the  Department of Electrical and Computer Engineering, National University of Singapore (NUS). V.~Y.~F.~Tan is also with the Department of Mathematics, NUS. Emails: lzhou@u.nus.edu; vtan@nus.edu.sg; motani@nus.edu.sg.}
\thanks{Part of this paper has been presented at ISIT 2016, Barcelona, Spain~\cite{zhou2016}.}
}
\maketitle

\begin{abstract}
In this paper, we revisit the discrete lossy Gray-Wyner problem. In particular, we derive its optimal second-order coding rate region, its error exponent (reliability function) and its moderate deviations constant under mild conditions on the source. To obtain the second-order asymptotics, we extend some ideas from Watanabe's work (2015). In particular, we leverage the properties of an appropriate generalization of the conditional distortion-tilted information density, which was first introduced by Kostina and Verd\'u (2012). The converse part uses a perturbation argument by Gu and Effros (2009) in their strong converse proof of the discrete Gray-Wyner problem. The achievability part uses two novel elements: (i) a generalization of various type covering lemmas;  and (ii)  the uniform continuity of the conditional rate-distortion function in both the source (joint) distribution and the  distortion level. To obtain the error exponent,  for the achievability part, we use the same generalized type covering lemma  and for the converse, we use the strong converse together with a change-of-measure technique. Finally, to obtain the moderate deviations constant, we apply the moderate deviations theorem to probabilities defined in terms of information spectrum quantities. 
\end{abstract}

\section{Introduction}
The lossy Gray-Wyner source coding problem \cite{gray1974source} is shown in Figure \ref{systemmodel}. There are three encoders and two decoders. Encoder $f_i$ has access to a source sequence pair $(X^n,Y^n)$ and compresses it into a message $S_i$. Decoder $\phi_1$ aims to recover source sequence $X^n$ under fidelity criterion $d_X$ and distortion level $D_1$ with the encoded message $S_0$ from encoder $f_0$ and $S_1$ from encoder $f_1$. Similarly, the decoder $\phi_2$ aims to recover $Y^n$ with messages $S_0$ and $S_2$. The optimal rate region for lossless and lossy Gray-Wyner source coding problem was characterized in \cite{gray1974source}. However, because an auxiliary random variable is involved in the description of the rate region, it is non-trivial to characterize the second-order coding rate region, the error exponent as well as the moderate deviations constant.

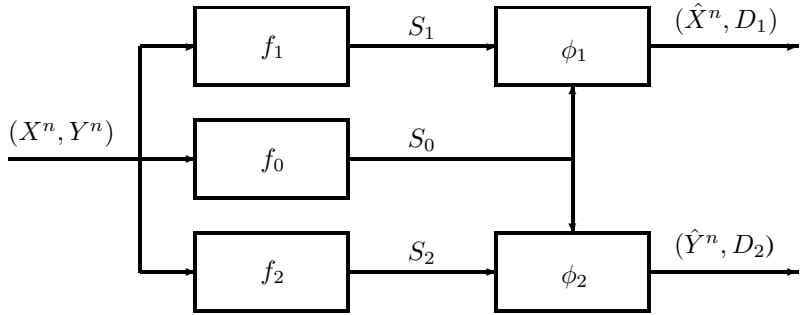
\begin{figure}[t]
\centering
\setlength{\unitlength}{0.5cm}
\scalebox{1}{
\begin{picture}(26,9)
\linethickness{1pt}
\put(1,5.5){\makebox{$(X^n,Y^n)$}}
%put three encoders
\put(6,1){\framebox(4,2)}
\put(6,4){\framebox(4,2)}
\put(6,7){\framebox(4,2)}
%put encoder names
\put(7.7,1.8){\makebox{$f_2$}}
\put(7.7,4.8){\makebox{$f_0$}}
\put(7.7,7.8){\makebox{$f_1$}}
%line the source to encoders
\put(1,5){\vector(1,0){5}}
\put(4.5,5){\line(0,1){3}}
\put(4.5,8){\vector(1,0){1.5}}
\put(4.5,5){\line(0,-1){3}}
\put(4.5,2){\vector(1,0){1.5}}
%put two decoders
\put(14,1){\framebox(4,2)}
\put(14,7){\framebox(4,2)}
%put decoder names
\put(15.7,7.7){\makebox{$\phi_1$}}
\put(15.7,1.7){\makebox{$\phi_2$}}
%put arrows to decoders
\put(10,2){\vector(1,0){4}}
\put(12,2.5){\makebox(0,0){$S_2$}}
\put(10,5){\line(1,0){6}}
\put(12,5.5){\makebox(0,0){$S_0$}}
\put(16,5){\vector(0,1){2}}
\put(16,5){\vector(0,-1){2}}
\put(10,8){\vector(1,0){4}}
\put(12,8.5){\makebox(0,0){$S_1$}}
%put decoded symbols
\put(18,2){\vector(1,0){4}}
\put(18.7,2.5){\makebox{$(\hat{Y}^n,D_2$)}}
\put(18,8){\vector(1,0){4}}
\put(18.7,8.5){\makebox{$(\hat{X}^n,D_1)$}}
\end{picture}}
\caption{The lossy Gray-Wyner source coding problem~\cite{gray1974source}. We study the second-order asymptotics, the error exponent and the moderate deviations constant for this problem.}
\label{systemmodel}
\end{figure}

\subsection{Related Works}
The most relevant work to ours is \cite{watanabe2015second}, in which Watanabe derived the optimal second-order coding region for the lossless Gray-Wyner source coding problem. Several of the techniques contained herein mirror those in \cite{watanabe2015second}. However, we also combine techniques from other works, develop some new results,   and make several new observations for this lossy problem.  We briefly summarize some other works that are related to Gray-Wyner's seminal work. Gu and Effros derived a strong converse for discrete memoryless sources in \cite{wei2009strong}. Viswanatha, Akyol and Rose \cite{viswanatha2014} derived a single-letter formula for the lossy version of Wyner's common information and also properties of the optimal test channel (which we exploit in our proofs). Xu, Liu and Chen \cite{xu2015} presented an alternative expression for lossy version of Wyner's common information.

There are several works that consider second-order asymptotics for lossy source coding. These include the study of point-to-point lossy source coding by Ingber and Kochman~\cite{ingber2011dispersion} and Kostina and Verd\'u   \cite{kostina2012fixed}, the Wyner-Ziv problem by Watanabe, Kuzuoka and Tan in \cite{watanabe2015} and by Yassaee, Aref and Gohari in \cite{yassaee2013technique}; the successive refinement source coding problem (which is closely related to the Gray-Wyner problem)  by No, Ingber and Weissman in \cite{no2016}.

In terms of error exponent analyses for lossy source coding, there are several related works. For point-to-point lossy source coding, Marton \cite{Marton74} derived the error exponent for discrete memoryless sources while Ihara and Kubo~\cite{ihara2000error} considered Gaussian memoryless sources. For successive refinement source coding, Kanlis and Narayan derived the error exponent in \cite{kanlis1996error} under joint excess-distortion criterion while Tuncel and Rose~\cite{tuncel2003} derived the error exponent under separate excess-distortion ceiteria. 

We also recall the related works on moderate deviations analysis. Chen  {\em et al.}~\cite{chen2007redundancy} and He {\em et al.}~\cite{he2009redundancy} initiated the study of moderate deviations for  fixed-to-variable length source coding with decoder side information. For fixed-to-fixed length analysis, Altu\u{g} and Wagner~\cite{altugwagner2014} initiated the study of moderate deviations in the context of discrete memoryless channels. Polyanksiy and Verd\'u~\cite{polyanskiy2010channel} relaxed some assumptions in the conference version of Altu\u{g} and Wagner's work~\cite{altug2010moderate} and they also considered moderate deviations for AWGN channels. Altu\u{g}, Wagner  and Kontoyiannis~\cite{altug2013lossless} considered moderate deviations for lossless source coding. For lossy source coding, the moderate deviations analysis was done by Tan in \cite{tan2012moderate} using ideas from Euclidean information theory~\cite{borade2008}.

\subsection{Main Contributions}
\label{sec:maincontribution}
In this paper, we derive the optimal second-order coding region, the error exponent and moderate deviations constant for discrete lossy Gray-Wyner source coding problem under some mild conditions.  To the best of our knowledge, even the error exponent for the lossy Gray-Wyner problem has not been established in the literature. We highlight some of the salient features of our analyses. 

\begin{enumerate}
\item As shown in Figure \ref{relationresult}, the achievability proofs for all three asymptotic regimes can be done in a unified manner and all of them hinge on a single covering lemma (Lemma \ref{achievable}) designed specifically for the discrete lossy Gray-Wyner source coding problem.  While the proof of this type covering lemma itself hinges on various other works~\cite{no2016, Marton74,watanabe2015second}, piecing the ingredients together and ensuring that the resultant asymptotic results are tight is non-trivial. 
\item One of the main challenges here in proving the  type covering lemma is the requirement to  establish the uniform continuity of the conditional rate-distortion function in {\em both} the source  distribution and distortion level, which we do in Lemmas \ref{continuityd2}, \ref{continuityp} and \ref{continuityD}. Palaiyanur and Sahai~\cite{palaiyanur2008uniform} only established this uniformity in the source  distribution for the rate-distortion function.

\item Several observations need to be made to establish the optimal second-order coding region.  We define a generalized distortion-tilted information density, leverage on its properties and make proper use of Taylor expansions and the Berry-Esseen Theorem. We encountered a slight obstacle on whether to define the  distortion-tilted information density according to the Gray-Wyner region defined in terms of conditional rate-distortion functions as in \cite{gray1974source} or (conditional) mutual information quantities as in~\cite[Exercise~14.9]{el2011network}. These are equivalent as stated in Theorem \ref{gwregion} and equation \eqref{gwregion2}. However, it turns out that the latter is more amenable since it does not explicitly involve an optimization (which is present in the characterization of the conditional rate-distortion function).  In the converse part, as shown in Figure \ref{relationresult}, we prove a type-based strong converse by using perturbation approach in \cite{wei2009strong} and similar analysis in \cite{watanabe2015second}. 

\item To evaluate the optimal second-order coding region for rate triplets on the Pangloss plane \cite{gray1974source}, we leverage a result in Viswanatha, Akyol, and  Rose \cite{viswanatha2014} which establishes several Markov chains for the Gray-Wyner problem. This helps to simplify the relevant tilted information densities.

\item For the error exponent analysis, we combine our type covering lemma with Marton's technique for establishing the reliability function of lossy data compression~\cite{Marton74}. The converse part follows from strong converse \cite{wei2009strong} and  the change-of-measure technique by Haroutunian~\cite{csiszar2011information, haroutunian68}.

\item Finally, for our moderate deviations analysis, we use an information spectrum calculation~\cite{polyanskiy2010channel} similar to that used for the second-order asymptotic analysis. We also invoke the moderate deviations principle/theorem in~\cite[Theorem~3.7.1]{dembo2009large}. Further, in the analysis of moderate deviations, compared with previous result for lossy source coding~\cite{tan2012moderate}, we removed the additional requirement that $\lim_{n\to\infty} \frac{n\rho_n^2}{\log n}\to \infty$ where $\rho_n$ controls the speed of convergence of rate to a boundary rate (pair) in the moderate deviations regime. Instead, all we need is the usual condition that $\lim_{n\to \infty} n\rho_n^2=\infty$. 
\end{enumerate}

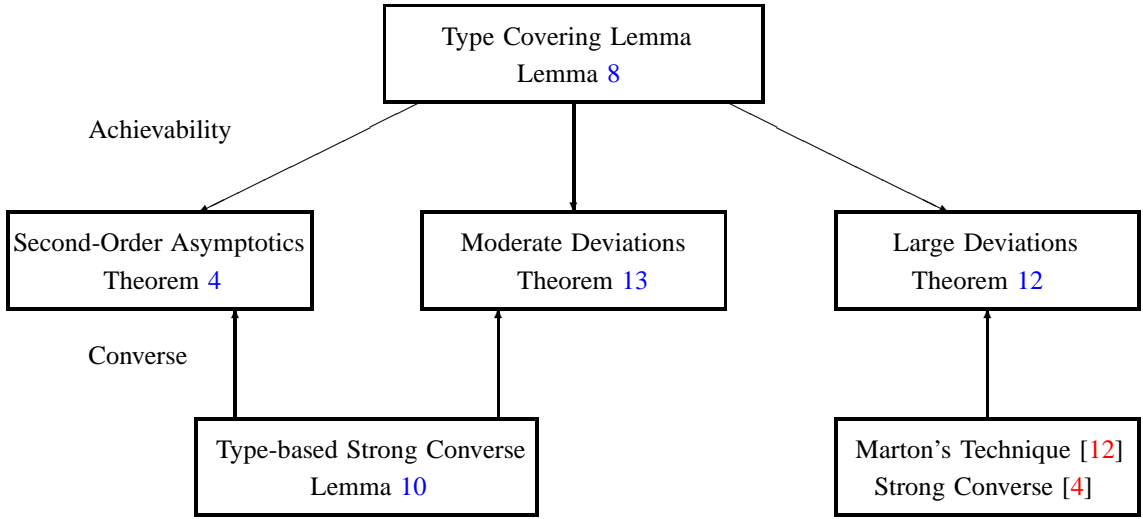
\begin{figure}[t]
\centering
\setlength{\unitlength}{0.5cm}
\scalebox{1}{
\begin{picture}(32,14)
\linethickness{1pt}
\put(11,11.5){\framebox(10,2.5)}
\put(12.5,13){\makebox{Type Covering Lemma}}
\put(14.5,12){\makebox{Lemma \ref{achievable}}}
\put(3.1,10.5){\makebox{Achievability}}
\put(3.1,4.5){\makebox{Converse}}
\put(1,6){\framebox(8,2.5)}
\put(1.1,7.5){\makebox{Second-Order Asymptotics}}
\put(3.5,6.5){\makebox{Theorem \ref{mainresult}}}
\put(12,6){\framebox(8,2.5)}
\put(13,7.5){\makebox{Moderate Deviations}}
\put(14.5,6.5){\makebox{Theorem \ref{theoremmdc}}}
\put(23,6){\framebox(8,2.5)}
\put(24.5,7.5){\makebox{Large Deviations}}
\put(25,6.5){\makebox{Theorem \ref{eegray}}}
\put(6,0.5){\framebox(9,2.5)}
\put(6.5,2){\makebox{Type-based Strong Converse}}
\put(9,1){\makebox{Lemma \ref{typestrongconverse}}}
\put(23,0.5){\framebox(8,2.5)}
\put(23.5,2){\makebox{Marton's Technique \cite{Marton74}}}
\put(24,1){\makebox{Strong Converse~\cite{wei2009strong}}}
%finally, let's put arrows
\put(7,3){\vector(0,1){3}}
\put(14,3){\vector(0,1){3}}
\put(27,3){\vector(0,1){3}}
\put(16,11.5){\vector(0,-1){3}}
\put(12,11.5){\vector(-2,-1){6}}
\put(20,11.5){\vector(2,-1){6}}
\end{picture}}
\caption{Main results and proof techniques}
\label{relationresult}
\end{figure}

\subsection{Organization of the Paper}
The rest of the paper is organized as follows. In Section \ref{existingresult}, we set up the notation, formulate the discrete lossy Gray-Wyner problem and recapitulate the optimal rate region (first-order result). In Section \ref{secondorder}, we define the second-order coding region formally and present the main theorem which expresses the optimal second-order coding region in terms of a rate-dispersion function~\cite{kostina2012fixed}. In addition, we simplify the calculation of the region for rate triplets on the Pangloss region and provide an numerical example for a doubly symmetric binary source with hamming distortion measures. In Section \ref{secondorderproof}, we present the proof for second-order asymptotics.  For the achievability part, we present a type covering lemma for discrete lossy Gray-Wyner problem which is used extensively in various achievability proofs throughout the paper. In Section \ref{largedeviations}, we define the error exponent formally, present the result and provide a detailed proof. In Section \ref{moderatedeviations}, we provide a formal definition of moderate deviations constant, present the main result on moderate deviations as well as its detailed proof. Finally, we conclude the paper in Section \ref{conclusion}. To ensure that the main ideas  of the paper are presented seamlessly, we defer the proof of all supporting technical lemmas to the appendices.

\section{Problem Formulation and Existing Results}
\label{existingresult}
\subsection{Notation}
\label{sec:notation}
Random variables and their realizations are in capital (e.g.,\ $X$) and lower case (e.g.,\ $x$) respectively. All sets (e.g., alphabets of random variables) are denoted in calligraphic font (e.g.,\ $\mathcal{X}$). Let $X^n:=(X_1,\ldots,X_n)$ be a random vector of length $n$. The set of all probability distribution on $\calX$ is denoted as $\calP(\calX)$ and the set of all conditional probability distribution from $\calX$ to $\calY$ is denoted as $\calP(\calY|\calX)$. Given $P\in\calP(\calX)$ and $V\in\calP(\calY|\calX)$, we use $P\times V$ to denote the joint distribution induced by $P$ and $V$. In terms of the method of types, we use the notations as \cite{TanBook}. Given sequence $x^n$, the empirical distribution is denoted as $\hat{T}_{x^n}$. The set of types formed from length $n$ sequences in $\calX$ is denoted as $\calP_{n}(\calX)$. Given $P\in\calP_{n}(\calX)$, the set of all sequences of length $n$ with type $P$ is denoted as $\calT_{P}$. Given $x^n\in\calT_{P}$, the set of all sequences $y^n\in\calY^n$ such that the joint type of $(x^n,y^n)$ is $P\times V$ is denoted as $\calT_{V}(x^n)$. The set of all $V\in\calP(\calY|\calX)$ for which $\calT_{V}(x^n)$ is not empty for $x^n\in\calT_{P}$ is denoted as $\calV_{n}(\calY;P)$.

In terms of information theoretic quantities, we use $H(X)$ and $H(P_X)$ interchangeably to denote the entropy of a random variable $X$ with distribution $P_{X}$. Similarly, we use $H(X|Y)$ and $H(P_{X|Y}|P_Y)$ interchangeably. For mutual information, we use $I(X;Y)$ and $I(P_{X},P_{Y|X})$ interchangeably. For conditional mutual information, we use $I(X;Y|W)$ and $I(P_{X|W},P_{Y|XW}|P_{W})$ interchangeably.

We use $\exp(x)$ to denote $2^x$. We let $\rmQ(t) := \int_{t}^\infty\frac{1}{\sqrt{2\pi}}e^{-u^2/2}\, \rmd u$ be the  complementary cumulative distribution function   of the standard Gaussian. We let $\mathrm{Q}^{-1}$ be the inverse of $\rmQ$.  Given two integers $a$ and $b$, we use $[a:b]$ to denote all the integers between $a$ and $b$. We use standard asymptotic notation such as $O(\cdot)$  and $o(\cdot)$.

Given a joint probability mass function (pmf) $P_{XY}$, let $\calS=\mathrm{supp}(P_{XY})$ and $|\calS|=m$. Let us sort $P_{XY}(x,y)$ in an decreasing order for all $(x,y)\in\calX\times\calY$, and for all $i\in[1:m]$, let $(x_i,y_i)$ be the pair such that $P_{XY}(x_i,y_i)$ is the $i$-th largest. Let $\Gamma(P_{XY})$ be a joint distribution defined on $\calS$ such that $\Gamma_i(P_{XY})=P_{XY}(x_i,y_i)$ for all $i\in[1:m]$.

\subsection{Problem Formulation}
We consider a correlated source $(X,Y)$ with joint distribution $P_{XY}$ and a finite alphabet $\calX\times\calY$. The correlated source is assumed to be stationary and memoryless, hence $(X^n,Y^n)$ is an i.i.d.\ sequence where each $(X_i,Y_i)$ is generated according to $P_{XY}$. The basic definitions are as follows.

\begin{definition}
An $(n,M_0,M_1,M_2)$-code for lossy Gray-Wyner source coding consists of three encoders:
\begin{align}
f_0:\calX^n\times\calY^n\to\{1,2,\ldots,M_0\},\\
f_1:\calX^n\times\calY^n\to\{1,2,\ldots,M_1\},\\
f_2:\calX^n\times\calY^n\to\{1,2,\ldots,M_2\},
\end{align}
and two decoders:
\begin{align}
\phi_1:\{1,2,\ldots,M_0\}\times\{1,2,\ldots,M_1\}\to\hat{\calX}^n,\\
\phi_2:\{1,2,\ldots,M_0\}\times\{1,2,\ldots,M_2\}\to\hat{\calY}^n.
\end{align}
\end{definition}
Define two distortion measures: $d_X:\calX\times\hat{\calX}\to[0,\infty)$ and $d_Y:\calY\times\hat{\calY}\to[0,\infty)$ such that for each $(x,y)\in\calX\times\calY$, there exists $(\hat{x},\hat{y})\in\hat{\calX}\times\hat{\calY}$ satisfying $d_{X}(x,\hat{x})=0$ and $d_{Y}(y,\hat{y})=0$. Define $\overline{d}_X:=\max_{x,\hat{x}}d_{X}(x,\hat{x})$ and $\underline{d}_X:=\min_{x,\hat{x}:d_{X}(x,\hat{x})>0}d_{X}(x,\hat{x})$. Similarly, we define $\overline{d}_Y$ and $\underline{d}_Y$. Let the average distortion between $x^n$ and $\hat{x}^n$ be defined as $d_X(x^n,\hat{x}^n):=\frac{1}{n}\sum_{i=1}^nd_X(x_i,\hat{x}_i)$ and the average distortion $d_Y(y^n,\hat{y}^n)$ be defined in a similar manner. Throughout the paper, we consider the case where $D_1>0$ and $D_2>0$ (but we will remark on how our results apply to the case where either or both $D_i=0$ ($i=1,2$)). The first-order fundamental limit is defined as follows.
\begin{definition}[First-order Region]
\label{deffirst}
A rate triplet $(R_0,R_1,R_2)$ is said to be $(D_1,D_2)$-achievable if there exists a sequence of $(n,M_0,M_1,M_2)$-codes such that
\begin{align}
\limsup_{n\to\infty}\frac{1}{n}\log M_0\leq R_0,\\*
\limsup_{n\to\infty}\frac{1}{n}\log M_1\leq R_1,\\*
\limsup_{n\to\infty}\frac{1}{n}\log M_2\leq R_2,
\end{align}
and
\begin{align}
\limsup_{n\to\infty} \mathbb{E}\left[d_X(X^n,\hat{X}^n)\right]\leq D_1,\\
\limsup_{n\to\infty} \mathbb{E}\left[d_Y(Y^n,\hat{Y}^n)\right]\leq D_2.
\end{align}
The closure of the set of all $(D_1,D_2)$-achievable rate triplets is the $(D_1,D_2)$-optimal rate region and denoted as $\calR(D_1,D_2|P_{XY})$.
\end{definition}

\subsection{Existing Results}
Gray and Wyner characterized the $(D_1,D_2)$-achievable rate region in \cite{gray1974source}. Let $\calP(P_{XY})$ be the set of all joint distributions $P_{XYW}$ such that the $\calX\times\calY$-marginal of $P_{XYW}$ is the source distribution $P_{XY}$ and $|\calW|\leq |\calX|\cdot|\calY|+2$. Denote the $\calX\times\calW$ marginal distribution as $P_{XW}$ and the $\calY\times\calW$ marginal distribution as $P_{YW}$.
\begin{theorem}[Gray-Wyner \cite{gray1974source}]
\label{gwregion}
The $(D_1,D_2)$-achievable rate region for lossy Gray-Wyner source coding is
\begin{align}
\!\!\!\calR(D_1,D_2|P_{XY})=\!\!\!\bigcup_{P_{XYW}\in\calP(P_{XY})}\!\!\!\left\{(R_0,R_1,R_2):R_0\geq I(X,Y;W), R_1\geq R_{X|W}(P_{XW},D_1), R_2\geq R_{Y|W}(P_{YW},D_2)\right\},
\end{align}
where $R_{X|W}(P_{XW},D_1)$ and $R_{Y|W}(P_{YW},D_2)$ are conditional rate-distortion functions \cite[pp.~275, Chapter 11]{el2011network}, i.e.,
\begin{align}
\!\!R_{X|W}(P_{XW},D_1)&=\min_{P_{\hat{X}|XW}:\mathbb{E}[d_{X}(X,\hat{X})]\leq D_1} \!\!I(X;\hat{X}|W),
\end{align}
and similarly for $R_{Y|W}(P_{YW},D_2)$.
\end{theorem}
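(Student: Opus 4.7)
The plan is to prove Theorem \ref{gwregion} by separate achievability and converse arguments, following the classical strategy of Gray and Wyner \cite{gray1974source}.

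For achievability, I will fix any $P_{XYW}\in\calP(P_{XY})$ together with test channels $P_{\hat X|XW}$ and $P_{\hat Y|YW}$ that achieve $R_{X|W}(P_{XW},D_1)$ and $R_{Y|W}(P_{YW},D_2)$. The construction is a three-layer random codebook: draw $2^{n(I(X,Y;W)+\delta)}$ codewords $W^n(s_0)$ i.i.d.\ from $P_W^n$ for the common description, and for each realization $W^n(s_0)$ draw private codebooks of conditionally i.i.d.\ sequences from $P_{\hat X|W}^n(\cdot|W^n(s_0))$ and $P_{\hat Y|W}^n(\cdot|W^n(s_0))$ at rates $R_{X|W}(P_{XW},D_1)+\delta$ and $R_{Y|W}(P_{YW},D_2)+\delta$, respectively. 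Encoder $f_0$ picks $s_0$ so that $W^n(s_0)$ is jointly typical with $(X^n,Y^n)$, and encoders $f_1,f_2$ then pick indices so that their codewords are jointly typical with $(X^n,W^n(s_0))$ and $(Y^n,W^n(s_0))$. By the covering lemma and its conditional version, each encoding stage succeeds with probability approaching one; on this success event the typical average lemma yields $d_X(X^n,\hat X^n)\le D_1+\delta$ and $d_Y(Y^n,\hat Y^n)\le D_2+\delta$, while the rare failure event contributes at most $\overline d_X,\overline d_Y$ to the expected distortions. Letting $\delta\downarrow 0$ exhausts the claimed region.

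For the converse, given an achievable code with common message $S_0$ and private messages $S_1,S_2$, I single-letterize by introducing a time-sharing index $Q\sim\Unif[1:n]$ independent of everything else and setting $W=(S_0,Q)$, $X=X_Q$, $Y=Y_Q$, $\hat X=\hat X_Q$, $\hat Y=\hat Y_Q$, which preserves the source marginal $(X,Y)\sim P_{XY}$. The i.i.d.\ structure of $(X^n,Y^n)$ and the chain rule give
\begin{align}
n R_0 \ge H(S_0) \ge I(X^n,Y^n;S_0) \ge \sum_{i=1}^n I(X_i,Y_i;S_0) = n\, I(X,Y;W),
\end{align}
where the penultimate inequality uses $I(X^n,Y^n;S_0)=\sum_i I(X_i,Y_i;S_0|X^{i-1},Y^{i-1})\ge \sum_i I(X_i,Y_i;S_0)$ since $(X_i,Y_i)$ is independent of $(X^{i-1},Y^{i-1})$ under the i.i.d.\ source. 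For the private rates, start from $nR_1\ge H(S_1|S_0)\ge I(X^n;\hat X^n|S_0)$ and the analogous bound for $Y$; single-letterize to relate $\tfrac{1}{n}I(X^n;\hat X^n|S_0)$ to $I(X;\hat X|W)$, then use the per-letter average distortion $\tfrac{1}{n}\sum_i\mathbb{E}[d_X(X_i,\hat X_i)]\le D_1$ together with the definition and convexity of the conditional rate-distortion function to conclude $R_1\ge R_{X|W}(P_{XW},D_1)$, and similarly for $R_2$. A Carath\'eodory support-size argument on the $|\calX||\calY|-1$ parameters preserving the source marginal plus the three continuous functionals $(I(X,Y;W),R_{X|W}(P_{XW},D_1),R_{Y|W}(P_{YW},D_2))$ of $P_{W|XY}$ yields the bound $|\calW|\le|\calX||\calY|+2$.

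The main technical obstacle is the careful single-letterization of the two private-rate bounds: the naive auxiliary $W=(S_0,Q)$ need not render the per-letter Markov structure $W\to X\to \hat X$ valid, so one must either enrich $W$ with past sequences $(X^{Q-1},Y^{Q-1})$ and verify simultaneous compatibility across all three rate bounds, or invoke a convexity-plus-time-sharing argument that combines per-letter conditional rate-distortion bounds into a single common auxiliary. Beyond the converse, the achievability requires three nested stages of joint-typicality encoding whose error events must be handled together, and the cardinality reduction relies on the joint continuity of $(P_{XW},D)\mapsto R_{X|W}(P_{XW},D)$ in its two arguments.
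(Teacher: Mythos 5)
The paper does not prove Theorem \ref{gwregion}; it quotes it as an established result of Gray and Wyner \cite{gray1974source} and simply cites the original reference, so there is no in-paper proof to compare your attempt against. Judged on its own merits, your sketch is the standard Gray--Wyner strategy (superposition random coding with covering and conditional covering lemmas for achievability; time-sharing single-letterization for the converse) and at the level of a sketch it is essentially sound.

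One point of imprecision deserves correction. The obstacle you identify in the converse is not really the Markov chain $W\to X\to \hat X$: the bound $I(X;\hat X|W)\ge R_{X|W}(P_{XW},D_1)$ holds for \emph{any} joint law of $(X,W,\hat X)$ with the correct $P_{XW}$ marginal and $\mathbb{E}[d_X(X,\hat X)]\le D_1$, simply because the conditional rate-distortion function is defined as a minimum over all $P_{\hat X|XW}$; no Markov structure is required. The genuine difficulty is that the chain-rule expansions of $I(X^n,Y^n;S_0)$, $I(X^n;\hat X^n|S_0)$, and $I(Y^n;\hat Y^n|S_0)$ naturally produce three \emph{different} per-letter conditioning variables---$(S_0,X^{i-1},Y^{i-1})$, $(S_0,X^{i-1})$, and $(S_0,Y^{i-1})$ respectively---and one must argue that a single auxiliary $W$ makes all three single-letter bounds hold simultaneously. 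Your second remedy (a convexity-plus-time-sharing argument fusing per-letter conditional rate-distortion bounds into one auxiliary) is the right idea; indeed, the paper's later type-based strong converse (Lemma \ref{typestrongconverse}) appeals to Gu and Effros \cite{wei2009strong} for precisely this reconciliation step, and the cardinality bound $|\calW|\le|\calX||\calY|+2$ (rather than $|\calX||\calY|+3$) comes from the Fenchel--Eggleston sharpening of Carath\'eodory, which is worth stating explicitly in your support-size argument.
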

An equivalent version of the first-order coding region for Gray-Wyner problem was given in \cite[Exercise 14.9]{el2011network} and states that
\begin{align}
\!\!\calR(D_1,D_2|P_{XY})=\!\!\!\!\!\!\!\!\bigcup_{\substack{P_{W|XW}P_{\hatX_1|XW}P_{\hatY|YW}:\\\mathbb{E}[d_X(X,\hatX)]\leq D_1,~\mathbb{E}[d_2(Y,\hatY)]\leq D_2}}\!\!\!\!\!\!\!\!\!\!\left\{(R_0,R_1,R_2):R_0\geq I(X,Y;W), R_1\geq I(X;\hatX|W), R_2\geq I(Y;\hatY|W)\right\}, \label{gwregion2}
\end{align}

\section{Second-order Asymptotics}
\label{secondorder}
\subsection{Definition of Second-Order Coding Region}
In this subsection, we define the second-order coding region for lossy Gray-Wyner problem. First, define the excess-distortion probability for distortion pair $(D_1,D_2)$ as 
\begin{align}
\label{defexcessprob}
\epsilon_n(D_1,D_2):=\Pr\left(d_X(X^n,\hat{X}^n)>D_1~\mathrm{or}~d_Y(Y^n,\hat{Y}^n)> D_2\right).
\end{align}
\begin{definition}[Second-Order Region]
\label{defsecond}
A triplet $(L_0,L_1,L_2)$ is said to be second-order $(R_0,R_1,R_2,D_1,D_2,\epsilon)$-achievable if there exists a sequence of $(n,M_0,M_1,M_2)$-codes such that
\begin{align}
\limsup_{n\to\infty}\frac{1}{\sqrt{n}}\left(\log M_0-nR_0\right)\leq L_0,\\
\limsup_{n\to\infty}\frac{1}{\sqrt{n}}\left(\log M_1-nR_1\right)\leq L_1,\\
\limsup_{n\to\infty}\frac{1}{\sqrt{n}}\left(\log M_2-nR_2\right)\leq L_2,
\end{align}
and
\begin{align}
\limsup_{n\to\infty}\epsilon_n(D_1,D_2)\leq \epsilon.
\end{align}
The closure of the set of all second-order $(R_0,R_1,R_2,D_1,D_2,\epsilon)$-achievable triplets is called the optimal second-order $(R_0,R_1,R_2,D_1,D_2,\epsilon)$ coding region and denoted as $\calL(R_0,R_1,R_2,D_1,D_2,\epsilon)$.
\end{definition}
The central goal for this section is to characterize $\calL(R_0,R_1,R_2,D_1,D_2,\epsilon)$. Note that in Definition~\ref{deffirst}, the expected distortion measure is considered whereas in Definition~\ref{defsecond}, the excess-distortion probability is considered. For the purposes of second-order asymptotics, error exponents and moderate deviations, the formulation in Definition~\ref{defsecond} is preferred since there is a probability to quantify.

\subsection{Tilted Information Density}

We now introduce the tilted information density which takes on a similar role as it did in the lossless case \cite{watanabe2015second}. Given distortion pair $(D_1,D_2)$ and rate pair $(R_1,R_2)$, let
\begin{align}
\rvR_0(R_1,R_2,D_1,D_2|P_{XY})
&:=\min\{R_0:(R_0,R_1,R_2)\in\calR(D_1,D_2|P_{XY})\}\label{minkey0}\\
&=\min_{P_{XYW}\in\calP(P_{XY})}\{I(X,Y;W):R_1\geq R_{X|W}(P_{XW},D_1),~R_2\geq R_{Y|W}(P_{YW},D_2)\}\label{minkey}\\
&=\min_{\substack{P_{W|XW}P_{\hatX_1|XW}P_{\hatY|YW}:\\\mathbb{E}[d_X(X,\hatX)]\leq D_1,~\mathbb{E}[d_2(Y,\hatY)]\leq D_2\\ I(X;\hatX|W)\leq R_1,~I(Y;\hatY|W)\leq R_2}}\!\!\!\!\!\!\!\!\!\!I(X,Y;W)\label{minkey2},
\end{align}
where \eqref{minkey} follows from Theorem \ref{gwregion} and \eqref{minkey2} follows from \eqref{gwregion2}.

Since $\calR(D_1,D_2|P_{XY})$ is a convex set \cite{gray1974source}, the minimization in \eqref{minkey} is attained when $R_1=R_{X|W}(P_{XW},D_1)$ and $R_2=R_{Y|W}(P_{YW},D_2)$ for some optimal test channel $P_{W|XY}$ unless $\rvR_0(R_1,R_2,D_1,D_2|P_{XY})=0$ or $\rvR_0(R_1,R_2,D_1,D_2|P_{XY})=\infty$. However, in the following, we assume $\rvR_0(R_1,R_2,D_1,D_2|P_{XY})>0$ is finite. Given distortion levels $(D_1,D_2)$, we are interested in the rate triplets $(R_0^*,R_1^*,R_2^*)$ such that $R_0^*=\rvR_0(R_0^*,R_1^*,D_1,D_2)$ throughout the section. Further, as in \cite{watanabe2015second}, we assume $\calR(D_1,D_2|P_{XY})$ is smooth at a rate triplet $(R_0^*,R_1^*,R_2^*)$ of our interest, i.e.,
\begin{align}
\lambda_i^*:=-\frac{\partial \rvR_0(R_1,R_2,D_1,D_2|P_{XY})}{\partial R_i}\bigg|_{(R_1,R_2)=(R_1^*,R_2^*)},\label{definelambdai}
\\
\gamma_i^*:=-\frac{\partial \rvR_0(R_1,R_2,D_1',D_2'|P_{XY})}{\partial D_i'}\bigg|_{(D_1',D_2')=(D_1,D_2)}\label{definegammai},
\end{align}
are well-defined for $i\in\{1,2\}$. Note that $\lambda_i^*,~\gamma_i^*\geq 0$ since $\rvR_0(R_1,R_2,D_1,D_2|P_{XY})$ is a non-increasing in $(R_1,R_2,D_1,D_2).$ Throughout the paper, we assume $(\lambda_1^*,\lambda_2^*,\gamma_1^*,\gamma_2^*)$ are strictly positive, i.e., we consider a rate triplet where $R_0^*=\rvR_0(R_1^*,R_2^*,D_1,D_2|P_{XY})$ is positive and finite.

Let $P_{W|XY}^*P_{\hatX|XW}^*P_{\hatY|YW}^*$ be the optimal test channel\footnote{The following tilted information density is still well-defined even if the optimal test channel is not unique due to similar arguments as \cite[Lemma 2]{watanabe2015second}} that achieves the $\rvR_0(R_1^*,R_2^*,D_1,D_2|P_{XY})$ in \eqref{minkey2}. Let $P_{\hatX|W}^*, P_{\hatY|W}^*, P_W^*$ be the induced (conditional) distributions. Define
\begin{align}
&\jmath(x,D_1|w)
:=\log \frac{1}{\sum_{\hatx} P_{\hatX|W}^*(\hatx|w)\exp\Big(\frac{\gamma_1^*}{\lambda_1^*}(D_1-d_X(x,\hatx))\Big)},\label{def:j1x}\\
&\jmath(y,D_2|w)
:=\log \frac{1}{\sum_{\haty} P_{\hatY|W}^*(\haty|w)\exp\Big(\frac{\gamma_2^*}{\lambda_2^*}(D_2-d_Y(y,\haty))\Big)}\label{def:j2y}.
\end{align}
\begin{definition}
For a rate triplet $(R_0^*,R_1^*,R_2^*)$, given distortion threshold pair $(D_1,D_2)$, the tilted information density for lossy Gray-Wyner source coding is defined as

\begin{align}
\jmath(x,y|R_1^*,R_2^*,D_1,D_2):=\log \frac{1}{\sum_{w}P_W^*(w)\exp\Big(\lambda_1^*(R_1^*-\jmath(x,D_1|w))+\lambda_2^*(R_2^*-\jmath(y,D_2|w))\Big)}\label{def:gwtilt}.
\end{align}

\end{definition}

We remark that there are two equivalent characterizations of the Gray-Wyner region, one defined in terms of conditional rate-distortion functions in Theorem \ref{gwregion} and the other defined solely in terms of (conditional) mutual information quantities in \eqref{gwregion2}. For the lossless Gray-Wyner problem~\cite{watanabe2015second}, the two regions are exactly the same. The tilted information densities derived based on these two regions are subtly different. We find that the tilted information density derived from the second region in \eqref{gwregion2} is more amenable to subsequent second-order analyses on the Pangloss plane (Lemma \ref{panglosstilted}). Thus the ``correct'' non-asymptotic fundamental quantity for the lossy Gray-Wyner problem is the tilted information density we identified based on the second Gray-Wyner region in \eqref{def:gwtilt}.

Next, we show that the tilted information density for lossy Gray-Wyner source coding has properties similarly like \cite[Properties 1-3]{kostina2012converse} and \cite[Lemma 1]{watanabe2015second}.
\begin{lemma}
\label{propertytilted}
The tilted information density $\jmath_{XY}(x,y|R_1^*,R_2^*,D_1,D_2,P_{XY})$ has the following properties:
\begin{align}
\rvR_0(R_1^*,R_2^*,D_1,D_2|P_{XY})&=\mathbb{E}_{P_{XY}}\left[\jmath_{XY}(X,Y|R_1^*,R_2^*,D_1,D_2,P_{XY})\right],
\end{align}
and for $(w,\hatx,\haty)$ such that $P_W^*(w)P_{\hatX|W}^*(\hatx|w)P_{\hatY|W}^*(\haty|w)>0$,
\begin{align}
\nn&\jmath_{XY}(x,y|R_1^*,R_2^*,D_1,D_2,P_{XY})\\
\nn&=\log \frac{P_{W|XY}^*(w|xy)}{P_W^*(w)}+\lambda_1^*\log\frac{P_{\hatX|XW}^*(\hatx|x,w)}{P_{\hatX|W}^*(\hatx|w)}-\lambda_1^*R_1^*+\lambda_2^*\log\frac{P_{\hatY|YW}^*(\haty|y,w)}{P_{\hatY|W}^*(\haty|w)}-\lambda_2^*R_2^*\\
&\qquad+\gamma_1^*(d_X(x,\hatx)-D_1)+\gamma_2^*(d_Y(y,\haty)-D_2).
\end{align}  
\end{lemma}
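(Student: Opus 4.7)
The plan is to recognize that $(\lambda_1^*, \lambda_2^*, \gamma_1^*, \gamma_2^*)$ are, by definition through the partial derivatives \eqref{definelambdai}--\eqref{definegammai} of the value function $\rvR_0(R_1,R_2,D_1,D_2|P_{XY})$, precisely the KKT multipliers for the constrained convex program \eqref{minkey2}. I would therefore write the dual Lagrangian
\begin{align*}
L(P_{W|XY},P_{\hatX|XW},P_{\hatY|YW}) &= I(X,Y;W) + \lambda_1^*\bigl(I(X;\hatX|W)-R_1^*\bigr) + \lambda_2^*\bigl(I(Y;\hatY|W)-R_2^*\bigr)\\
&\quad + \gamma_1^*\bigl(\mathbb{E}[d_X(X,\hatX)] - D_1\bigr) + \gamma_2^*\bigl(\mathbb{E}[d_Y(Y,\hatY)] - D_2\bigr),
\end{align*}
so that, by strong duality plus complementary slackness at $(R_1^*, R_2^*, D_1, D_2)$, the optimum equals $\rvR_0(R_1^*,R_2^*,D_1,D_2|P_{XY})$ regardless of whether the optimal test channel is unique (any optimizer works, mirroring the argument in \cite[Lemma~2]{watanabe2015second}).

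Next I would solve the inner minimization over $P_{\hatX|XW}$ for each fixed $w$. Freezing everything else, this reduces to the standard Kostina--Verd\'u conditional rate-distortion tilting problem: the minimizer is the Gibbs form
\begin{align*}
P_{\hatX|XW}^*(\hatx|x,w) = \frac{P_{\hatX|W}^*(\hatx|w)\exp\bigl(-\tfrac{\gamma_1^*}{\lambda_1^*} d_X(x,\hatx)\bigr)}{\sum_{\hatx'}P_{\hatX|W}^*(\hatx'|w)\exp\bigl(-\tfrac{\gamma_1^*}{\lambda_1^*}d_X(x,\hatx')\bigr)},
\end{align*}
and comparing the log-partition function with the definition \eqref{def:j1x} yields, after multiplication by $\lambda_1^*$, the identity
\begin{align*}
\lambda_1^*\jmath(x,D_1|w) = \lambda_1^*\log\frac{P_{\hatX|XW}^*(\hatx|x,w)}{P_{\hatX|W}^*(\hatx|w)} + \gamma_1^*\bigl(d_X(x,\hatx)-D_1\bigr),
\end{align*}
valid for every $\hatx$ in the support. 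The analogous identity holds for the $Y$-side with $(\lambda_2^*, \gamma_2^*, \jmath(y,D_2|w))$. Taking expectations of these identities shows that, at the optimum, each bracket $\lambda_i^* I(\cdot;\cdot|W) + \gamma_i^*(\mathbb{E}[d_i]-D_i)$ in $L$ collapses to $\lambda_i^*\mathbb{E}[\jmath(\cdot,D_i|W)]$, so the Lagrangian reduces to $I(X,Y;W)+\lambda_1^*\mathbb{E}[\jmath(X,D_1|W)]-\lambda_1^*R_1^*+\lambda_2^*\mathbb{E}[\jmath(Y,D_2|W)]-\lambda_2^*R_2^*$.

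Then I would solve the outer minimization over $P_{W|XY}$ pointwise in $(x,y)$. The problem is again a Gibbs/log-sum-exp minimization whose minimizer is
\begin{align*}
P_{W|XY}^*(w|x,y)=\frac{P_W^*(w)\exp\bigl(-\lambda_1^*\jmath(x,D_1|w)-\lambda_2^*\jmath(y,D_2|w)\bigr)}{\sum_{w'}P_W^*(w')\exp\bigl(-\lambda_1^*\jmath(x,D_1|w')-\lambda_2^*\jmath(y,D_2|w')\bigr)},
\end{align*}
and the minimum value of the objective equals exactly $\jmath_{XY}(x,y|R_1^*,R_2^*,D_1,D_2)$ as given in \eqref{def:gwtilt}. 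Averaging over $(X,Y)\sim P_{XY}$ gives the first property $\mathbb{E}[\jmath_{XY}]=\rvR_0(R_1^*,R_2^*,D_1,D_2|P_{XY})$. For the second property, I take $\log$ of the displayed expression for $P_{W|XY}^*$ to get
\begin{align*}
\log\frac{P_{W|XY}^*(w|x,y)}{P_W^*(w)} = -\lambda_1^*\jmath(x,D_1|w) - \lambda_2^*\jmath(y,D_2|w) + \jmath_{XY}(x,y|R_1^*,R_2^*,D_1,D_2) + \lambda_1^*R_1^*+\lambda_2^*R_2^*,
\end{align*}
solve for $\jmath_{XY}$, and substitute the earlier expressions for $\lambda_i^*\jmath(\cdot,D_i|w)$ to recover the claimed decomposition.

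The main obstacle I foresee is not the algebra but justifying that the Lagrange multipliers arising from the KKT conditions truly coincide with the partial derivatives defined in \eqref{definelambdai}--\eqref{definegammai}; this uses the smoothness assumption on $\calR(D_1,D_2|P_{XY})$ at $(R_0^*,R_1^*,R_2^*)$ together with the sensitivity/envelope theorem for convex programs. A secondary subtlety is that the representation must hold for \emph{every} $(w,\hatx,\haty)$ in the support of $P_W^*P_{\hatX|W}^*P_{\hatY|W}^*$, not merely on average; this is precisely the content of the pointwise Gibbs characterization of the minimizers and is handled exactly as in \cite[Property~1]{kostina2012converse} and \cite[Lemma~1]{watanabe2015second}. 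The non-uniqueness of the optimal test channel, flagged in the footnote, causes no difficulty because the identities above hold for any optimizer and because $\jmath_{XY}$ depends only on $(P_W^*,P_{\hatX|W}^*,P_{\hatY|W}^*)$ through an expectation that is invariant across optimizers.
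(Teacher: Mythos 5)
Your proof is correct and is, at bottom, the same argument the paper uses, merely packaged at a slightly higher level of abstraction. The paper's Appendix writes the Lagrangian not with mutual informations directly but with conditional relative entropies against free auxiliary distributions $Q_W, Q_{\hat{X}|W}, Q_{\hat{Y}|W}$ (the function $F$ and equation \eqref{linkrf}); the inner minimizations over the $Q$'s recover the mutual informations, so $\min F = \rvR_0$ is exactly your strong-duality claim. The subsequent Lemma \ref{linkflambda} then applies the log-sum inequality twice (once over $\hat{x}$ and once over $w$), which is precisely your ``inner then outer'' Gibbs-minimization structure, and a sandwich argument pins the auxiliary $Q$'s to the induced marginals $P_W^*, P_{\hat{X}|W}^*, P_{\hat{Y}|W}^*$. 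One presentational caveat worth flagging in your write-up: phrasing the $\hat{X}$-step as ``solve the inner minimization over $P_{\hat{X}|XW}$'' glosses over the fact that $I(X;\hat{X}|W)$ is not pointwise-decoupled because the reference marginal $P_{\hat{X}|W}$ moves with the optimization variable; the Gibbs form you write is a fixed-point/stationarity condition, not a direct pointwise minimizer. The paper's auxiliary-distribution device is what converts this into a genuine decoupled minimization to which the log-sum inequality applies cleanly, and you are in effect borrowing that from \cite[Property 1]{kostina2012converse}. With that acknowledgement made explicit, the argument is airtight, and the algebra you carry out (the identity $\lambda_1^*\jmath(x,D_1|w) = \lambda_1^*\log\tfrac{P_{\hat{X}|XW}^*(\hat{x}|x,w)}{P_{\hat{X}|W}^*(\hat{x}|w)} + \gamma_1^*(d_X(x,\hat{x})-D_1)$, its $Y$-analogue, and the log of the $W$-Gibbs form) exactly reproduces the paper's \eqref{averageobjrate}.
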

The proof of Lemma \ref{propertytilted} is similar to \cite[Lemma 1]{watanabe2015second}, \cite[Lemma 1]{kontoyiannis2000pointwise} and is provided in Appendix \ref{prooflemmatilted}.

In the following lemma, we relate the derivative of the minimum common rate function with the tilted information density where notation $\Gamma$ is defined in Section \ref{sec:notation} (See also \cite{watanabe2015second}). For any $Q_{XY}$, let $Q_{W|XY}^*Q_{\hatX|XW}^*Q_{\hatY|YW}^*$ be the optimal test channel for $\rvR_0(R_1^*,R_2^*,D_1,D_2|\Gamma(Q_{XY}))$ (see \eqref{minkey2}). Let $Q_W^*,Q_{\hatX|W}^*,Q_{\hatY|W}^*$ be the corresponding induced distributions.
\begin{lemma} 
\label{linkrjxy}
Suppose that for all $Q_{XY}$ in some neighborhood of $P_{XY}$, $\mathrm{supp}(Q_W^*)\subset\mathrm{supp}(P_W^*)$, $\mathrm{supp}(Q_{\hatX|W}^*)\subset\mathrm{supp}(P_{\hatX|W}^*)$ and $\mathrm{supp}(Q_{\hatY|W}^*)\subset\mathrm{supp}(P_{\hatX|W}^*)$. Then for $i\in[1:m-1]$,
\begin{align}
\nn&\frac{\partial \rvR_0(R_1^*,R_2^*,D_1,D_2|\Gamma(Q_{XY}))}{\partial \Gamma_i(Q_{XY})}\bigg|_{Q_{XY}=P_{XY}}\\
&=\jmath_{XY}(i|R_1^*,R_2^*,D_1,D_2,\Gamma(P_{XY}))-\jmath_{XY}(m|R_1^*,R_2^*,D_1,D_2,\Gamma(P_{XY})), \label{eqn:derivate_rd}
\end{align}
where $\jmath_{XY}(i|R_1^*,R_2^*,D_1,D_2,\Gamma(P_{XY}))$ is short for $\jmath_{XY}(x_i,y_i|R_1^*,R_2^*,D_1,D_2,\Gamma(P_{XY}))$ where $P_{XY}(x_i,y_i)=\Gamma_i(P_{XY})$, and similarly for $\jmath_{XY}(m|R_1^*,R_2^*,D_1,D_2,\Gamma(P_{XY}))$.

\end{lemma}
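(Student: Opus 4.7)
The proof adapts the envelope-theorem calculation of Kostina and Verd\'u~\cite[Property 3]{kostina2012converse}, extended to the Gray-Wyner setting as in Watanabe~\cite{watanabe2015second}. The support inclusion hypothesis ensures that, for $Q_{XY}$ in a neighborhood of $P_{XY}$, the optimal test channel $(Q^*_{W|XY},Q^*_{\hat X|XW},Q^*_{\hat Y|YW})$ and the multipliers $(\lambda_1^*,\lambda_2^*,\gamma_1^*,\gamma_2^*)$ depend smoothly on $Q_{XY}$ and that strong duality holds between \eqref{minkey2} and its Lagrangian dual. Consequently,
\[
\rvR_0(R_1^*,R_2^*,D_1,D_2|\Gamma(Q_{XY})) = L(Q_{XY},\theta^*;\lambda^*,\gamma^*)
\]
at the saddle point, where $L$ is the Lagrangian formed from $I(X,Y;W)$ plus the four multiplier-weighted constraint slacks. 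The envelope theorem then reduces the question to computing $\partial L/\partial Q_{XY}(x,y)$ at the optimum, treating $(\theta^*,\lambda^*,\gamma^*)$ as frozen.

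Next, I would differentiate $L$ term by term in $Q_{XY}(x,y)$, taking into account that $P_W^*$, $P_{\hat X|W}^*$ and $P_{\hat Y|W}^*$ all depend on $Q_{XY}$. A direct calculation gives
\[
\frac{\partial I(X,Y;W)}{\partial Q_{XY}(x,y)}\bigg|_{\mathrm{opt}} = \sum_w Q^*_{W|XY}(w|x,y)\log\frac{Q^*_{W|XY}(w|x,y)}{P_W^*(w)} - \log e,
\]
where the $-\log e$ arises from the chain rule on $P_W^*$ combined with $\sum_w Q^*_{W|XY}(w|x,y)=1$. By contrast, the analogous correction terms in the derivatives of $I(X;\hat X|W)$ and $I(Y;\hat Y|W)$ are proportional to $\sum_{\hat x}[Q^*_{\hat X|XW}(\hat x|x,w) - P_{\hat X|W}^*(\hat x|w)] = 0$ and therefore vanish, leaving only the naive forms; the distortion gradients yield the usual conditional expectations of $d_X$ and $d_Y$ under the optimal test channel. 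Averaging the identity from Lemma~\ref{propertytilted} over $(W,\hat X,\hat Y)$ given $(X,Y)=(x,y)$ lets me reassemble these contributions into $\jmath_{XY}$, so that
\[
\frac{\partial L}{\partial Q_{XY}(x,y)}\bigg|_{\mathrm{opt}} = \jmath_{XY}(x,y|R_1^*,R_2^*,D_1,D_2,\Gamma(P_{XY})) + \bigl(\lambda_1^*R_1^*+\lambda_2^*R_2^*+\gamma_1^*D_1+\gamma_2^*D_2\bigr) - \log e.
\]

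Finally, I would convert this unconstrained partial derivative into the $m$-coordinate derivative in the statement. Using the probability-simplex constraint $\Gamma_m(Q_{XY}) = 1-\sum_{j<m}\Gamma_j(Q_{XY})$ together with the differentiation convention from \cite{watanabe2015second}, subtracting the derivative at the reference index $m$ eliminates the additive constant $\lambda_1^*R_1^*+\lambda_2^*R_2^*+\gamma_1^*D_1+\gamma_2^*D_2$ while preserving a single $-\log e$, producing the claimed form $\jmath_{XY}(i|\cdot)-\jmath_{XY}(m|\cdot)-\log e$. The hard part is the accounting in the Lagrangian differentiation: confirming that differentiating through $P_W^*$ produces exactly one $-\log e$ and no more, verifying that the corrections through $P_{\hat X|W}^*$ and $P_{\hat Y|W}^*$ cancel identically by channel normalization, and tracking the cancellation of the additive constant when passing to the simplex coordinates. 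The support inclusion hypothesis is essential here, since it permits the interior-point envelope theorem and ensures smooth variation of the optimal primal-dual variables on a neighborhood of $P_{XY}$.
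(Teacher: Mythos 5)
Your proposal takes a genuinely different route from the paper's. You invoke the envelope theorem at the Lagrangian saddle point and compute the \emph{unconstrained} partial derivative $\partial L/\partial Q_{XY}(x,y)$, whereas the paper starts from the identity $\rvR_0=\sum_k\Gamma_k(Q)\,\jmath_{XY}(k|\Gamma(Q))$ (Lemma~\ref{propertytilted}), applies the product rule so that the $\jmath(i)-\jmath(m)$ part comes out of the simplex re-parametrization, and then isolates the $-\log e$ from the \emph{second} product-rule term by Bayes-rewriting $\log(Q^*_{W|XY}/Q^*_W)$ as $\log Q^*_{XY|W}-\log Q_{XY}$ and differentiating the explicit $-\log Q_{XY}$ piece. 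Your intermediate expression $\partial L/\partial Q_{XY}(x,y)=\jmath_{XY}(x,y|\cdot)+(\lambda_1^*R_1^*+\lambda_2^*R_2^*+\gamma_1^*D_1+\gamma_2^*D_2)-\log e$ is correct, and your observation that the corrections through $P^*_{\hat X|W},P^*_{\hat Y|W}$ vanish by channel normalization is sound.

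The gap is in the last step. Both the additive constant $\lambda_1^*R_1^*+\lambda_2^*R_2^*+\gamma_1^*D_1+\gamma_2^*D_2$ and the $-\log e$ are \emph{independent of the index} $(x,y)$; when you pass to simplex coordinates by subtracting the unconstrained derivative at reference index $m$, both of these cancel identically, yielding $\jmath(i)-\jmath(m)$ with \emph{no} residual $-\log e$. Your assertion that the subtraction ``eliminates the additive constant while preserving a single $-\log e$'' is therefore wrong as stated; nothing in your argument explains why $-\log e$ should survive while the other constant does not. (The downstream use in~\eqref{taylorfisrtt} sums $\partial\rvR_0/\partial\Gamma_i$ against $\Gamma_i(\hat T_{x^ny^n})-\Gamma_i(P_{XY})$, which sums to zero over $i$, so any overall additive constant—including the $-\log e$—is immaterial there; but the lemma as stated does assert a specific constant, and your derivation does not recover it.) A second, lesser concern: your invocation of the envelope theorem appeals to strong duality for~\eqref{minkey2}, but that optimization is not jointly convex in $(P_{W|XY},P_{\hat X|XW},P_{\hat Y|YW})$—e.g.\ the constraint $I(X;\hat X|W)\le R_1$ depends on $P_{W|XY}$ nonconvexly through $P_W$—so strong duality and the interior-point saddle-point argument need an explicit justification that the support-inclusion hypothesis alone does not obviously supply. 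The paper's proof avoids this by working directly with the representation $\rvR_0=\mathbb{E}_Q[\jmath_{XY}]$ established in Lemma~\ref{propertytilted}.
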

The proof of Lemma \ref{linkrjxy} is similar to the proof in \cite[Lemma 3]{watanabe2015second} and~\cite[Theorem 2.2]{kostina2013lossy} and provided in Appendix \ref{prooflinkrjxy}. In particular, we need to re-parametrize probability distributions on the simplex as in \cite[Lemma 3]{watanabe2015second}.

\subsection{Main Result}
Given a particular rate triplet $(R_0^*,R_1^*,R_2^*)\in\calR(D_1,D_2|P_{XY})$, we impose the following conditions:
\begin{enumerate}
\item \label{cond1} $R_0^*=\rvR_0(R_1^*,R_2^*,D_1,D_2|P_{XY})$ is positive and finite;
\item For $i=1,2$, $\lambda_i$ in \eqref{definelambdai} and $\gamma_i^*$ in \eqref{definegammai} are well-defined and positive;
\item \label{cond2} $(R_1, R_2, Q_{XY})\mapsto \rvR_0(R_1,R_2,D_1,D_2|Q_{XY})$ is twice differentiable in the neighborhood of $(R_1^*,R_2^*,P_{XY})$ and the derivative  is bounded (i.e., the spectral norm of the Hessian matrix is bounded).
\end{enumerate}
Let the {\em rate-dispersion  function} \cite{kostina2012fixed} be 
\begin{align}
\mathrm{V}(R_1^*,R_2^*,D_1,D_2|P_{XY}):=\mathrm{Var}\left[\jmath_{XY}(X,Y|R_1^*,R_2^*,D_1,D_2,P_{XY})\right].
\end{align}

\begin{theorem}
\label{mainresult}
Under conditions (\ref{cond1}) to (\ref{cond2}), the optimal second-order $(R_0^*,R_1^*,R_2^*,D_1,D_2,\epsilon)$ coding region is
\begin{align}
\calL(R_0^*,R_1^*,R_2^*,D_1,D_2,\epsilon)
=\left\{(L_0,L_1,L_2):L_0+\lambda_1^*L_1+\lambda_2^*L_2\geq \sqrt{\mathrm{V}(R_1^*,R_2^*,D_1,D_2|P_{XY})}\mathrm{Q}^{-1}(\epsilon)\right\}.
\end{align}
\end{theorem}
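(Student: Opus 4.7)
I would prove achievability and converse separately. Both hinge on conditioning on the joint type $Q = \hat{T}_{X^n Y^n}$ of the source, Taylor-expanding the minimum common-rate function $\rvR_0(R_1, R_2, D_1, D_2 \mid Q)$ about $(R_1^*, R_2^*, P_{XY})$, and then invoking the Berry-Esseen theorem on the resulting i.i.d.\ sum of tilted information densities. The coefficient $L_0 + \lambda_1^* L_1 + \lambda_2^* L_2$ arises naturally: the $\lambda_i^*$ come from the derivatives in the rate arguments (by \eqref{definelambdai}), and the $1$ multiplying $L_0$ together with the tilted information density comes from the derivatives in the distribution argument via Lemma \ref{linkrjxy}.

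\textbf{Achievability.} I would apply the type covering lemma (Lemma \ref{achievable}) to show that for each type $Q$ close to $P_{XY}$, the class $\calT_Q$ can be compressed at common/private rates $(R_0(Q), R_1(Q), R_2(Q))$ with
\begin{align*}
R_0(Q) \le \rvR_0\bigl(R_1(Q), R_2(Q), D_1, D_2 \,\big|\, Q\bigr) + \tfrac{K \log n}{n}
\end{align*}
for some constant $K$. Choosing code rates $R_j^* + L_j/\sqrt{n}$ for $j \in \{0,1,2\}$, the code succeeds whenever $\rvR_0\bigl(R_1^* + L_1/\sqrt n, R_2^* + L_2/\sqrt n, D_1, D_2 \mid Q\bigr) \le R_0^* + L_0/\sqrt n - K(\log n)/n$. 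A joint Taylor expansion in the rate arguments and in the distribution argument, combined with Lemma \ref{linkrjxy} (the $-\log e$ term drops out because type perturbations sum to zero) and Lemma \ref{propertytilted}, reduces this success condition to
\begin{align*}
\frac{1}{\sqrt n}\sum_{i=1}^n \bigl(\jmath_{XY}(X_i,Y_i \mid R_1^*,R_2^*,D_1,D_2,P_{XY}) - R_0^*\bigr) \le L_0 + \lambda_1^* L_1 + \lambda_2^* L_2 - o(1).
\end{align*}
Since $\mathbb{E}[\jmath_{XY}] = R_0^*$ and $\mathrm{Var}[\jmath_{XY}] = \mathrm{V}$, the Berry-Esseen theorem renders the probability of this event at least $1 - \epsilon$ provided $L_0 + \lambda_1^* L_1 + \lambda_2^* L_2 \ge \sqrt{\mathrm{V}}\,\mathrm{Q}^{-1}(\epsilon)$.

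\textbf{Converse.} I would invoke the type-based strong converse (Lemma \ref{typestrongconverse}), obtained via the Gu-Effros perturbation argument, which asserts that for any code of rates $(R_0, R_1, R_2)$ the conditional success probability given $\calT_Q$ vanishes whenever $R_0 < \rvR_0(R_1, R_2, D_1, D_2 \mid Q) - o(1)$, uniformly in $Q$. Substituting $R_j = R_j^* + L_j/\sqrt n$ and running the same Taylor expansion backwards yields
\begin{align*}
\epsilon_n(D_1,D_2) \ge \Pr\!\Bigl\{\tfrac{1}{\sqrt n}\sum_{i=1}^n \bigl(\jmath_{XY}(X_i,Y_i \mid R_1^*,R_2^*,D_1,D_2,P_{XY}) - R_0^*\bigr) > L_0 + \lambda_1^* L_1 + \lambda_2^* L_2 + o(1)\Bigr\} - o(1),
\end{align*}
and a second application of the Berry-Esseen theorem forces $L_0 + \lambda_1^* L_1 + \lambda_2^* L_2 \ge \sqrt{\mathrm{V}}\,\mathrm{Q}^{-1}(\epsilon)$ in the limit.

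\textbf{Main obstacle.} The crux is making the Taylor expansion rigorous uniformly over those types $Q$ with $\|Q - P_{XY}\|_1 = O(\sqrt{(\log n)/n})$, which are the only ones carrying non-negligible probability. This requires two ingredients from earlier in the paper: (i) the uniform continuity of $\rvR_0$ in \emph{both} the distribution and the distortion levels (Lemmas \ref{continuityd2}, \ref{continuityp}, \ref{continuityD}), so that the optimal test channel $P_{W|XY}^* P_{\hatX|XW}^* P_{\hatY|YW}^*$ and the resulting tilted information density vary continuously with $Q$ and the support condition of Lemma \ref{linkrjxy} is preserved in a neighborhood of $P_{XY}$; and (ii) the bounded-Hessian assumption (condition (\ref{cond2})), which controls the second-order Taylor remainder by $O(\|Q - P_{XY}\|^2 + \|L\|^2/n) = O((\log n)/n) = o(1/\sqrt n)$. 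Atypical types carry total probability $o(1/\sqrt n)$ by standard method-of-types bounds and so do not affect the $\sqrt n$-scale asymptotics, closing both the direct and converse arguments.
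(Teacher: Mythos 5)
Your proposal is correct and follows essentially the same route as the paper: type-covering achievability (Lemma \ref{achievable} together with Lemma \ref{upperboundexcessp}), type-based strong converse via Gu-Effros perturbation (Lemma \ref{typestrongconverse} together with Lemma \ref{lowerboundexcessp}), a Taylor expansion of $Q \mapsto \rvR_0(R_{1,n},R_{2,n},D_1,D_2\mid Q)$ controlled by Lemma \ref{linkrjxy} and condition (\ref{cond2}) over types in a $\sqrt{(\log n)/n}$-neighborhood of $P_{XY}$, and a Berry-Esseen estimate on both sides. You also correctly note that the $-\log e$ term from Lemma \ref{linkrjxy} cancels because type perturbations sum to zero and that atypical types contribute negligibly.
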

We observe that the rate-dispersion function $\mathrm{V}(R_1^*,R_2^*,D_1,D_2|P_{XY})$ is a fundamental quantity that governs the speed of convergence of the rates of optimal code to the rate triplet $(R_0^*,R_1^*,R_2^*)$. Theorem \ref{mainresult} is proved in Section \ref{secondorderproof}.

\begin{remark}
\label{remarkmain}
To obtain the corresponding results for $D_1=0$ or $D_2=0$, we need to define the conditional $D_i$-tilted information densities (cf. \eqref{def:j1x} and \eqref{def:j2y}) when $D_i=0$ (cf. \cite[Remark 1]{kostina2012fixed}). Define $\jmath(x,D_1|w):=-\log {P_{X|W}^*(x|w)}$ when $D_1=0$. Similarly, define $\jmath(y,D_2|w):=-\log{P_{Y|W}^*(y|w)}$ when $D_2=0$. Combining the techniques used in this paper and the lossless case in \cite{watanabe2015second}, it is not hard to verify that Theorem \ref{mainresult} is still valid when $D_1=0$ and/or $D_2=0$.
\end{remark}

\subsection{On the Pangloss Plane for the Lossy Gray-Wyner Problem} \label{sec:pang}
In general, it is not easy to calculate $\calL(R_0^*,R_1^*,R_2^*,D_1,D_2,\epsilon)$. Here we consider calculating $\calL(R_0,R_1,R_2,D_1,D_2,\epsilon)$ for a rate triplet $(R_0^*,R_1^*,R_2^*)$ on the Pangloss plane \cite{gray1974source}. It is shown in Theorem 6 in \cite{gray1974source} that $(R_0,R_1,R_2)$ is $(D_1,D_2)$-achievable if
\begin{align}
R_0+R_1+R_2&\geq R_{XY}(P_{XY},D_1,D_2),\label{panglossbd}\\
R_0+R_1&\geq R_{X}(P_X,D_1),\\
R_0+R_2&\geq R_{Y}(P_Y,D_2),
\end{align}
where $R_{XY}(P_{XY},D_1,D_2)$ is joint rate-distortion function and $R_{X}(P_X,D_1),R_{Y}(P_Y,D_2)$ are rate-distortion functions~\cite{el2011network}, i.e.,
\begin{align}
R_{XY}(P_{XY},D_1,D_2):=\min_{P_{\hat{X}\hat{Y}|XY}:\mathbb{E}[d_{X}(X,\hat{X})]\leq D_1,~\mathbb{E}[d_{Y}(Y,\hat{Y})]\leq D_2} I(XY;\hat{X}\hat{Y}).
\end{align}
The condition in \eqref{panglossbd} is called the Pangloss bound since the optimal performance is obtained when the receivers cooperate. The set of $D_1,D_2$-achievable rate triplets $(R_0,R_1,R_2)$ satisfying $R_0+R_1+R_2=R_{XY}(P_{XY},D_1,D_2)$ is called the Pangloss plane, denoted as $\calR_{\mathrm{pg}}(D_1,D_2|P_{XY})$, i.e.,
\begin{align}
\calR_{\mathrm{pg}}(D_1,D_2|P_{XY})
:=\left\{(R_0,R_1,R_2):(R_0,R_1,R_2)\in\calR(D_1,D_2|P_{XY}),~R_0+R_1+R_2=R_{XY}(P_{XY},D_1,D_2)\right\}.
\end{align}
Let $P_{\hat{X}\hat{Y}|XY}^*$ be the optimal conditional distribution achieving $R_{XY}(P_{XY},D_1,D_2)$. Let $P_{\hat{X}\hat{Y}}^*$ be induced by $P_{\hat{X}\hat{Y}|XY}^*$ and $P_{XY}$. Define the joint $(D_1,D_2)$-tilted information density as
\begin{align}
\imath_{XY}(x,y|D_1,D_2,P_{XY})
&:=-\log \mathbb{E}_{P_{\hat{X}\hat{Y}}^*}\left[\exp\left(\nu_1^*(D_1-d_{X}(x,\hat{X}))+\nu_2^*(D_2-d_{Y}(y,\hat{Y}))\right)\right]\label{defjrdtilted},
\end{align}
where
\begin{align}
\nu_1^*:&=-\frac{\partial R_{XY}(P_{XY},D,D_2)}{\partial D}\bigg|_{D=D_1},\label{defjrdnu1}\\
\nu_2^*:&=-\frac{\partial R_{XY}(P_{XY},D_1,D)}{\partial D}\bigg|_{D=D_2}\label{defjrdnu2}.
\end{align}
\begin{lemma}
\label{propertyjointrd}
The properties of $\imath_{XY}(x,y|D_1,D_2,P_{XY})$ include
\begin{itemize}
\item The joint rate-distortion function is the expectation of the joint tilted information density, i.e.,
\begin{align}
R_{XY}(P_{XY},D_1,D_2)=\mathbb{E}_{P_{XY}}\left[\imath_{XY}(X,Y|D_1,D_2,P_{XY})\right].
\end{align}
\item For $P_{\hat{X}\hat{Y}}^*$-almost every $(\hat{x},\hat{y})$, 
\begin{align}
\imath_{XY}(x,y|D_1,D_2,P_{XY})=\log \frac{P_{\hat{X}\hat{Y}|XY}^*(\hat{x},\hat{y}|x,y)}{P_{\hat{X}\hat{Y}}^*(\hat{x},\hat{y})}+\nu_1^*(d_{X}(x,\hat{x})-D_1)+\nu_2^*(d_{Y}(y,\hat{y})-D_2).
\end{align}

\end{itemize}
\end{lemma}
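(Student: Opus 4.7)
\textbf{Proof Proposal for Lemma \ref{propertyjointrd}.}

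The plan is to mimic the standard tilted-information-density derivation of Kostina–Verd\'u \cite{kostina2012fixed} but with a product reconstruction alphabet $\hat\calX\times\hat\calY$ and two simultaneous distortion constraints. Concretely, I will first establish property (ii) by writing down the KKT conditions for the convex program defining $R_{XY}(P_{XY},D_1,D_2)$, solve for $P_{\hat X\hat Y|XY}^*$ in closed form in terms of $P_{\hat X\hat Y}^*$, and then recognize $\imath_{XY}$ in the resulting expression. Property (i) will then drop out by taking an expectation.

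For property (ii), form the Lagrangian
\begin{align}
\calL(P_{\hat X\hat Y|XY},\nu_1,\nu_2)
=I(XY;\hat X\hat Y)+\nu_1\bigl(\mathbb{E}[d_X(X,\hat X)]-D_1\bigr)+\nu_2\bigl(\mathbb{E}[d_Y(Y,\hat Y)]-D_2\bigr),
\end{align}
and use the fact that at the optimum $(\nu_1^*,\nu_2^*)$ are the sensitivities appearing in \eqref{defjrdnu1}--\eqref{defjrdnu2}. The stationarity condition with respect to $P_{\hat X\hat Y|XY}(\hat x,\hat y|x,y)$, together with the normalization constraint, yields that on $\mathrm{supp}(P_{\hat X\hat Y}^*)$ the optimizer has the exponential form
\begin{align}
P_{\hat X\hat Y|XY}^*(\hat x,\hat y|x,y)
=\frac{P_{\hat X\hat Y}^*(\hat x,\hat y)\exp\bigl(-\nu_1^*d_X(x,\hat x)-\nu_2^*d_Y(y,\hat y)\bigr)}{\mathbb{E}_{P_{\hat X\hat Y}^*}\!\bigl[\exp(-\nu_1^*d_X(x,\hat X)-\nu_2^*d_Y(y,\hat Y))\bigr]}.
\end{align}
Taking logs on both sides, rearranging, and using the definition \eqref{defjrdtilted} of $\imath_{XY}$ (after pulling the constants $\nu_1^*D_1+\nu_2^*D_2$ out of the expectation in $\imath_{XY}$), gives exactly the claimed pointwise identity
\begin{align}
\imath_{XY}(x,y|D_1,D_2,P_{XY})=\log\frac{P_{\hat X\hat Y|XY}^*(\hat x,\hat y|x,y)}{P_{\hat X\hat Y}^*(\hat x,\hat y)}+\nu_1^*(d_X(x,\hat x)-D_1)+\nu_2^*(d_Y(y,\hat y)-D_2).
\end{align}

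For property (i), take expectation of the identity above with respect to $P_{XY}\times P_{\hat X\hat Y|XY}^*$. The first term produces $I(XY;\hat X\hat Y)=R_{XY}(P_{XY},D_1,D_2)$, while the two distortion terms vanish by complementary slackness: since $\nu_1^*,\nu_2^*>0$ (as we are in the interior of the rate-distortion curve where $R_{XY}$ is strictly decreasing in each $D_i$), the constraints are active, so $\mathbb{E}[d_X(X,\hat X)]=D_1$ and $\mathbb{E}[d_Y(Y,\hat Y)]=D_2$ under $P_{XY}\times P_{\hat X\hat Y|XY}^*$.

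The only subtle point I expect is making the KKT step fully rigorous: one must argue that the Lagrange multipliers corresponding to the two distortion constraints are precisely the partial derivatives $\nu_1^*,\nu_2^*$ in \eqref{defjrdnu1}--\eqref{defjrdnu2}, and that the KKT stationarity condition really does hold on $\mathrm{supp}(P_{\hat X\hat Y}^*)$ with a common constant as a function of $(x,y)$. This is standard for convex programs with a differentiable, convex objective (mutual information in the channel), and can be handled exactly as in \cite[Lemma 1]{kostina2012fixed} and the proof of Lemma \ref{propertytilted} in Appendix \ref{prooflemmatilted}; the joint nature of the two distortion constraints adds no essential difficulty, only an extra Lagrange multiplier. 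Hence the full proof can be deferred to an appendix with essentially the same structure as the proof of Lemma \ref{propertytilted}.
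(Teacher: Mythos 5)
Your proof is correct, and the underlying mathematics is the same as the paper's; the only difference is presentational. You derive the exponential form of $P^*_{\hat X\hat Y|XY}$ directly from KKT stationarity, whereas the paper (Appendix \ref{proofpropertyjointrd}) follows the Csisz\'ar-style variational route: it introduces an auxiliary output distribution $Q_{\hat X\hat Y}$, defines $F(P_{\hat X\hat Y|XY},Q_{\hat X\hat Y},D_1,D_2)=D(P_{\hat X\hat Y|XY}\|Q_{\hat X\hat Y}|P_{XY})+\nu_1^*(\mathbb{E}[d_X]-D_1)+\nu_2^*(\mathbb{E}[d_Y]-D_2)$, uses the log-sum inequality to show $\min_{P_{\hat X\hat Y|XY}}F(\cdot,Q_{\hat X\hat Y},\cdot)=\mathbb{E}_{P_{XY}}[\Lambda(X,Y|Q_{\hat X\hat Y},\nu_1^*,\nu_2^*)]$ for every $Q_{\hat X\hat Y}$, and then sandwiches $R_{XY}$ between $F(P^*,P^*_{\hat X\hat Y})$ and the double infimum to force equality with the tilted density at $Q_{\hat X\hat Y}=P^*_{\hat X\hat Y}$. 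The log-sum route buys a uniform lower bound valid for all $Q_{\hat X\hat Y}$ and avoids case analysis at the boundary of the simplex (where some $P_{\hat X\hat Y|XY}(\hat x,\hat y|x,y)$ vanish and strict stationarity need not hold), which is exactly the ``subtle point'' you flagged. Your KKT framing is correspondingly more direct but slightly less self-contained on that regularity issue; in the end both establish (ii) pointwise on $\mathrm{supp}(P^*_{\hat X\hat Y})$ and then obtain (i) by taking expectations, with the distortion terms killed by complementary slackness ($\nu_i^*>0\Rightarrow\mathbb{E}[d_i]=D_i$), which the paper also uses implicitly when it asserts $F(P^*,P^*_{\hat X\hat Y})=R_{XY}$.
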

The proof of Lemma \ref{propertyjointrd} is provided in Appendix \ref{proofpropertyjointrd}. Lemma \ref{propertyjointrd} can be proved in a similar manner as \cite[Lemma 1]{watanabe2015second} and \cite[Lemma 1.4]{csiszar1974}. By considering a fixed rate triplet on the Pangloss plane, we can relate $\jmath_{XY}(x,y|R_1^*,R_2^*,D_1,D_2,P_{XY})$ to $\imath_{XY}(x,y|D_1,D_2,P_{XY})$.

\begin{lemma}
\label{panglosstilted}
When $(R_0^*,R_1^*,R_2^*)\in\calR_{\mathrm{pg}}(D_1,D_2|P_{XY})$ and $R_0^*>0$,
\begin{align}
\jmath_{XY}(x,y|R_1^*,R_2^*,D_1,D_2,P_{XY})=\imath_{XY}(x,y|D_1,D_2,P_{XY})-R_1^*-R_2^*\label{needproof}.
\end{align}
\end{lemma}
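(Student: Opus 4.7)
The plan is to exploit the Pangloss-plane assumption to reduce $\jmath_{XY}$ to a form involving only the joint rate-distortion tilted information density $\imath_{XY}$. I would proceed in three steps: identify the Lagrange multipliers $\lambda_i^*$ and $\gamma_i^*$ explicitly; establish a conditional-independence property of any optimal Gray--Wyner test channel on the Pangloss plane; and then substitute into \eqref{def:gwtilt} and simplify.

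\emph{Step 1 (multipliers).} Since $(R_0^*,R_1^*,R_2^*)$ lies on the Pangloss plane, the Pangloss bound \eqref{panglossbd} is tight at this point. Combined with Theorem~\ref{gwregion}, this implies that in a neighborhood of $(R_1^*,R_2^*)$ (with $(D_1,D_2)$ fixed), the function $\rvR_0(R_1,R_2,D_1,D_2|P_{XY})$ coincides with the affine map $R_{XY}(P_{XY},D_1,D_2)-R_1-R_2$, because the Pangloss lower bound remains the active constraint under small perturbations of $(R_1^*,R_2^*)$. Differentiating and invoking \eqref{definelambdai} yields $\lambda_1^*=\lambda_2^*=1$. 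Similarly, varying $(D_1',D_2')$ near $(D_1,D_2)$ while keeping $(R_1^*,R_2^*)$ fixed gives $\rvR_0(R_1^*,R_2^*,D_1',D_2'|P_{XY})=R_{XY}(P_{XY},D_1',D_2')-R_1^*-R_2^*$ locally, so \eqref{definegammai} together with \eqref{defjrdnu1}--\eqref{defjrdnu2} yield $\gamma_i^*=\nu_i^*$ for $i\in\{1,2\}$.

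\emph{Step 2 (structure of the optimal test channel).} I claim that any minimizer $(P_{W|XY}^*,P_{\hatX|XW}^*,P_{\hatY|YW}^*)$ of \eqref{minkey2} at this rate triplet must produce outputs $(\hatX,\hatY)$ satisfying (a) the Markov chain $\hatX-W-\hatY$, and (b) the induced distribution $\sum_w P_W^*(w)P_{\hatX|W}^*(\hatx|w)P_{\hatY|W}^*(\haty|w)$ coincides with an optimal output distribution $P_{\hatX\hatY}^*$ of the joint rate-distortion problem. Both follow by tracking the chain
\begin{align}
R_{XY}(P_{XY},D_1,D_2)\leq I(X,Y;\hatX,\hatY)\leq I(X,Y;\hatX,\hatY,W)\leq I(X,Y;W)+I(X;\hatX|W)+I(Y;\hatY|W),
\end{align}
where the last step uses the Markov relations $\hatX-(X,W)-Y$ and $\hatY-(Y,W)-X$ to factor $H(\hatX,\hatY|X,Y,W)=H(\hatX|X,W)+H(\hatY|Y,W)$, together with the subadditivity $H(\hatX,\hatY|W)\leq H(\hatX|W)+H(\hatY|W)$. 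On the Pangloss plane the extremes both equal $R_0^*+R_1^*+R_2^*$, so all three inequalities are tight: equality in the last forces $\hatX-W-\hatY$, giving (a); equality in the first forces $I(X,Y;\hatX,\hatY)=R_{XY}(P_{XY},D_1,D_2)$, giving (b). This is the main technical hurdle; carefully handling possible non-uniqueness of optimizers relies on the Pangloss-plane characterization of Viswanatha, Akyol, and Rose~\cite{viswanatha2014}.

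\emph{Step 3 (algebraic substitution).} Substituting $\lambda_1^*=\lambda_2^*=1$ into \eqref{def:gwtilt} gives
\begin{align}
\jmath_{XY}(x,y|R_1^*,R_2^*,D_1,D_2,P_{XY})+R_1^*+R_2^*=-\log\sum_{w}P_W^*(w)\,e^{-\jmath(x,D_1|w)-\jmath(y,D_2|w)}.
\end{align}
Plugging in \eqref{def:j1x}--\eqref{def:j2y} with $\gamma_i^*/\lambda_i^*=\nu_i^*$ and then collapsing the sum over $w$ via the identity $\sum_w P_W^*(w)P_{\hatX|W}^*(\hatx|w)P_{\hatY|W}^*(\haty|w)=P_{\hatX\hatY}^*(\hatx,\haty)$ from Step~2, the right-hand side becomes
\begin{align}
-\log\sum_{\hatx,\haty}P_{\hatX\hatY}^*(\hatx,\haty)\exp\!\Big(\nu_1^*(D_1-d_X(x,\hatx))+\nu_2^*(D_2-d_Y(y,\haty))\Big),
\end{align}
which is exactly $\imath_{XY}(x,y|D_1,D_2,P_{XY})$ by \eqref{defjrdtilted}. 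Rearranging yields \eqref{needproof}.
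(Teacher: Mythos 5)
Your proposal is correct and reaches the same conclusion, but Steps 2 and 3 take a genuinely different route from the paper's. In Step 1 you identify $\lambda_i^*=1$ and $\gamma_i^*=\nu_i^*$ exactly as the paper does. In Step 2 you give a self-contained proof of the Markov chain $\hatX - W - \hatY$ and the identification of the output marginal: you exploit the factorization $P_{W|XY}P_{\hatX|XW}P_{\hatY|YW}$ in \eqref{minkey2} to write $H(\hatX,\hatY|X,Y,W)=H(\hatX|X,W)+H(\hatY|Y,W)$, apply subadditivity of conditional entropy, and then squeeze both ends of the chain against $R_{XY}(P_{XY},D_1,D_2)=R_0^*+R_1^*+R_2^*$, forcing every inequality to be tight. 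The paper instead simply cites Corollary~1 of Viswanatha, Akyol, and Rose \cite{viswanatha2014} for all five Markov chains. In Step 3 you work directly with the exponential-average definitions \eqref{def:gwtilt} and \eqref{defjrdtilted}, collapsing the sum over $w$ via $\sum_w P_W^*(w)P_{\hatX|W}^*(\hatx|w)P_{\hatY|W}^*(\haty|w)=P_{\hatX\hatY}^*(\hatx,\haty)$, which needs only the single Markov chain $\hatX - W - \hatY$ plus the marginal identification. The paper instead uses the pointwise decompositions of $\jmath_{XY}$ and $\imath_{XY}$ from Lemma~\ref{propertytilted} and Lemma~\ref{propertyjointrd} and cancels the likelihood-ratio factors term by term, which requires the full set of Markov chains ($\hatX - W - \hatY$, $(X,Y)-(\hatX,\hatY)-W$, $\hatX - (X,Y,W) - \hatY$, $\hatX - (X,W) - Y$, $\hatY - (Y,W) - X$). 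Your exponential-form route is arguably cleaner since it needs fewer structural facts; the paper's pointwise route has the advantage of directly displaying the tilted densities as log-likelihood ratios, which is reused elsewhere in the analysis. One caveat you correctly flag: your collapse implicitly needs the Gray-Wyner-induced $(\hatX,\hatY)$ marginal to coincide with the particular $P_{\hatX\hatY}^*$ used in \eqref{defjrdtilted}; this is guaranteed because the optimal reproduction marginal for the joint rate-distortion problem is unique even when the test channel is not, a fact the paper also takes for granted.
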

We defer the proof of Lemma \ref{panglosstilted} to Appendix \ref{proofpanglosstilted}. The proof of Lemma \ref{panglosstilted} invokes Lemma \ref{propertytilted}. Besides, we use an idea from \cite{viswanatha2014} in which it was shown that the following Markov chains hold for the optimal test channels $P_{W|XY}^*$ achieving $\calR(R_1^*,R_2^*,D_1,D_2|P_{XY})$ and  $P_{\hat{X}|XW}^*$ as well as $P_{\hat{Y}|YW}^*$ achieving conditional rate-distortion functions $R_{X|W}(P_{XW}^*,D_1)$ and $R_{Y|W}(P_{YW}^*,D_2)$: 
\begin{align}
\hat{X}&\to W\to \hat{Y} \\* 
(X,Y)&\to (\hat{X},\hat{Y})\to W\\* 
\hat{X}&\to(X,Y,W)\to \hat{Y}\\*  
\hat{X}&\to(XW)\to Y \\* 
\hat{Y}&\to (Y,W) \to X.
\end{align}
 Invoking Lemma \ref{panglosstilted}, for a rate triplet $(R_0^*,R_1^*,R_2^*)$ on the Pangloss plane, we can significantly simplify the calculation of $\calL(R_0^*,R_1^*,R_2^*,D_1,D_2,\epsilon)$.
\begin{proposition}
\label{proppangloss}
When $(R_0^*,R_1^*,R_2^*)\in\calR_{\mathrm{pg}}(D_1,D_2|P_{XY})$ and the conditions in Theorem \ref{mainresult} are satisfied, we have
\begin{align}
\calL(R_0^*,R_1^*,R_2^*,D_1,D_2,\epsilon)
=\left\{(L_0,L_1,L_2):L_0+L_1+L_2\geq \sqrt{\mathrm{V}(R_1^*,R_2^*,D_1,D_2|P_{XY})}\mathrm{Q}^{-1}(\epsilon)\right\},
\end{align}
where  the {\em rate-dispersion function}~\cite{kostina2012fixed} is 
\begin{align}
\mathrm{V}(R_1^*,R_2^*,D_1,D_2|P_{XY})=\mathrm{Var}[\jmath_{XY}(X,Y|R_1^*,R_2^*,D_1,D_2,P_{XY})]=\mathrm{Var}[\imath_{XY}(X,Y|D_1,D_2,P_{XY})].
\end{align}
\end{proposition}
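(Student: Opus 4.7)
My plan is to deduce Proposition~\ref{proppangloss} as an immediate specialization of Theorem~\ref{mainresult}, relying on two observations. First, I will verify that the Lagrange multipliers satisfy $\lambda_1^*=\lambda_2^*=1$ whenever $(R_0^*,R_1^*,R_2^*)$ lies on the Pangloss plane. Second, I will invoke Lemma~\ref{panglosstilted} to identify the rate-dispersion function $\mathrm{V}(R_1^*,R_2^*,D_1,D_2|P_{XY})$ with $\mathrm{Var}[\imath_{XY}(X,Y|D_1,D_2,P_{XY})]$. Substituting both into Theorem~\ref{mainresult} then collapses $L_0+\lambda_1^*L_1+\lambda_2^*L_2$ to $L_0+L_1+L_2$ and delivers the stated half-space characterization.

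To establish $\lambda_1^*=\lambda_2^*=1$, I would introduce the slack function
\[
g(R_1,R_2):=\rvR_0(R_1,R_2,D_1,D_2|P_{XY})+R_1+R_2-R_{XY}(P_{XY},D_1,D_2).
\]
The Pangloss bound~\eqref{panglossbd}, combined with the definition of $\rvR_0$ in~\eqref{minkey0} as the pointwise minimum admissible $R_0$, shows that $g(R_1,R_2)\geq 0$ on the effective domain of $\rvR_0$. Since $(R_0^*,R_1^*,R_2^*)$ lies on the Pangloss plane, one has $R_0^*=\rvR_0(R_1^*,R_2^*,D_1,D_2|P_{XY})$ and $R_0^*+R_1^*+R_2^*=R_{XY}(P_{XY},D_1,D_2)$, so $g(R_1^*,R_2^*)=0$. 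Hence $(R_1^*,R_2^*)$ is a global minimizer of $g$. Condition~(iii) of Theorem~\ref{mainresult} ensures that $g$ is differentiable at $(R_1^*,R_2^*)$, so the first-order optimality condition $\nabla g(R_1^*,R_2^*)=0$ gives
\[
\frac{\partial\rvR_0(R_1,R_2,D_1,D_2|P_{XY})}{\partial R_i}\bigg|_{(R_1,R_2)=(R_1^*,R_2^*)}=-1,\qquad i\in\{1,2\},
\]
and therefore $\lambda_i^*=1$ via~\eqref{definelambdai}.

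For the dispersion identity, Lemma~\ref{panglosstilted} states that $\jmath_{XY}(x,y|R_1^*,R_2^*,D_1,D_2,P_{XY})=\imath_{XY}(x,y|D_1,D_2,P_{XY})-R_1^*-R_2^*$; that is, the Gray--Wyner tilted information density differs from the joint rate-distortion tilted information density by the deterministic constant $-R_1^*-R_2^*$. Because variance is translation-invariant, $\mathrm{Var}[\jmath_{XY}]=\mathrm{Var}[\imath_{XY}]$, which is precisely the second equality in the dispersion display. Plugging $\lambda_1^*=\lambda_2^*=1$ and this dispersion identity into the conclusion of Theorem~\ref{mainresult} completes the proof.

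The substantive analytic work is already encapsulated in Lemma~\ref{panglosstilted}, which in turn draws on the Markov chain structure of~\cite{viswanatha2014} together with the explicit form of $\jmath_{XY}$ from Lemma~\ref{propertytilted}; I expect that verifying the neighborhood validity of $g\geq 0$ and invoking the differentiability hypothesis cleanly would be the only non-mechanical step, while the substitution into Theorem~\ref{mainresult} is routine. The main conceptual obstacle is therefore not in this deduction but in the preceding lemma, which I would treat as given here.
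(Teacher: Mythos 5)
Your proposal is correct and follows essentially the same route the paper takes: the proposition is obtained by specializing Theorem~\ref{mainresult} with $\lambda_1^*=\lambda_2^*=1$ and then replacing $\mathrm{Var}[\jmath_{XY}]$ by $\mathrm{Var}[\imath_{XY}]$ via the constant-shift identity of Lemma~\ref{panglosstilted}. The one place you add value is the slack-function argument for $\lambda_i^*=1$: the paper simply asserts this equality inside the proof of Lemma~\ref{panglosstilted} after writing $R_0^*=R_{XY}-R_1^*-R_2^*$, whereas your observation that $g(R_1,R_2)=\rvR_0(R_1,R_2,D_1,D_2|P_{XY})+R_1+R_2-R_{XY}(P_{XY},D_1,D_2)\geq 0$ (the Pangloss sum-rate bound is also necessary) vanishes at $(R_1^*,R_2^*)$, so its gradient vanishes there by first-order optimality, is a cleaner and self-contained justification that does not implicitly presume the boundary stays on the Pangloss plane in a neighborhood.
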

\begin{remark}
\label{remarkprop}
To obtain the corresponding results for $D_1=0$ or $D_2=0$, we need to define the joint $(D_1,D_2)$-tilted information density correspondingly. Define
\begin{align}
\imath_{XY}(x,y|D_1,D_2,P_{XY}):=
\left\{
\begin{array}{lr}
-\log \mathbb{E}_{P_{\hat{Y}|X=x}^*}\big[\exp \big(\nu_2^*(D_2-d_{Y}(y,\hat{Y}))\big)\big]-\log P_{X}(x) & D_1=0,D_2>0,\\
-\log \mathbb{E}_{P_{\hat{X}|Y=y}^*}\big[\exp \big(\nu_1^*(D_1-d_{X}(x,\hat{X}))\big)\big]-\log P_{Y}(y) & D_1>0,D_2=0,\\
-\log P_{XY}(x,y) & D_1=0,D_2=0.\\
\end{array}
\right.
\end{align}
Combining the techniques used in this paper and the lossless case in \cite{watanabe2015second}, it is not hard to verify that Proposition \ref{proppangloss} is still valid when $D_1=0$ and/or $D_2=0$. We provide  a justification for $D_1=0$ and $D_2>0$ in Appendix \ref{justifyremark}.
\end{remark}

\subsection{A Numerical Example for Boundary Points on the Pangloss Plane}
We consider a doubly symmetric binary source (DSBS), where $\calX=\calY=\{0,1\}$, $P_{XY}(0,0)=P_{XY}(1,1)=\frac{1-p}{2}$ and $P_{XY}(0,1)=P_{XY}(1,0)=\frac{p}{2}$ for $p\in[0,\frac{1}{2}]$. We consider $\hat{\calX}=\hat{\calY}=\{0,1\}$ and Hamming distortion for both sources, i.e., $d_{X}(x,\hat{x})=1\{x=\hat{x}\}$ and $d_{Y}(y,\hat{y})=1\{y=\hat{y}\}$. Under this setting, we consider $R_1=R_2=R$ and $D_1=D_2=D$. 
Denote $h(\delta)=-\delta\log(\delta)-(1-\delta)\log(1-\delta)$ as the binary entropy function and define $f(x):=-x\log x$.  Define $p_1:=\frac{1}{2}-\frac{1}{2}\sqrt{1-2p}$. From Exercise 2.7.2 in \cite{berger1971rate}, we obtain
\begin{align}
R_{XY}(P_{XY},D,D)
&=\left\{
\begin{array}{lr}
1+h(p)-2h(D)&0\leq D\leq p_1,\\
f(1-p)-\frac{1}{2}\left(f(2D-p)+f(2(1-D)-p)\right) &p_1\leq D\leq \frac{1}{2}.
\end{array}\right.
\end{align}
It was shown in Example 2.5(A) in \cite{gray1974source} that for $0 \leq D\leq \Delta\leq p_1$, if we choose $R_0=R_{XY}(P_{XY},\Delta,\Delta)$, $R_1=R_2=h(\Delta)-h(D)$, then $(R_0,R_1,R_2)\in\calR_{\mathrm{pg}}(D,D|P_{XY})$. When $D\leq p_1$, the joint $(D,D)$-tilted information density is 
\begin{align}
\imath_{XY}(0,0|D,D,P_{XY})
=\imath_{XY}(1,1|D,D,P_{XY})&=\log\frac{1}{(2p-1)D-(2p-1)D^2+\frac{1}{2}(1-p)}-2h(D),\\
\imath_{XY}(0,1|D,D,P_{XY})=\imath_{XY}(1,0|D,D,P_{XY})&=\log\frac{1}{(2p-1)D^2-(2p-1)D+\frac{1}{2}p}-2h(D).
\end{align}
Hence,
\begin{align}
\mathrm{Var}[\imath_{XY}(X,Y|D,D,P_{XY})]
&=\sum_{x,y}P_{XY}(x,y)
\left(\imath_{XY}(x,y|D,D,P_{XY})-R_{XY}(P_{XY},D,D)\right)^2\\*
&\nn=(1-p)\left(\log\frac{1}{(2p-1)D-(2p-1)D^2+\frac{1}{2}(1-p)}-1-h(p)\right)^2\\*
&\qquad +
p\left(\log\frac{1}{(2p-1)D^2-(2p-1)D+\frac{1}{2}p}-1-h(p)\right)^2
\label{egvaluev}.
\end{align}
For a rate triplet $(R_0^*,R_1^*,R_2^*)\in\calR_{\mathrm{pg}}(D,D)$ satisfying the conditions in Theorem \ref{mainresult}, define 

\begin{align}
R_{\mathrm{sum}}(n)
&:=R_0^*+R_1^*+R_2^*+\min_{(L_0,L_1,L_2)\in\calL(R_0^*,R_1^*,R_2^*,D,D,\epsilon)}\frac{L_0+L_1+L_2}{\sqrt{n}}\\*
&=R_0^*+R_1^*+R_2^*+\sqrt{\frac{\mathrm{Var}[\imath_{XY}(X,Y|D,D,P_{XY})]}{n}}\mathrm{Q}^{-1}(\epsilon)\label{useprop},
\end{align}
where \eqref{useprop} follows from Proposition \ref{proppangloss} and \eqref{egvaluev}.

For $p=0.48$ and $D=0.15$, we plot $R_{\mathrm{sum}}(n)$ in Figure \ref{sumratesecondorder} for $\epsilon=0.01$ and $\epsilon=0.99$ where the blue line corresponds to the first-order sum rate $R_0^*+R_1^*+R_2^*$. This figure demonstrates the convergence of an approximation of the finite blocklength fundamental limit to the first-order fundamental limit.
\begin{figure}[t]
\centering
\includegraphics[width=12cm]{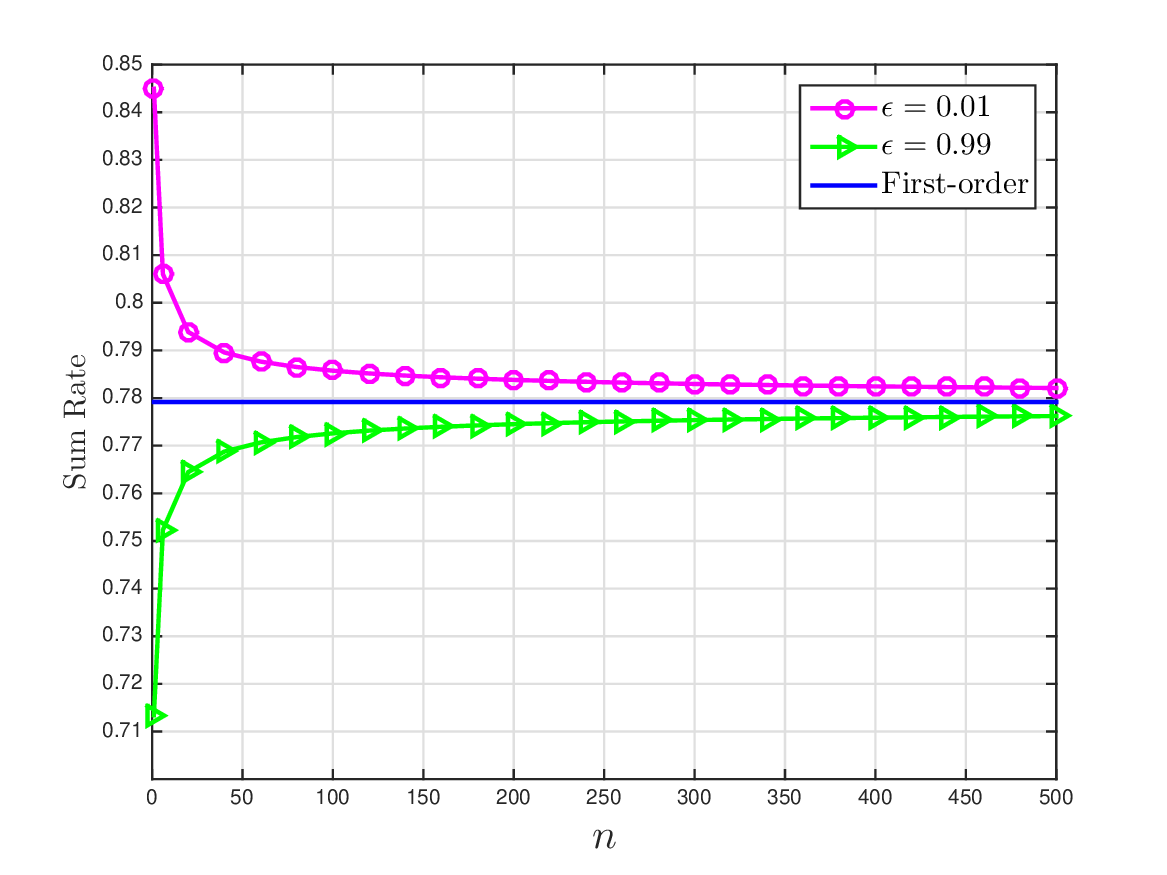}
\caption{Sum rate with $p=0.48$ and $D=0.15$.}
\label{sumratesecondorder}
\end{figure}

\section{Proof of Second-Order Asymptotics (Theorem \ref{mainresult}) } 
\label{secondorderproof}
\subsection{Achievability Proof}

In this part, we first prove that for any given joint type $Q_{XY} \in\calP_n(\calX\times\calY)$, there exists an $(n,M_0,M_1,M_2)$-code such that the excess-distortion probability is mainly due to the incorrect decoding of side information $W$. To do so, we present  a novel type covering lemma for discrete lossy Gray-Wyner problem. Using this result, we then prove an upper bound of the excess-distortion probability for the $(n,M_0,M_1,M_2)$-code. Finally, we establish the achievable second-order coding region by estimating this probability.

Define four constants
\begin{align}
c_0&=\left(3|\calX|\cdot|\calY|\cdot|\calW|+4\right),\\
c_0'&=c_0+|\calX|\cdot|\calY|\label{defc0p},\\ 
c_1&=\left(\frac{11\overline{d}_X}{\underline{d}_X}|\calX|\cdot|\calY|\cdot|\calW|+3|\calX|\cdot|\calW|\cdot|\hat{\calX}|+5\right)\label{defc1},\\
c_2&=\left(\frac{11\overline{d}_Y}{\underline{d}_Y}|\calX|\cdot|\calY|\cdot|\calW|+3|\calY|\cdot|\calW|\cdot|\hat{\calY}|+5\right)\label{defc2}.
\end{align}
We begin by presenting a type covering lemma that is suited to the needs of second-order analysis for the lossy Gray-Wyner problem. 
\begin{lemma}
\label{achievable}
Let $n$ satisfy $(n+1)^4>n\log|\calX|\cdot|\calY|$, $\log n\geq \frac{|\calX|\cdot|\calW|\cdot|\hat{\calX}|\log|\calX|\overline{d}_X}{D_1}$, $\log n\geq \frac{|\calY|\cdot|\calW|\cdot|\hat{\calY}|\log|\calY|\overline{d}_Y}{D_2}$, and $\log n\geq \log\frac{|\hat{\calX}|}{|\calY|}$. Given a joint type $Q_{XY}\in\calP_n(\calX\times\calY)$, for any rate pair $(R_1,R_2)\in\bbR_{++}^2$ such that $\rvR_0(R_1,R_2,D_1,D_2|Q_{XY})$ is achievable by some test channel, there exists a conditional type $Q_{W|XY}\in\calV_{n}(\calW,Q_{XY})$ such that the following holds:
\begin{itemize}
\item There exists a set $\calC_n\subset\calT_{Q_{W}}$ ($Q_W$ is induced by $Q_{XY}$ and $Q_{W|XY}$) such that
\begin{itemize}
\item For any $(x^n,y^n)\in\calT_{Q_{XY}}$, there exists a $w^n\in\calC_n$ whose joint type with $(x^n,y^n)$ is $Q_{XYW}$, i.e., $(x^n,y^n,w^n)\in\calT_{Q_{XYW}}$.
\item  The size of $\calC_n$ is upper bounded by
\begin{align}
\frac{1}{n}\log |\calC_n|\leq \rvR_0(R_1,R_2,D_1,D_2|Q_{XY})+c_0\frac{\log(n+1)}{n}.
\end{align}
\end{itemize}
\item 
 For each $w^n\in\calT_{Q_{W|XY}}(x^n,y^n)$, there exists sets $\calB_{\hat{X}}(w^n)\in\hat{\calX}^n$ and $\calB_{\hat{Y}}(w^n)\in\hat{\calY}^n$ satisfying
\begin{itemize}
\item For each $(x^n,y^n)\in\calT_{Q_{XY|W}}(w^n)$, there exists $\hat{x}^n\in\calB_{\hat{X}}(w^n)$ and $\hat{y}^n\in\calB_{\hat{Y}}(w^n)$ such that
$d_X(x^n,\hat{x}^n)\leq D_1$ and $d_{Y}(y^n,\hat{y}^n)\leq D_2$,
\item The sizes of $\calB_{\hat{X}}(w^n)$ and $\calB_{\hat{Y}}(w^n) $ are upper bounded as
\begin{align}
\frac{1}{n}\log |\calB_{\hat{X}}(w^n)|&\leq R_1+c_1\frac{\log n}{n},\\*
\frac{1}{n}\log |\calB_{\hat{Y}}(w^n)|&\leq R_2+c_2\frac{\log n}{n}.
\end{align}
\end{itemize}

\end{itemize} 
\end{lemma}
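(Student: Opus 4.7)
The plan is to construct both the outer cover $\calC_n$ and the inner codebooks $\calB_{\hat{X}}(w^n),\calB_{\hat{Y}}(w^n)$ by a two-stage random coding argument that mirrors the mutual-information characterization of $\rvR_0$ in \eqref{minkey2}. I first pick optimal test channels $P^*_{W|XY},P^*_{\hatX|XW},P^*_{\hatY|YW}$ attaining $\rvR_0(R_1,R_2,D_1,D_2|Q_{XY})$ with $I(X;\hatX|W)\le R_1$, $I(Y;\hatY|W)\le R_2$, $\mathbb{E}[d_X(X,\hatX)]\le D_1$ and $\mathbb{E}[d_Y(Y,\hatY)]\le D_2$, and then round $P^*_{W|XY}$ to a conditional type $Q_{W|XY}\in\calV_n(\calW;Q_{XY})$ within $\ell_\infty$ distance $O(1/n)$. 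By the continuity of entropy, conditional mutual information, and expected distortion on the probability simplex, every relevant information-theoretic quantity is perturbed by at most $O((\log n)/n)$.

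For the outer cover, I draw $N_0$ codewords $W^n(i)$ i.i.d.\ uniformly from $\calT_{Q_W}$. For fixed $(x^n,y^n)\in\calT_{Q_{XY}}$, the probability that a single $W^n(i)$ has joint type $Q_{XYW}$ with $(x^n,y^n)$ equals $|\calT_{Q_{W|XY}}(x^n,y^n)|/|\calT_{Q_W}|$, which by standard type counting is at least $(n+1)^{-|\calX|\cdot|\calY|\cdot|\calW|}\exp(-nI(W;X,Y))$, where $I(W;X,Y)$ is evaluated under $Q_{XYW}$. Setting $\log N_0=\lceil n\,I(W;X,Y)+c_0\log(n+1)\rceil$ and union-bounding over the $|\calT_{Q_{XY}}|\le\exp(nH(X,Y))$ sequences makes the expected number of uncovered pairs strictly less than one, so a deterministic $\calC_n\subset\calT_{Q_W}$ with the required joint-type covering property exists. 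Because the induced $I(W;X,Y)$ differs from $\rvR_0(R_1,R_2,D_1,D_2|Q_{XY})$ by only $O((\log n)/n)$, the rate bound follows after absorbing this slack into $c_0$.

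For each $w^n\in\calT_{Q_W}$, I then cover $\calT_{Q_{X|W}}(w^n)$ in $D_1$-distortion by drawing $\hatX^n(j)$ i.i.d.\ uniformly from $\calT_{Q_{\hatX|W}}(w^n)$, where $Q_{\hatX|W}$ is the marginal induced by $Q_{X|W}$ and a rounded version of $P^*_{\hatX|XW}$ whose target distortion is shrunk slightly below $D_1$. By the same type-counting as in the previous step, the probability that a single $\hatX^n(j)$ satisfies $d_X(x^n,\hatX^n(j))\le D_1$ for a fixed $x^n\in\calT_{Q_{X|W}}(w^n)$ is bounded below by a polynomial in $n$ times $\exp(-n\,I(X;\hatX|W))$, where $I(X;\hatX|W)$ is evaluated under the rounded test channel and is therefore at most $R_1+O((\log n)/n)$. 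Choosing $\log N_1=\lceil nR_1+c_1\log n\rceil$ and union-bounding over $|\calT_{Q_{X|W}}(w^n)|\le\exp(nH(X|W))$ then produces the required $\calB_{\hatX}(w^n)$ after derandomization; the construction of $\calB_{\hatY}(w^n)$ is entirely symmetric.

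The main technical obstacle is aligning the rounding steps so that the distortion constraints are met at exactly $D_1$ and $D_2$, rather than at $D_1+o(1)$ and $D_2+o(1)$. Concretely, before rounding $P^*_{\hatX|XW}$ to a conditional type I must shrink its target distortion by an amount proportional to $1/\log n$, which is precisely the role played by the hypothesis $\log n\ge|\calX|\cdot|\calW|\cdot|\hat{\calX}|\log|\calX|\overline{d}_X/D_1$ and its $Y$-analogue in the statement; the factor $\overline{d}_X/\underline{d}_X$ inside $c_1$ (and analogously in $c_2$) then arises from converting this distortion shift into a bound on the maximum fraction of coordinates that can be violated without exceeding $D_1$. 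Verifying that the corresponding rate penalty remains $O((\log n)/n)$ requires the uniform continuity of the conditional rate-distortion function in both the source joint distribution and the distortion level, which will be established separately. Once this alignment is in place, the two random-coding arguments combine cleanly to produce the covers with the constants $c_0$, $c_1$, $c_2$ as defined.
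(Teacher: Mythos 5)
Your overall two-stage covering strategy matches the paper's: round the optimal test channel $P^*_{W|XY}$ to a conditional type $Q_{W|XY}$, use random coding over $\calT_{Q_W}$ and a union bound to build $\calC_n$ (the paper cites Watanabe's type covering lemma directly for this step rather than re-deriving it, but the underlying argument is identical), and then for each $w^n$ run a conditional-rate-distortion covering over $\calT_{Q_{\hat X|W}}(w^n)$, invoking uniform continuity of $R_{X|W}$ in both the source and the distortion level to close the gap back to $R_1$. That part of the proposal is sound and is essentially the paper's proof.

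There is, however, a genuine quantitative error in your handling of the distortion shrinkage. You propose shrinking the target distortion from $D_1$ to $D_1-\Theta(1/\log n)$ before rounding. That is the wrong order: by the convexity bound
\[
R_{X|W}(Q_{XW},D_1-\Delta)-R_{X|W}(Q_{XW},D_1)\le \frac{R_{X|W}(Q_{XW},0)}{D_1}\,\Delta \le \frac{\log|\calX|}{D_1}\,\Delta,
\]
a shrinkage of $\Delta=\Theta(1/\log n)$ incurs a rate penalty of $\Theta(1/\log n)$, which eventually dominates the claimed $c_1\frac{\log n}{n}$ slack and would invalidate the lemma for large $n$. The shrinkage is not a free parameter you get to pick of that size; it is forced by the rounding precision of the test channel. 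Because $|Q_{\hat X|XW}(\hat x|x,w)-Q^*_{\hat X|XW}(\hat x|x,w)|\le \frac{1}{nQ_{XW}(x,w)}$, the expected distortion under the rounded channel exceeds that under $Q^*_{\hat X|XW}$ by at most $\frac{|\calX||\calW||\hat\calX|}{n}\overline d_X$, so the correct choice is $D_1^*=D_1-\frac{|\calX||\calW||\hat\calX|}{n}\overline d_X$, i.e.\ a shrinkage of order $1/n$. The hypothesis $\log n\ge |\calX||\calW||\hat\calX|\overline d_X\log|\calX|/D_1$ is then used only to ensure that this $\Theta(1/n)$ distortion shift translates into a rate penalty of at most $\frac{\log n}{n}$; it does not license a $1/\log n$ shrinkage. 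Relatedly, the factor $\overline d_X/\underline d_X$ in $c_1$ does not come from bounding a fraction of violated coordinates; it arises from the uniform-continuity estimate $|R_{X|W}(P_{XW},D)-R_{X|W}(Q_{XW},D)|\lesssim \frac{\overline d_X}{\underline d_X}\|P_{XW}-Q_{XW}\|_1\log(\cdot)$, whose proof converts a perturbation of the source into a perturbation of the attainable distortion via $\underline d_X$ and then applies continuity in $D$. With the shrinkage corrected to $\Theta(1/n)$ and the provenance of $\overline d_X/\underline d_X$ attributed to the source-continuity lemma, the rest of your argument goes through as you outline.
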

The proof of Lemma \ref{achievable} is given in Appendix \ref{proofachievable}. Lemma \ref{achievable} is proved by combining a few ideas from the literature: a type covering lemma for the conditional rate-distortion problem (modified from Lemma 4.1 in \cite{csiszar2011information} for the standard rate-distortion problem and Lemma 8 in \cite{no2016} for the successive refinement problem),  a type covering lemma for the common side information for the Gray-Wyner problem (Lemma 4 in \cite{watanabe2015second}) and finally, a uniform continuity lemma for the conditional rate-distortion function (modified from \cite{no2016,palaiyanur2008uniform}).

The proof of Lemma \ref{achievable} adopts similar idea as the proof of the first-order coding region~\cite{gray1974source}. The main idea is that we first send the common information via the common link carry $S_0$ and then we consider two conditional rate-distortion problems on the two private links carrying $S_1,S_2$ using the common information as the side information.

Invoking Lemma \ref{achievable}, we show that there exists an $(n,M_0,M_1,M_2)$-code whose excess-distortion probability can be upper bounded as follows. Recall the definitions of $c_0'$ in \eqref{defc0p}, $c_1$ in \eqref{defc1} and $c_2$ in \eqref{defc2}. Define three rates
\begin{align}
R_{0,n}&=\frac{1}{n}\log M_0-c_0'\frac{\log (n+1)}{n},\\*
R_{1,n}&=\frac{1}{n}\log M_1-c_1\frac{\log n}{n},\\*
R_{2,n}&=\frac{1}{n}\log M_2-c_2\frac{\log n}{n}.
\end{align}
\begin{lemma}
\label{upperboundexcessp}
There exists an $(n,M_0.M_1,M_2)$-code such that
\begin{align}
\epsilon_n(D_1,D_2)\leq \Pr\left(R_{0,n}<\rvR_0(R_{1,n},R_{2,n},D_1,D_2|\hat{T}_{X^nY^n})\right). \label{eqn:ed_bd}
\end{align}
\end{lemma}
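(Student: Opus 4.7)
The plan is to build an explicit $(n,M_0,M_1,M_2)$-code by applying Lemma~\ref{achievable} separately to each joint type, and then observe that an excess-distortion event can occur only when the realized empirical type violates the rate condition on the right-hand side of \eqref{eqn:ed_bd}. I would split the common description into a two-layer scheme: the first $\lceil |\calX|\cdot|\calY|\log(n+1)\rceil$ bits of $S_0$ identify the joint type $\hat{T}_{x^ny^n}$ (there are at most $(n+1)^{|\calX|\cdot|\calY|}$ of them), while the remaining $nR_{0,n}+c_0\log(n+1)$ bits carry the index of a $w^n$-codeword drawn from a type-dependent book. The constant $c_0'=c_0+|\calX|\cdot|\calY|$ is engineered precisely so that these two pieces together fit inside $\log M_0=nR_{0,n}+c_0'\log(n+1)$.

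Concretely, for every type $Q_{XY}\in\calP_n(\calX\times\calY)$ satisfying $R_{0,n}\geq \rvR_0(R_{1,n},R_{2,n},D_1,D_2|Q_{XY})$, invoke Lemma~\ref{achievable} at rate pair $(R_{1,n},R_{2,n})$ to obtain a covering set $\calC_n(Q_{XY})\subset \calT_{Q_W}$ of cardinality at most $2^{nR_{0,n}+c_0\log(n+1)}$, together with conditional reconstruction codebooks $\calB_{\hatX}(w^n)$ and $\calB_{\hatY}(w^n)$ of sizes at most $M_1=2^{nR_{1,n}+c_1\log n}$ and $M_2=2^{nR_{2,n}+c_2\log n}$. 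For any type violating the rate condition, fix an arbitrary assignment that will trigger an error. The encoder $f_0$ transmits the prefix encoding the joint type, followed by the index of a $w^n\in \calC_n(\hat{T}_{x^ny^n})$ whose joint type with $(x^n,y^n)$ is the $Q_{XYW}$ promised by Lemma~\ref{achievable}. Since encoders $f_1$ and $f_2$ also have full access to $(x^n,y^n)$, each can independently reproduce the joint type, reconstruct $\calC_n(\hat{T}_{x^ny^n})$, and (via a shared tie-breaking rule) agree on the same $w^n$ as $f_0$; they then output indices of any $\hatx^n\in\calB_{\hatX}(w^n)$ and $\haty^n\in\calB_{\hatY}(w^n)$ meeting $d_X(x^n,\hatx^n)\leq D_1$ and $d_Y(y^n,\haty^n)\leq D_2$, whose existence is guaranteed by the second bullet of Lemma~\ref{achievable}. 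The decoders simply look up the codewords indexed by $(S_0,S_1)$ and $(S_0,S_2)$.

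By construction, whenever the realized type $\hat{T}_{X^nY^n}$ satisfies $R_{0,n}\geq\rvR_0(R_{1,n},R_{2,n},D_1,D_2|\hat{T}_{X^nY^n})$, both per-sequence distortions are deterministically at most $D_1$ and $D_2$. Hence the excess-distortion event is contained in $\{R_{0,n}<\rvR_0(R_{1,n},R_{2,n},D_1,D_2|\hat{T}_{X^nY^n})\}$, and taking probabilities under $P_{XY}^{\otimes n}$ yields \eqref{eqn:ed_bd}. There is no real probabilistic hurdle at this stage: all the hard combinatorial content is absorbed inside Lemma~\ref{achievable}, and the only remaining task is the deterministic bookkeeping verifying simultaneously $|\calC_n(Q_{XY})|\cdot(n+1)^{|\calX|\cdot|\calY|}\leq M_0$, $|\calB_{\hatX}(w^n)|\leq M_1$, and $|\calB_{\hatY}(w^n)|\leq M_2$, which is exactly what the constants $c_0',c_1,c_2$ were chosen for. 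The real probabilistic work is then deferred to the subsequent analysis in Theorem~\ref{mainresult}, which will evaluate the right-hand side of \eqref{eqn:ed_bd} by a Taylor expansion around $P_{XY}$ combined with the Berry--Esseen theorem.
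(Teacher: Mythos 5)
Your proposal is correct and follows essentially the same route as the paper's own proof: transmit the empirical joint type as a prefix of $S_0$, invoke Lemma~\ref{achievable} to obtain the $w^n$-covering book $\calC_n(\hat T_{x^ny^n})$ and the conditional reconstruction codebooks $\calB_{\hat X}(w^n)$, $\calB_{\hat Y}(w^n)$, and observe that the error event reduces to the realized type violating $R_{0,n}\geq\rvR_0(R_{1,n},R_{2,n},D_1,D_2|\hat T_{X^nY^n})$. Your explicit bookkeeping that $c_0'=c_0+|\calX|\cdot|\calY|$ is chosen to accommodate the type prefix plus the covering-set index, and your remark that encoders $1$ and $2$ can deterministically reproduce the same $w^n$ from $(x^n,y^n)$ via a shared tie-breaking rule, match the intent of the paper's phrase that the codebook and $\phi_0$ are ``known by all encoders and decoders.''
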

The proof of Lemma \ref{upperboundexcessp} is similar to \cite[Lemma 5]{watanabe2015second} and given in Appendix \ref{proofupperboundexcessp}.

Given two probability mass functions $P$ and $Q$ on a common alphabet $\calX$, define  the $\ell_\infty $ distance $\|P-Q\|_{\infty}:=\max_{x\in\calX}|P(x)-Q(x)|$. Then, define the typical set for joint types as
\begin{align}
\calA_{n}(P_{XY}):=\left\{Q_{XY}\in\calP_n(\calX\times\calY):\|\Gamma(Q_{XY})-\Gamma(P_{XY})\|_{\infty}\leq \sqrt{\frac{\log n}{n}}\right\}\label{eqn:typ_set1},
\end{align}
where the notation $\Gamma$ is defined in Section \ref{sec:notation}.

From Lemma 22 in \cite{tan2014state}, we know
\begin{align}
\Pr\left(\hat{T}_{X^nY^n}\notin \calA_{n}(P_{XY})\right)\leq \frac{2|\calX|\cdot|\calY|}{n^2}.
\end{align}
For a rate triplet $(R_0^*,R_1^*,R_2^*)$ satisfying conditions in Theorem \ref{mainresult}, we choose 
\begin{align}
\frac{1}{n}\log M_0&=\rvR_0(R_1^*,R_2^*,D_1,D_2|P_{XY})+\frac{L_0}{\sqrt{n}}+c_0'\frac{\log (n+1)}{n},\\*
\frac{1}{n}\log M_1&=R_1^*+\frac{L_1}{\sqrt{n}}+c_1\frac{\log n}{n},\\*
\frac{1}{n}\log M_2&=R_2^*+\frac{L_2}{\sqrt{n}}+c_2\frac{\log n}{n}.
\end{align}
Hence,
\begin{align}
R_{i,n}=R_{i}^*+\frac{L_i}{\sqrt{n}},~i=0,1,2.
\end{align}
From the conditions in Theorem \ref{mainresult}, we know that the second derivatives of $\rvR_0(R_1,R_2,D_1,D_2|P_{XY})$ with respect to $(R_1,R_2,P_{XY})$ are bounded  around a neighborhood of $(R_1^*,R_2^*,P_{XY})$. Hence, for any $\hat{T}_{x^ny^n}\in\calA_{n}(P_{XY})$, for large $n$, applying Taylor's expansion for $\rvR_0(R_{1,n},R_{2,n},D_1,D_2|\hat{T}_{x^ny^n})$ and invoking Lemma \ref{linkrjxy}, we obtain:
\begin{align}
\nn &\rvR_0(R_{1,n},R_{2,n},D_1,D_2|\hat{T}_{x^ny^n})\\*
\nn&=\rvR_0(R_1^*,R_2^*,D_1,D_2|P_{XY})-\lambda_1^*\frac{L_1}{\sqrt{n}}-\lambda_2^*\frac{L_2}{\sqrt{n}}+O\left((R_{1,n}-R_1^*)^2+(R_{2,n}-R_2^*)^2\right)+O\left(\|\Gamma(\hatT_{x^ny^n})-\Gamma(P_{XY})\|^2\right)\\
&\qquad+\sum_{i=1}^m \left(\Gamma_i(\hatT_{x^ny^n})-\Gamma_i(P_{XY})\right)\Bigg(\jmath(i|R_1^*,R_2^*,D_1,D_2,\Gamma(P_{XY}))-\jmath(m|R_1^*,R_2^*,D_1,D_2,\Gamma(P_{XY}))\Bigg)\\
&= \nn \rvR_0(R_1^*,R_2^*,D_1,D_2|P_{XY})-\lambda_1^*\frac{L_1}{\sqrt{n}}-\lambda_2^*\frac{L_2}{\sqrt{n}}+\sum_{x,y}\left(\hat{T}_{x^ny^n}(x,y)-P_{XY}(x,y)\right)\jmath_{XY}(x,y|R_1^*,R_2^*,D_1,D_2,P_{XY})\\*
&\qquad +O\left(\frac{\log n}{n}\right)\\*
&\leq \sum_{x,y}Q_{XY}(x,y)\jmath_{XY}(x,y|R_1^*,R_2^*,D_1,D_2,P_{XY})-\lambda_1^*\frac{L_1}{\sqrt{n}}-\lambda_2^*\frac{L_2}{\sqrt{n}}+O\left(\frac{\log n}{n}\right)\label{minr0}\\*
&=\frac{1}{n}\sum_{i=1}^n\jmath_{XY}(x_i,y_i|R_1^*,R_2^*,D_1,D_2,P_{XY})-\lambda_1^*\frac{L_1}{\sqrt{n}}-\lambda_2^*\frac{L_2}{\sqrt{n}}+O\left(\frac{\log n}{n}\right)\label{taylorfisrtt},
\end{align}
where \eqref{minr0} follows from Lemma \ref{propertytilted} and the definition of the typical set $\calA_{n}(P_{XY})$ in \eqref{eqn:typ_set1}. Define $\xi_n=\frac{\log n}{n}$.

Invoking Lemma \ref{upperboundexcessp}, we can upper bound the excess-distortion probability 
as follows:
\begin{align}
&\nn \epsilon_n(D_1,D_2)\\
&\leq \Pr\left(R_{0,n}<\rvR_0(R_{1,n},R_{2,n},D_1,D_2|\hat{T}_{X^nY^n})\right)\\
&\leq \Pr\left(\hat{T}_{X^nY^n}\in\calA_{n}(P_{XY}), R_{0,n}<\rvR_0(R_{1,n},R_{2,n},D_1,D_2|\hat{T}_{X^nY^n})\right)+\Pr\left(\hat{T}_{X^nY^n}\notin\calA_{n}(P_{XY})\right)\\
&\leq \Pr\left(R_{0,n}<\frac{1}{n}\sum_{i=1}^n \jmath_{XY}(X_i,Y_i|R_1^*,R_2^*,D_1,D_2,P_{XY})-\lambda_{1}^*
\frac{L_1}{\sqrt{n}}-\lambda_{2}^*\frac{L_2}{\sqrt{n}}+O(\xi_n)\right)+\frac{2|\calX|\cdot|\calY|}{n^2}\\
&=\Pr\left(\frac{L_0}{\sqrt{n}}+\lambda_{1}^*
\frac{L_1}{\sqrt{n}}+\lambda_{2}^*\frac{L_2}{\sqrt{n}}+O(\xi_n)<\frac{1}{n}\sum_{i=1}^n \left(\jmath_{XY}(X_i,Y_i|R_1^*,R_2^*,D_1,D_2,P_{XY})-\rvR_0(R_1^*,R_2^*,D_1,D_2|P_{XY})\right)\right)\nn\\*
&\qquad\qquad +\frac{2|\calX|\cdot|\calY|}{n^2}\\
&\leq \rmQ\left(\frac{L_0+\lambda_1^*L_1+\lambda_2^*L_2+O\left(\sqrt{n}\xi_n\right)}{\sqrt{\mathrm{V}(R_1^*,R_2^*,D_1,D_2|P_{XY})}}\right)+\frac{6\mathrm{T}(R_1^*,R_2^*,D_1,D_2)}{\sqrt{n}\mathrm{V}^{3/2}(R_1^*,R_2^*,D_1,D_2)}+\frac{2|\calX|\cdot|\calY|}{n^2},\label{berryessen}
\end{align}
where \eqref{berryessen} follows from the Berry-Esseen Theorem and  $\mathrm{T}(R_1^*,R_2^*,D_1,D_2)$ is third absolute moment of the tilted information density  $\jmath_{XY}(X,Y|R_1^*,R_2^*,D_1,D_2,P_{XY})$. From the conditions in Theorem \ref{mainresult}, we conclude that $\mathrm{T}(R_1^*,R_2^*,D_1,D_2)$ is finite.
Therefore, if $(L_0,L_1,L_2)$ satisfies
\begin{align}
L_0+\lambda_1^*L_1+\lambda_2^*L_2 \geq \sqrt{\mathrm{V}(R_1^*,R_2^*,D_1,D_2|P_{XY})}\mathrm{Q}^{-1}(\epsilon),
\end{align}
then $\limsup_{n\to\infty}\epsilon_n(D_1,D_2)\leq \epsilon$.

\subsection{Converse Proof}
In this part, we prove an outer bound for the second-order region under conditions stated in Theorem \ref{mainresult}. We follow the method in \cite{watanabe2015second} closely. First, we invoke the strong converse in \cite{wei2009strong} to establish a type-based strong converse. Second, we prove a lower bound on excess-distortion probability $\epsilon_n(D_1,D_2)$ by using the type-based strong converse. Finally, we use Taylor expansion and apply Berry-Esseen Theorem to obtain a outer region expressed essentially using $\mathrm{V}(R_1^*,R_2^*,D_1,D_2|P_{XY})$, i.e., the variance of $\jmath_{XY}(X,Y|R_1^*,R_2^*,D_1,D_2,P_{XY})$ for a rate triplet $(R_0^*,R_1^*,R_2^*)$.
 
We now consider an $(n,M_0,M_1,M_2)$-code for the correlated source $(X^n,Y^n)$ with joint distribution $U_{\calT_{Q_{XY}}}(x^n,y^n)=|\calT_{Q_{XY}}|^{-1}$, the uniform distribution over the type class $\calT_{Q_{XY}}$.
\begin{lemma}
\label{typestrongconverse}
If the non-excess-distortion probability satisfies
\begin{align}
\label{assumption}
\Pr\left(d_{X}(X^n,\hat{X}^n)\leq D_1, d_{Y}(Y^n,\hat{Y}^n)\leq D_2|(X^n,Y^n)\in\calT_{Q_{XY}}\right)\geq \exp(-n\alpha)
\end{align}
for some positive number $\alpha$, then for $n$ large enough such that $\log n\geq\max\{\overline{d}_X,\overline{d}_Y\}\log|\calX|$, a conditional distribution $Q_{W|XY}$ with $|\calW|\leq |\calX|\cdot|\calY|+2$ such that
\begin{align}
\frac{1}{n}\log M_0&\geq I(X,Y;W)-\frac{\Big(|\calX|\cdot|\calY|+1\Big)\log(n+1)}{n}-\alpha,\\
\frac{1}{n}\log M_1&\geq R_{X|W}(Q_{XW},D_1)-\frac{\log n}{n},\\
\frac{1}{n}\log M_2&\geq R_{Y|W}(Q_{YW},D_2)-\frac{\log n}{n}.
\end{align}
where $(X,Y,W)\sim Q_{XY}\times Q_{W|XY}$.
\end{lemma}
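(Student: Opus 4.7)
My strategy adapts the type-based strong converse of Watanabe~\cite{watanabe2015second} to the lossy Gray-Wyner problem, with the Gu-Effros perturbation idea~\cite{wei2009strong} replaced by conditioning on the non-excess-distortion event. I first condition on the event $E := \{d_X(X^n, \hat X^n) \leq D_1,\ d_Y(Y^n, \hat Y^n) \leq D_2\}$: under the source distribution $P$ that is uniform on $\calT_{Q_{XY}}$, the hypothesis \eqref{assumption} gives $\Pr_P(E) \geq \exp(-n\alpha)$, so the conditional distribution $\tilde P := P(\cdot \mid E)$ automatically satisfies $\mathbb{E}_{\tilde P}[d_X(X^n, \hat X^n)] \leq D_1$ and $\mathbb{E}_{\tilde P}[d_Y(Y^n, \hat Y^n)] \leq D_2$. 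I then introduce a time-sharing variable $T$ uniform on $[1:n]$ independent of everything else, define the common auxiliary $W := (S_0, X^{T-1}, Y^{T-1}, T)$, and set $(X, Y, \hat X, \hat Y) := (X_T, Y_T, \hat X_T, \hat Y_T)$. Because the support of $\tilde P$ lies inside $\calT_{Q_{XY}}$, every realization has empirical type exactly $Q_{XY}$, which forces the time-averaged marginal $\tilde P(X = x, Y = y) = \frac{1}{n}\sum_i \tilde P(X_i = x, Y_i = y) = Q_{XY}(x,y)$.

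Next I derive the three rate bounds. For the common rate, I start from $\log M_0 \geq H_{\tilde P}(S_0) \geq I_{\tilde P}(X^n, Y^n; S_0)$ and use the chain-rule identity $I_{\tilde P}(X^n, Y^n; S_0) = \sum_i I_{\tilde P}(X_i, Y_i; W_i) - \sum_i I_{\tilde P}(X_i, Y_i; X^{i-1}, Y^{i-1})$ with $W_i := (S_0, X^{i-1}, Y^{i-1})$. Time-sharing gives $\sum_i I_{\tilde P}(X_i, Y_i; W_i) = n I_{\tilde P}(X, Y; W \mid T) = n[I_{\tilde P}(X, Y; W) - I_{\tilde P}(X, Y; T)]$, and assembling the pieces shows that the total correction collapses to $n H(Q_{XY}) - H_{\tilde P}(X^n, Y^n)$, which is at most $|\calX|\cdot|\calY|\log(n+1) + n\alpha$ via the type-class volume bound $|\calT_{Q_{XY}}| \geq (n+1)^{-|\calX|\cdot|\calY|} \exp(n H(Q_{XY}))$ combined with $H_{\tilde P}(X^n, Y^n) \geq \log|\calT_{Q_{XY}}| - n\alpha$. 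For the private rates, I start from $\log M_1 \geq I_{\tilde P}(X^n; \hat X^n \mid S_0)$ and $\log M_2 \geq I_{\tilde P}(Y^n; \hat Y^n \mid S_0)$; standard Gray-Wyner single-letterization via Csisz\'ar-sum-type identities produces $\tfrac{1}{n} I_{\tilde P}(X^n; \hat X^n \mid S_0) \geq I_{\tilde P}(X; \hat X \mid W)$ and analogously for $Y$, and $I_{\tilde P}(X; \hat X \mid W) \geq R_{X|W}(\tilde Q_{XW}, D_1)$ follows directly from the definition of the conditional rate-distortion function together with $\mathbb{E}_{\tilde P}[d_X(X, \hat X)] \leq D_1$.

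Finally, I invoke the Fenchel-Eggleston-Carath\'eodory support lemma to replace the combinatorially huge auxiliary $W$ by one with $|\calW| \leq |\calX|\cdot|\calY| + 2$, preserving the $|\calX|\cdot|\calY| - 1$ constraints that fix the $(X,Y)$-marginal together with the three scalar functionals $I(X, Y; W)$, $R_{X|W}(Q_{XW}, D_1)$, and $R_{Y|W}(Q_{YW}, D_2)$. Any residual $O(\log n / n)$ slack in the $R_1$ and $R_2$ bounds is absorbed using the uniform continuity of the conditional rate-distortion function in the source distribution (Lemmas \ref{continuityd2}--\ref{continuityD}), which is needed because the individual marginals $\tilde P(X_i, Y_i)$ may deviate from $Q_{XY}$ even though their time-average equals $Q_{XY}$. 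The main technical obstacle is performing the single-letterization of $I_{\tilde P}(X^n; \hat X^n \mid S_0)$ and $I_{\tilde P}(Y^n; \hat Y^n \mid S_0)$ with respect to the \emph{common} auxiliary $W$ rather than separate auxiliaries for $X$ and $Y$; this requires a careful Csisz\'ar-sum-identity argument that exploits the time-averaging symmetry of $\tilde P$ rather than any full i.i.d. structure.
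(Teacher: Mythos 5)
Your conditioning-on-$E$ idea is a genuine and arguably cleaner alternative to the paper's Gu--Effros perturbation. Because $\tilde P := P(\cdot\mid E)$ has all of its support inside $E$, you get $\mathbb{E}_{\tilde P}[d_X(X^n,\hat X^n)]\le D_1$ and $\mathbb{E}_{\tilde P}[d_Y(Y^n,\hat Y^n)]\le D_2$ \emph{exactly}, not $D_i+\overline{d}_i/n$ as in the paper's $Q_{\calT_{Q_{XY}}}$ construction. This removes any need for the distortion-continuity step (Lemma~\ref{continuityd2}) inside this proof, and hence the hypothesis $\log n\ge\max\{\overline{d}_X,\overline{d}_Y\}\log|\calX|$ becomes superfluous; you also shave the extra $\log(n+1)/n$ that comes from the paper's margin $\beta$. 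Your correction-term computation for $R_0$ using $H_{\tilde P}(X^n,Y^n)\ge\log|\calT_{Q_{XY}}|-n\alpha$ is correct. Incidentally, your stated worry that ``the individual marginals $\tilde P(X_i,Y_i)$ may deviate from $Q_{XY}$'' is misplaced: the time-sharing construction averages them, and the type restriction makes $(X_T,Y_T)\sim Q_{XY}$ exactly, so no continuity-in-$P_{XY}$ argument is needed either.

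The genuine gap is in the private-rate single-letterization. The chain rule gives
\[
I_{\tilde P}(X^n;\hat X^n\mid S_0)=\sum_{i=1}^n I(X_i;\hat X^n\mid S_0,X^{i-1})\ge\sum_{i=1}^n I(X_i;\hat X_i\mid V_i),\qquad V_i:=(S_0,X^{i-1}),
\]
which conditions on $V_i$, not on the larger $W_i=(S_0,X^{i-1},Y^{i-1})$, and there is no Csisz\'ar-sum or chain-rule identity that upgrades $I(X_i;\hat X_i\mid V_i)$ to $I(X_i;\hat X_i\mid W_i)$: conditioning on $Y^{i-1}$ can either increase or decrease a conditional mutual information. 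So the claimed inequality $\tfrac1n I(X^n;\hat X^n\mid S_0)\ge I(X;\hat X\mid W)$ with the common $W=(S_0,X^{T-1},Y^{T-1},T)$ does not follow from what you wrote. The correct move (which the paper makes implicitly, citing \cite{wei2009strong}) is to drop to the \emph{conditional rate-distortion function first} --- $\sum_i I(X_i;\hat X_i\mid V_i)\ge\sum_i R_{X_i|V_i}(D_1)$, by feasibility of the induced test channel --- and then use the monotonicity of the conditional rate-distortion function in side information: for $W_i=(V_i,Y^{i-1})$, any optimal test channel for $R_{X_i|V_i}(D_1)$ ignoring $Y^{i-1}$ still meets the distortion constraint and has no larger conditional mutual information, hence $R_{X_i|V_i}(D_1)\ge R_{X_i|W_i}(D_1)$. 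It is this monotonicity, applied symmetrically to both branches, that lets a single common $W$ dominate both $R_{X|V}$ and $R_{Y|\tilde V}$ simultaneously; without it your passage to a common auxiliary is unjustified. The cardinality reduction to $|\calW|\le|\calX|\cdot|\calY|+2$ via Carath\'eodory is fine once that step is in place.
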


The proof of Lemma \ref{typestrongconverse} is given in Appendix \ref{prooftypeconverse}. The proof of Lemma \ref{typestrongconverse} is similar to \cite[Lemma 6]{watanabe2015second} but we need to also combine this with the (weak) converse proof for lossy Gray-Wyner problem under the expected distortion criterion in \cite{gray1974source}. See Definition \ref{deffirst}.

We then prove a lower bound on the excess-distortion probability $\epsilon_n(D_1,D_2)$
in \eqref{defexcessprob}. 
Define the constant $c=\frac{|\calX|\cdot|\calY|+2}{n}$ and the three quantities
\begin{align}
R_{0,n}&:=\frac{1}{n}\log M_0+c\frac{\log (n+1)}{n},\label{eq1}\\
R_{1,n}&:=\frac{1}{n}\log M_1+\frac{\log n}{n},\label{eq2}\\
R_{2,n}&:=\frac{1}{n}\log M_2+\frac{\log n}{n}.\label{eq3}
\end{align}
\begin{lemma}
\label{lowerboundexcessp}
For any $(n,M_0,M_1,M_2)$-code such that $\log n\geq\max\{\overline{d}_X,\overline{d}_Y\}\log|\calX|$
\begin{align}
\epsilon_n(D_1,D_2)\geq \Pr\left(R_{0,n}< \rvR_0(R_{1,n},R_{2,n},D_1,D_2|\hat{T}_{X^nY^n})\right)-\frac{1}{n}.
\end{align} 
\end{lemma}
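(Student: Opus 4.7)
The plan is to leverage Lemma~\ref{typestrongconverse} (the type-based strong converse) by conditioning on the empirical joint type $\hat{T}_{X^nY^n}$. Concretely, for each $Q_{XY}\in\calP_n(\calX\times\calY)$ let
\begin{align*}
\beta_n(Q_{XY}) := \Pr\!\left(d_X(X^n,\hat{X}^n)\leq D_1,\, d_Y(Y^n,\hat{Y}^n)\leq D_2 \,\big|\, (X^n,Y^n)\in\calT_{Q_{XY}}\right),
\end{align*}
and split the types into the ``good'' collection $\calG_n := \{Q_{XY}:\beta_n(Q_{XY})\geq 1/n\}$ and its complement $\calG_n^c$. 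The threshold $1/n$ is dictated by the choice $\alpha=(\log n)/n$ in Lemma~\ref{typestrongconverse}.

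For each $Q_{XY}\in\calG_n$, Lemma~\ref{typestrongconverse} produces a conditional distribution $Q_{W|XY}$ such that, with $(X,Y,W)\sim Q_{XY}\times Q_{W|XY}$,
\begin{align*}
\tfrac{1}{n}\log M_0 &\geq I(X,Y;W) - \tfrac{(|\calX|\cdot|\calY|+1)\log(n+1)}{n} - \tfrac{\log n}{n}, \\
\tfrac{1}{n}\log M_1 &\geq R_{X|W}(Q_{XW},D_1) - \tfrac{\log n}{n}, \\
\tfrac{1}{n}\log M_2 &\geq R_{Y|W}(Q_{YW},D_2) - \tfrac{\log n}{n}.
\end{align*}
Adding to each line the corresponding correction term from the definitions \eqref{eq1}--\eqref{eq3} of $R_{0,n},R_{1,n},R_{2,n}$ makes the $O((\log n)/n)$ slack cancel, leaving $R_{1,n}\geq R_{X|W}(Q_{XW},D_1)$, $R_{2,n}\geq R_{Y|W}(Q_{YW},D_2)$, and $R_{0,n}\geq I(X,Y;W)$. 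Hence $Q_{W|XY}$ is feasible in the minimization \eqref{minkey} defining $\rvR_0(R_{1,n},R_{2,n},D_1,D_2|Q_{XY})$, and consequently
\begin{align*}
R_{0,n} \;\geq\; I(X,Y;W) \;\geq\; \rvR_0(R_{1,n},R_{2,n},D_1,D_2\,|\,Q_{XY}) \qquad \text{for every } Q_{XY}\in\calG_n.
\end{align*}

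Consequently, if the event $\calB_n:=\{R_{0,n}<\rvR_0(R_{1,n},R_{2,n},D_1,D_2|\hat{T}_{X^nY^n})\}$ occurs then $\hat{T}_{X^nY^n}\in\calG_n^c$. To convert this into the desired tail bound, I would decompose the non-excess-distortion probability type by type and bound $\beta_n(Q_{XY})\leq 1$ on good types and $\beta_n(Q_{XY})<1/n$ on bad types:
\begin{align*}
1-\epsilon_n(D_1,D_2) \;=\; \sum_{Q_{XY}}\Pr(\calT_{Q_{XY}})\,\beta_n(Q_{XY}) \;\leq\; \Pr(\hat{T}_{X^nY^n}\in\calG_n) + \tfrac{1}{n}.
\end{align*}
Rearranging gives $\epsilon_n(D_1,D_2)\geq \Pr(\hat{T}_{X^nY^n}\in\calG_n^c)-1/n\geq \Pr(\calB_n)-1/n$, which is the claim.

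The main delicate point will be bookkeeping: choosing $\alpha=(\log n)/n$ so that $\exp(-n\alpha)=1/n$ matches the bad-type contribution, and then verifying that the constants hidden in $R_{0,n}$ (the term $c\log(n+1)/n$) and in $R_{1,n},R_{2,n}$ (the $\log n/n$ offsets) are \emph{exactly} large enough to absorb both the $(|\calX|\cdot|\calY|+1)\log(n+1)/n$ penalty of Lemma~\ref{typestrongconverse} and the additional $\alpha$. Once these corrections are lined up, the implication ``good type $\Rightarrow R_{0,n}\geq\rvR_0$'' holds without slack, and the rest is a clean union-bound calculation.
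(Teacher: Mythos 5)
Your proposal is correct and matches the paper's proof in all essentials: both fix $\alpha=(\log n)/n$ in Lemma~\ref{typestrongconverse}, both observe that the additive offsets built into $R_{0,n},R_{1,n},R_{2,n}$ exactly absorb the penalty terms $(|\calX|\cdot|\calY|+1)\log(n+1)/n+\alpha$ and $\log n/n$ so that the produced test channel is feasible in \eqref{minkey}, and both then decompose over types and bound the bad-type contribution by $1/n$. The only cosmetic difference is that you partition into explicit ``good''/``bad'' type sets and work with the complementary probability $1-\epsilon_n$, whereas the paper decomposes $\epsilon_n$ directly and uses the contrapositive of Lemma~\ref{typestrongconverse}; these are algebraically the same argument.
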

The proof of Lemma \ref{lowerboundexcessp} is similar to \cite[Lemma 7]{watanabe2015second} and given in Appendix \ref{prooflbexcess}.

Choose $(M_0,M_1,M_2)$ such that
\begin{align}
\frac{1}{n}\log M_0&=R_0^*+\frac{L_0}{\sqrt{n}}-c\frac{\log (n+1)}{n},\\
\frac{1}{n}\log M_1&=R_1^*+\frac{L_1}{\sqrt{n}}-\frac{\log n}{n},\\
\frac{1}{n}\log M_2&=R_2^*+\frac{L_2}{\sqrt{n}}-\frac{\log n}{n}.
\end{align}

Hence, according to \eqref{eq1} to \eqref{eq3} in Lemma \ref{lowerboundexcessp}, for $i\in[0:2]$,
\begin{align}
R_{i,n}&=R_i^*+\frac{L_i}{\sqrt{n}}.
\end{align} 
Invoking Lemma \ref{lowerboundexcessp}, in a similar manner as the achievability proof, we obtain
\begin{align}
\nn&\epsilon_n(D_1,D_2)\\*
&\geq \Pr\left(R_{0,n}<\rvR_0(R_{1,n},R_{2,n},D_1,D_2|\hat{T}_{X^nY^n}),\right)-\frac{1}{n}\\
&\geq \Pr\left(R_{0,n}<\rvR_0(R_{1,n},R_{2,n},D_1,D_2|\hat{T}_{X^nY^n}),~\hat{T}_{X^nY^n}\in \calA_{n}(P_{XY})\right)-\frac{1}{n}\\
&\geq\Pr\left(R_0^*+\frac{L_0}{\sqrt{n}}<\frac{1}{n}\sum_{i=1}^n\jmath_{XY}(X_i,Y_i|R_1^*,R_2^*,D_1,D_2,P_{XY})-\lambda_1^*\frac{L_1}{\sqrt{n}}-\lambda_2^*\frac{L_2}{\sqrt{n}}+O(\xi_n),~\hat{T}_{X^nY^n}\in \calA_{n}(P_{XY})\right)-\frac{1}{n}\\ \nn&\geq \Pr\left(R_0^*+\frac{L_0}{\sqrt{n}}<\frac{1}{n}\sum_{i=1}^n\jmath_{XY}(X_i,Y_i|R_1^*,R_2^*,D_1,D_2,P_{XY})-\lambda_1^*\frac{L_1}{\sqrt{n}}-\lambda_2^*\frac{L_2}{\sqrt{n}}+O(\xi_n)\right)\\
&\qquad -\Pr\left(\hat{T}_{X^nY^n}\notin \calA_n(P_{XY})\right)-\frac{1}{n} \label{eqn:rev_union_bd}\\
\nn&=\Pr\left(\frac{L_0}{\sqrt{n}}+\lambda_1^*\frac{L_1}{\sqrt{n}}+\lambda_2^*\frac{L_2}{\sqrt{n}}+O(\xi_n)<\frac{1}{n}\sum_{i=1}^n\jmath_{XY}(X_i,Y_i|R_1^*,R_2^*,D_1,D_2,P_{XY})-\rvR_0(R_1^*,R_2^*,D_1,D_2|P_{XY})\right)\\
&\qquad -\frac{2|\calX|\cdot|\calY|}{n^2}-\frac{1}{n}\\
&\geq \rmQ\left(\frac{L_0+\lambda_1^*L_1+\lambda_2^*L_2+O\left(\sqrt{n}\xi_n\right)}{\sqrt{\mathrm{V}(R_1^*,R_2^*,D_1,D_2|P_{XY})}}\right)-\frac{6\mathrm{T}(R_1^*,R_2^*,D_1,D_2)}{\sqrt{n}\mathrm{V}^{3/2}(R_1^*,R_2^*,D_1,D_2)}-\frac{2|\calX|\cdot|\calY|}{n^2}-\frac{1}{n},
\end{align} 
where \eqref{eqn:rev_union_bd} follows from the fact that $\Pr(\calE\cap\calF)\ge\Pr(\calE)-\Pr(\calF^c)$. 
Hence, if $(L_0,L_1,L_2)$ satisfies
\begin{align}
L_0+\lambda_1^*L_1+\lambda_2^*L_2< \sqrt{\mathrm{V}(R_1^*,R_2^*,D_1,D_2|P_{XY})}\mathrm{Q}^{-1}(\epsilon),
\end{align} 
then $\liminf_{n\to\infty}\epsilon_n(D_1,D_2)>\epsilon$. Therefore, for sufficiently large $n$, any second-order $(R_0^*,R_1^*,R_2^*,D_1,D_2,\epsilon)$-achievable triplet $(L_0,L_1,L_2)$ must satisfy
\begin{align}
\label{outerregion}
L_0+\lambda_1^*L_1+\lambda_2^*L_2\geq \sqrt{\mathrm{V}(R_1^*,R_2^*,D_1,D_2|P_{XY})}\mathrm{Q}^{-1}(\epsilon).
\end{align}

\section{Large Deviations Analysis}
\label{largedeviations}
In this section, we define the error exponent and present the results together with the proof.
\begin{definition} \label{defee}
{\em  A  number $E\ge 0$ is said to be an {\em $(R_0,R_1,R_2,D_1,D_2)$-achievable error exponent} if there exists a sequence of $(n,M_0,M_1,M_2)$-codes such that,
\begin{align}
\limsup_{n\to\infty}\frac{1}{n}\log M_i\leq  R_i,~i=0,1,2,
\end{align}
and
\begin{align}
\liminf_{n\to\infty}-\frac{\log \epsilon_n(D_1,D_2)}{n}\geq E.
\end{align}
The supremum of all $(R_0,R_1,R_2,D_1,D_2)$-achievable error exponent is denoted as $E^*(R_0,R_1,R_2|D_1,D_2)$. }
\end{definition}
Define the function
\begin{align}
\label{defeef}
F(P_{XY},R_0,R_1,R_2,D_1,D_2):=\inf_{Q_{XY}:\rvR_0(R_1,R_2,D_1,D_2|Q_{XY})\geq R_0} D(Q_{XY}\|P_{XY}).
\end{align}
\begin{theorem}
\label{eegray}
The optimal error exponent for discrete lossy Gray-Wyner problem is
\begin{align}
E^*(R_0,R_1,R_2|D_1,D_2)&=F(P_{XY},R_0,R_1,R_2,D_1,D_2).
\end{align}
\end{theorem}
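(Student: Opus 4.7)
The plan is to prove achievability and converse separately, in both cases reducing to the method of types and Sanov's theorem so that the infimum over $Q_{XY}$ of $D(Q_{XY}\|P_{XY})$ emerges naturally.

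For achievability, I would apply Marton's technique combined with the type covering lemma (Lemma \ref{achievable}) already proved for the second-order analysis. Fix an arbitrarily small $\delta>0$. For each joint type $Q_{XY}\in\calP_n(\calX\times\calY)$ whose common-rate function is \emph{not} too large, i.e., $\rvR_0(R_1-\delta,R_2-\delta,D_1,D_2|Q_{XY})\leq R_0-\delta$, invoke Lemma \ref{achievable} to construct a local type-specific code of sizes compatible with the budgets $M_0=\exp(nR_0)$, $M_1=\exp(nR_1)$, $M_2=\exp(nR_2)$ up to polynomial overhead. Concatenate these local codes across all such good types using a type-prefix of rate $O(\log n /n)$. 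Any source realization whose empirical joint type $\hatT_{X^nY^n}$ lies in this good set is encoded without excess distortion. Hence $\epsilon_n(D_1,D_2)\leq \Pr\bigl(\hatT_{X^nY^n}\in\calQ_\delta\bigr)$ where $\calQ_\delta:=\{Q_{XY}:\rvR_0(R_1-\delta,R_2-\delta,D_1,D_2|Q_{XY})>R_0-\delta\}$, and by Sanov's theorem this is upper bounded by $\exp\bigl(-n\inf_{Q\in\calQ_\delta}D(Q\|P_{XY})+o(n)\bigr)$. Letting $\delta\downarrow 0$ and using continuity of $\rvR_0$ in $(R_1,R_2,Q_{XY})$ (which is where the uniform continuity lemmas for the conditional rate-distortion function pay off) yields the achievability bound $F(P_{XY},R_0,R_1,R_2,D_1,D_2)$.

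For the converse, I would use the type-based strong converse (Lemma \ref{typestrongconverse}) together with a standard change-of-measure argument due to Haroutunian. Suppose $E<E^*(R_0,R_1,R_2|D_1,D_2)$ is achieved by some sequence of codes. For any joint type $Q_{XY}$ satisfying $\rvR_0(R_1+\eta,R_2+\eta,D_1,D_2|Q_{XY})>R_0+\eta$ for some small $\eta>0$, the contrapositive of Lemma \ref{typestrongconverse} applied with $\alpha=\eta/2$ forces the conditional non-excess-distortion probability over $\calT_{Q_{XY}}$ to be at most $\exp(-n\eta/2)$, so that
\begin{align}
\epsilon_n(D_1,D_2)\;\ge\; \Pr\bigl((X^n,Y^n)\in\calT_{Q_{XY}}\bigr)\bigl(1-\exp(-n\eta/2)\bigr)\;\ge\;\exp\bigl(-nD(Q_{XY}\|P_{XY})-o(n)\bigr),
\end{align}
by the standard bound $\Pr(\calT_{Q_{XY}})\geq (n+1)^{-|\calX||\calY|}\exp(-nD(Q_{XY}\|P_{XY}))$. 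Minimizing the right-hand side over such ``bad'' $Q_{XY}$ and letting $\eta\downarrow 0$, the set of bad types converges (by continuity of $\rvR_0$) to $\{Q_{XY}:\rvR_0(R_1,R_2,D_1,D_2|Q_{XY})\geq R_0\}$, yielding $E^*\le F(P_{XY},R_0,R_1,R_2,D_1,D_2)$.

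The main technical obstacle will be the continuity/closure argument needed at the end of each half, namely showing that $\inf_{Q\in\calQ_\delta}D(Q\|P_{XY})\to F(P_{XY},R_0,R_1,R_2,D_1,D_2)$ as $\delta\downarrow 0$. This requires joint continuity (or at least upper/lower semicontinuity as appropriate) of $(R_1,R_2,Q_{XY})\mapsto \rvR_0(R_1,R_2,D_1,D_2|Q_{XY})$, which should follow by composing the uniform continuity of the conditional rate-distortion function in both distribution and distortion level (Lemmas \ref{continuityd2}--\ref{continuityD}) with the outer minimization over the auxiliary $W$ in \eqref{minkey2}, but care is needed since the outer minimization is over a constrained set that itself depends on $(R_1,R_2,Q_{XY})$. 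A secondary obstacle is ensuring that $F(P_{XY},R_0,R_1,R_2,D_1,D_2)$ is indeed attained (or approached) by types in $\calP_n(\calX\times\calY)$ as $n\to\infty$; this is handled by a standard rational approximation together with the same continuity.
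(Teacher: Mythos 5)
Your proof is correct, and the achievability half is essentially what the paper does: invoke the type covering Lemma~\ref{achievable} so that any source sequence whose empirical type satisfies $\rvR_0(R_1,R_2,D_1,D_2|Q_{XY})\leq R_0$ (up to $O(\log n/n)$ slack already built into the lemma) is encoded without excess distortion, then bound the residual probability via Sanov. Your extra $\delta$-slack is harmless but unnecessary, because the polynomial overheads $c_i\log n/n$ in Lemma~\ref{achievable} already vanish, so the constraint $\limsup_n\frac1n\log M_i\le R_i$ of Definition~\ref{defee} is met without shrinking the operating point. The converse, however, is a genuinely different route. The paper does \emph{not} go through the type-based strong converse (Lemma~\ref{typestrongconverse}); instead, for a ``bad'' $Q_{XY}$ with $\rvR_0(R_1,R_2,D_1,D_2|Q_{XY})\ge R_0+\delta$, it applies the global strong converse of Gu--Effros~\cite{wei2009strong} to the product measure $Q_{XY}^n$ to get $Q_{XY}^n(\calD_n)\ge1-\beta_n$, and then a Haroutunian-style change of measure (relative-entropy data processing on the binary partition $\{\calD_n,\calD_n^c\}$) yields $\epsilon_n(D_1,D_2)=P_{XY}^n(\calD_n)\ge\exp\bigl(\tfrac{-nD(Q_{XY}\|P_{XY})-h(\beta_n)}{1-\beta_n}\bigr)$. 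Your alternative---applying Lemma~\ref{typestrongconverse} contrapositively with $\alpha=\eta/2$ and combining with the type-class probability lower bound $P_{XY}^n(\calT_{Q_{XY}})\ge(n+1)^{-|\calX||\calY|}\exp(-nD(Q_{XY}\|P_{XY}))$---is also valid and arrives at the same exponent; it has the conceptual advantage of reusing the same lemma that drives the second-order and moderate-deviations converses, whereas the paper's route is shorter because at fixed rate (no $1/\sqrt n$ or $\rho_n$ perturbation) one can afford the blunter, global strong converse directly. One small wording slip: you should be \emph{minimizing $D(Q_{XY}\|P_{XY})$} over bad types, which \emph{maximizes} the right-hand side of your lower bound on $\epsilon_n$; as written (``minimizing the right-hand side'') the phrasing inverts the direction, though the final conclusion $E^*\le F$ makes clear what you meant.
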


\begin{proof}
The proof of Theorem \ref{eegray} follows \cite{Marton74} closely. 

The achievability part follows from Lemma \ref{achievable}. We consider a sequence of $(n,M_0,M_1,M_2)$-codes where
\begin{align}
\frac{1}{n}\log M_0&=R_0+c_0\frac{\log (n+1)}{n},\\
\frac{1}{n}\log M_i&=R_i+c_i\frac{\log n}{n},~i=1,2.
\end{align}
Given $(R_0,R_1,R_2)$, we define the set 
\begin{align}
\calU_n=\bigcup_{Q_{XY}\in\calP_n(\calX\times\calY):\rvR_0(R_1,R_2,D_1,D_2|Q_{XY})\geq R_0}\calT_{Q_{XY}}.
\end{align}
Invoking Lemma \ref{achievable}, we know that for type $Q_{XY}$ such that $\rvR_0(R_1,R_2,D_1,D_2|Q_{XY})\leq R_0$, the excess-distortion probability is zero. Hence, by Sanov's theorem~\cite[Ch.\ 11]{cover2012elements},
\begin{align}
\epsilon_n(D_1,D_2)
%&=\Pr\left(\hat{\calT}_{X^nY^n}\in\calU_n\right)\\* &=\sum_{Q_{XY}:\rvR_0(R_1,R_2,D_1,D_2|Q_{XY})\geq R_0}P_{XY}^n(\calT_{Q_{XY}})\label{unionupp}\\
%&\leq \sum_{Q_{XY}:\rvR_0(R_1,R_2,D_1,D_2|Q_{XY})\geq R_0}\exp(-nD(Q_{XY}\|P_{XY}))\label{typeprob}\\
&\leq (n+1)^{|\calX|\cdot|\calY|}\exp \left(-n\min_{Q_{XY}:\rvR_0(R_1,R_2,D_1,D_2|Q_{XY})\geq R_0}D(Q_{XY}\|P_{XY}) \right).
\end{align}
%where \eqref{unionupp} follows by union bound and loosing the type requirement to all distributions while \eqref{typeprob} follows from \cite[Lemmas 1.2.3 and 1.2.6]{csiszar2011information}.
Therefore, we obtain
\begin{align}
\liminf_{n\to\infty}-\frac{\log \epsilon_n(D_1,D_2)}{n}\geq \min_{Q_{XY}:\rvR_0(R_1,R_2,D_1,D_2|Q_{XY})\geq R_0}D(Q_{XY}\|P_{XY}).
\end{align}
The proof for achievability part is now complete.

The converse part follows from strong converse \cite{wei2009strong} and the change-of-measure technique  in~\cite{csiszar2011information, haroutunian68}.  Define the set 
\begin{equation}
\calD_n:=\left\{(x^n,y^n):d_X(x^n,\phi_1(f_0(x^n,y^n),f_1(x^n,y^n)))>D_1~\mathrm{or}~d_Y(y^n,\phi_1(f_0(x^n,y^n),f_2(x^n,y^n)))>D_2\right\}.
\end{equation}
 Given rate triplet $(R_0,R_1,R_2)$, suppose the source has distribution $Q_{XY}$ such that $(R_0,R_1,R_2)\notin\calR(D_1,D_2|Q_{XY})$.  Since $\calR(D_1,D_2|Q_{XY})$ is closed, this means that   $\rvR_0(R_1,R_2,D_1,D_2|Q_{XY})\geq R_0+\delta$ for some $\delta>0$. By invoking the strong converse for the lossy Gray-Wyner problem~\cite{wei2009strong}, we obtain that
\begin{align}
Q_{XY}^n(\calD_n)\geq 1-\beta_n,
\end{align}
where $\beta_n\to0$ as $n\to\infty$. Then with a standard change-of-measure technique \cite{csiszar2011information,haroutunian68}, we obtain that for any $(n,M_0,M_1,M_2)$-code,
\begin{align}
\epsilon_n(D_1,D_2)=P_{XY}^n(\calD_n)\geq \exp\left(\frac{-nD(Q_{XY}\|P_{XY})-h(\beta_n)}{1-\beta_n}\right),
\end{align}
where $h(\delta)$ is the binary entropy function. Hence, 
\begin{align}
\label{converseee}
\limsup_{n\to\infty}-\frac{\log\epsilon_n(D_1,D_2)}{n}\leq D(Q_{XY}\|P_{XY}).
\end{align}
Minimizing \eqref{converseee} over all auxiliary distributions $Q_{XY}$ such that $\rvR_0(R_1,R_2,D_1,D_2|Q_{XY})\geq R_0+\delta$ and finally letting $\delta\to0$ (using the convexity and hence continuity of the exponent in the rate $R_0$), we complete the proof.
\end{proof}

\section{Moderate Deviations Analysis}
\label{moderatedeviations}
In this section, we define the moderate deviations constant   and present the main results as well as the proof. Fix a rate triplet $(R_0^*,R_1^*,R_2^*)\in\calR(D_1,D_2|P_{XY})$.
\begin{definition}[Moderate Deviations Constant]
{\em  
Consider any correlated source with joint probability mass function  $P_{XY}$ and any positive sequence $\{\rho_n\}_{n=1}^{\infty}$ satisfying
\begin{align}
 \lim_{n\to\infty}\rho_n&=0,\label{eqn:cond_rho1}\\
 \lim_{n\to\infty}n\rho_n^2&=\infty. \label{eqn:cond_rho2}
\end{align}
Let $\theta_i$ for $ i=[0:2]$ be three positive numbers. 
A number $\nu\ge 0$ is said to be a {\em $(R_0^*,R_1^*,R_2^*, D_1, D_2)$-achievable moderate deviations constant (with respect to $\{ \rho_n\}_{n=1}^{\infty}$)} if there exists a sequence of $(n,M_0,M_1,M_2)$-codes such that
\begin{align}
\limsup_{n\to\infty}\frac{1}{n\rho_n}(\log M_i-nR_i^*)\leq \theta_i,~i=0,1,2,
\end{align}
and
\begin{align}
\liminf_{n\to\infty}-\frac{\log \epsilon_n(D_1,D_2)}{n\rho_n^2}\geq \nu.
\end{align}
The supremum of all $(R_0^*,R_1^*,R_2^*, D_1, D_2)$-achievable moderate deviations constants is denoted as $\nu^*(R_0^*,R_1^*,R_2^* |  D_1, D_2)$.
}
\end{definition}

Define
\begin{align}
\theta&=\theta_0+\lambda_1^*\theta_1+\lambda_2^*\theta_2.
\end{align}
We are now ready to present the result on moderate deviations.
\begin{theorem}
\label{theoremmdc}
Given a rate triplet $(R_0^*,R_1^*,R_2^*)\in\calR(D_1,D_2|P_{XY})$ satisfying $\mathrm{V}(R_1^*,R_2^*,D_1,D_2|P_{XY})>0$ and the conditions in Theorem \ref{mainresult}, the moderate deviations constant is
\begin{align}
\nu^*(R_0^*,R_1^*,R_2^* | D_1, D_2)=\frac{\theta^2\log e}{2\mathrm{V}(R_1^*,R_2^*,D_1,D_2|P_{XY})}.
\end{align}
\end{theorem}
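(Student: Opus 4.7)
The plan is to mirror the two-step structure of the second-order proof in Section~\ref{secondorderproof}, but to replace the Berry-Esseen Theorem by the Moderate Deviations Theorem~\cite[Theorem~3.7.1]{dembo2009large} applied to the i.i.d.\ sum of tilted information densities $Z_i := \jmath_{XY}(X_i, Y_i | R_1^*, R_2^*, D_1, D_2, P_{XY})$, which by Lemma~\ref{propertytilted} has mean $R_0^*$ and, by assumption, positive variance $\mathrm{V} := \mathrm{V}(R_1^*,R_2^*,D_1,D_2|P_{XY})$. Under the assumptions $\rho_n \to 0$ and $n\rho_n^2 \to \infty$, the MDP yields, for any threshold $a>0$,
\begin{align}
\lim_{n\to\infty} -\frac{1}{n\rho_n^2}\log\Pr\left(\frac{1}{n}\sum_{i=1}^n Z_i - R_0^* > a\rho_n\right) = \frac{a^2 \log e}{2\mathrm{V}},
\end{align}
together with a matching one-sided lower bound, and the target rate $\theta^2 \log e/(2\mathrm{V})$ corresponds exactly to the threshold $a=\theta$.

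For the achievability direction I would pick $\frac{1}{n}\log M_i = R_i^* + \theta_i\rho_n + c_i\frac{\log n}{n}$; since $\frac{\log n}{n} = o(\rho_n)$, these logarithmic corrections are absorbed into $o(\rho_n)$. Combining Lemma~\ref{achievable} with Lemma~\ref{upperboundexcessp} yields $\epsilon_n(D_1,D_2) \le \Pr(R_{0,n} < \rvR_0(R_{1,n}, R_{2,n}, D_1, D_2 | \hat{T}_{X^nY^n}))$. Intersecting this event with a typicality set $\calA_n := \{\|\Gamma(\hat{T}_{X^nY^n}) - \Gamma(P_{XY})\|_\infty \le w_n\}$ of width $w_n\to 0$ (chosen below) and reusing the second-order Taylor expansion performed in~\eqref{taylorfisrtt}, which uses Lemma~\ref{linkrjxy} for the gradient and the bounded-Hessian condition~\eqref{cond2} for the remainder, the excess-distortion event on $\calA_n$ reduces to $\{\frac{1}{n}\sum_i Z_i - R_0^* > \theta\rho_n(1+o(1))\}$, to which the MDP upper bound applies. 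The converse direction is analogous: I would invoke Lemma~\ref{lowerboundexcessp} in place of Lemma~\ref{upperboundexcessp}, choose $\frac{1}{n}\log M_i = R_i^* + \theta_i\rho_n - c_i'\frac{\log n}{n}$, and apply the MDP lower bound.

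The main obstacle is calibrating $w_n$ against two competing requirements: the Taylor remainder $O(w_n^2 + \rho_n^2)$ must be $o(\rho_n)$ so that the threshold $\theta\rho_n$ is not perturbed, forcing $w_n = o(\sqrt{\rho_n})$; meanwhile, the atypicality probability must decay strictly faster than $\exp(-n\rho_n^2 \nu)$, which via a Hoeffding/DKW-type concentration $\Pr(\hat{T}_{X^nY^n}\notin\calA_n) \le c\exp(-2nw_n^2)$ requires $w_n \gg \rho_n$. The choice $w_n = \sqrt{\log n / n}$ used in the second-order analysis delivers atypicality only $O(1/n^2)$, which is inadequate whenever $n\rho_n^2 = O(\log n)$; this is precisely why \cite{tan2012moderate} imposed the auxiliary condition $n\rho_n^2/\log n \to \infty$. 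To remove that condition, I would take $w_n = \rho_n^{3/4}$ (any scaling with $\rho_n \ll w_n \ll \sqrt{\rho_n}$ works): then $w_n^2 = \rho_n^{3/2} = o(\rho_n)$, and the atypicality probability is at most $c\exp(-2n\rho_n^{3/2})$, which is $o(\exp(-n\rho_n^2\nu))$ for every fixed $\nu>0$ since $\rho_n^{3/2}/\rho_n^2 = \rho_n^{-1/2} \to \infty$. Verifying this decomposition, together with the MDP limit above, constitutes the technical heart of the proof.
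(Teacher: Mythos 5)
Your proposal is correct, and structurally it mirrors the paper's proof at every step: you invoke the same covering lemma and type-based strong converse machinery (Lemmas~\ref{achievable}, \ref{upperboundexcessp}, \ref{lowerboundexcessp}), the same Taylor expansion via Lemma~\ref{linkrjxy} under the bounded-Hessian hypothesis, and the same Dembo--Zeitouni moderate deviations theorem. The one place you genuinely deviate is the calibration of the typical set. You take a radius $w_n$ strictly between $\rho_n$ and $\sqrt{\rho_n}$, e.g.\ $w_n=\rho_n^{3/4}$, so that the Taylor remainder $O(w_n^2)$ is $o(\rho_n)$ while the atypicality probability $\exp(-\Theta(nw_n^2))$ decays strictly faster than the target rate $\exp\bigl(-n\rho_n^2\theta^2\log e/(2\mathrm{V})\bigr)$. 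The paper instead takes the $L_1$-ball of radius $\theta\rho_n/\sqrt{\mathrm{V}}$---a radius \emph{proportional} to $\rho_n$, with the constant chosen so that Weissman's $L_1$ concentration bound~\eqref{eqn:weiss} reproduces exactly the target exponent. For achievability the two calibrations are interchangeable. For the converse, however, the paper's choice forces it to subtract two probabilities of the \emph{same} exponential order in~\eqref{eqn:lower_mdp} and then assert that the difference still carries the right exponent; as written this is delicate, since the MDP pins down $\Pr(\text{excess})$ only up to a $(1+o(1))$ factor in the exponent and nothing a priori guarantees it dominates the Weissman upper bound on the atypical term (an inflation of the radius by $1+\delta$ would make this rigorous). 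Your intermediate-scale choice makes the atypical contribution strictly subdominant and sidesteps that subtlety entirely. Both calibrations yield the same constant and both dispense with the extraneous $n\rho_n^2/\log n\to\infty$ hypothesis; yours is arguably the cleaner way to present the converse.
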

We observe that similarly to second-order asymptotics (Theorem \ref{mainresult}), the rate-dispersion function $\mathrm{V}(R_1^*,R_2^*,D_1,D_2|P_{XY})$ is a fundamental quantity that governs the speed of convergence of the excess-distortion probability to zero.%or optimal rates of codes for the lossy Gray-Wyner problem.

The proof of Theorem \ref{theoremmdc} can be done in similar manner as \cite{tan2012moderate} using Euclidean information theory \cite{borade2008}. Here we provide an alternative (and more direct) proof using the moderate deviations principle/theorem. See Dembo and Zeitouni \cite[Theorem~3.7.1]{dembo2009large}.
\subsection{Achievability}

For $i=0,1,2$, let 
\begin{align}
R_{i,n}&:=\frac{1}{n}\log M_i=R_i^*+\theta_i\rho_n\label{defmdcrin}.
\end{align}

Recall the definitions of $c_0'$ in \eqref{defc0p}, $c_1$ in \eqref{defc1} and $c_2$ in \eqref{defc2}.

Define
\begin{align}
\rho_{0,n}'&=\theta_0\rho_n-c_0'\frac{\log (n+1)}{n},\\
\rho_{i,n}'&=\theta_i\rho_n-c_i\frac{\log n}{n},~i=1,2\\
R_{i,n}'&=R_{i,n}-c_i\frac{\log n}{n}=R_i^*+\rho_{i,n}',~i=1,2,3.
\end{align} 
Define the typical set 
\begin{align}
\calA_{n}'(P_{XY})
:=\left\{Q_{XY}\in\calP_n(\calX\times\calY):\left\|Q_{XY}-P_{XY}\right\|_{1}\leq \frac{\theta\rho_n}{\sqrt{\mathrm{V}(R_1^*,R_2^*,D_1,D_2|P_{XY})}}\right\}. \label{eqn:typ}
\end{align}
Invoking Lemma \ref{upperboundexcessp} with $\frac{1}{n}\log M_i=R_i^*+\theta_i\rho_n$ for $i=0,1,2$, we obtain
\begin{align}
\epsilon_n(D_1,D_2)
&\leq \Pr\left(R_{0,n}'<\rvR_0(R_{1,n}',R_{2,n}',D_1,D_2|\hat{T}_{X^nY^n})\right)\\
&\leq \Pr\left(R_{0,n}'<\rvR_0(R_{1,n}',R_{2,n}',D_1,D_2|\hat{T}_{X^nY^n}),\hatT_{X^nY^n}\in\calA_n'(P_{XY})\right)+\Pr\left(\hat{T}_{X^nY^n}\notin\calA_n'(P_{XY})\right).
\end{align}
According to Weissman {\em et al.}~\cite{weissman2003inequalities}, we obtain
\begin{align}
\Pr\left(\hat{T}_{X^nY^n}\notin\calA_n'(P_{XY})\right)\leq \exp(|\calX|)\exp\left(-\frac{ n\rho_n^2\theta^2\log e}{2\mathrm{V}(R_1^*,R_2^*,D_1,D_2|P_{XY})}\right). \label{eqn:weiss}
\end{align}
For any $(x^n,y^n)$ such that $\hat{T}_{x^ny^n}\in\calA_n'(P_{XY})$, for $n$ large enough, applying Taylor's expansion similarly as \eqref{taylorfisrtt}, we obtain
\begin{align}
\nn&\rvR_0(R_{1,n}',R_{2,n}',D_1,D_2|\hat{T}_{x^ny^n})-R_{0,n}'\\
&=-\lambda_1^*\rho_{1,n}'-\lambda_2^*\rho_{2,n}'-\rho_{0,n}'+\sum_{x,y}\left(\hat{T}_{x^ny^n}(x,y)-P_{XY}(x,y)\right)\jmath_{XY}(x,y|R_1^*,R_2^*,D_1,D_2,P_{XY})\\
&\qquad+O\left(\rho_{1,n}'^2+\rho_{2,n}'^2+\rho_{0,n}'^2+\left\|\hat{T}_{x^ny^n}-P_{XY}\right\|^2\right)\\
&=-\left(\theta_0+\lambda_1^*\theta_1+\lambda_2^*\theta_2\right)\rho_n+\frac{1}{n}\sum_{i=1}^n\jmath_{XY}(x_i,y_i|R_1^*,R_2^*,D_1,D_2,P_{XY})-\rvR_0(R_1^*,R_2^*,D_1,D_2|P_{XY})+o\left(\rho_n\right)\label{assumptionuse},
\end{align}
where \eqref{assumptionuse} holds because (i) according to  \eqref{eqn:cond_rho2}, we have $\frac{\log( n+1)}{n}=o(\rho_n)$ and also $\rho_{i,n}'^2=O(\rho_n^2)=o(\rho_n),~i=0,1,2$; and (ii) since $\hat{T}_{x^ny^n}\in\calA_n'(P_{XY})$, we have $O\big(\|\hat{T}_{x^ny^n}-P_{XY}\|^2\big)=O(\rho_n^2)=o(\rho_n)$.

Hence, for $n$ large enough,
\begin{align}
\nn&\Pr\left(R_{0,n}'<\rvR_0(R_{1,n}',R_{2,n}',D_1,D_2|\hat{T}_{X^nY^n}),\hat{T}_{X^nY^n}\in\calA_n'(P_{XY})\right)\\
&\leq \Pr\left(\sum_{i=1}^n\left(\jmath_{XY}(X_i,Y_i|R_1^*,R_2^*,D_1,D_2,P_{XY})-\rvR_0(R_1^*,R_2^*,D_1,D_2|P_{XY})\right)>n\left(\theta \rho_n+o\left(\rho_n\right)\right)\right)\label{termendsec}.
\end{align}
We bound the term in \eqref{termendsec} at the end of this section. We show that this term is of the same order as that in \eqref{eqn:weiss}. Hence, the  moderate deviations constant  is lower bounded by ${\theta^2\log e}/{(2\mathrm{V}(R_1^*,R_2^*,D_1,D_2|P_{XY}) )}$.

\subsection{Converse}
Recall the definitions of $R_{i,n}$ in \eqref{defmdcrin} for $i=0,1,2$. To prove the converse part, we first define 
\begin{align}
\rho_{0,n}'&:=\theta_0\rho_n+\frac{(|\calX|\cdot|\calY|+2)\log (n+1)}{n},\\
R_{0,n}'&:=R_{0,n}+\frac{(|\calX|\cdot|\calY|+2)\log n}{n}=R_0^*+\rho_{0,n}'.
\end{align}
In a similar manner as the proof of Lemma \ref{lowerboundexcessp} in Appendix \ref{prooflbexcess}, we can show that
\begin{align}
\epsilon_n(D_1,D_2)
&\geq \left(1-\frac{1}{n}\right)\sum_{Q_{XY}\in\calP_{n}(\calX\times\calY):(R_{0,n}',R_{1,n},R_{2,n})\notin\calR(D_1,D_2|Q_{XY})}P_{XY}^n(\calT_{Q_{XY}})\\
&\geq \frac{1}{2}\sum_{Q_{XY}\in\calP_{n}(\calX\times\calY):R_{0,n}'<\rvR_0(R_{1,n},R_{2,n},D_1,D_2|Q_{XY})} P_{XY}^n(\calT_{Q_{XY}})\\
&\geq 
\frac{1}{2}\sum_{Q_{XY}\in\calP_{n}(\calX\times\calY):R_0^*+\rho_{0,n}'<\rvR_0(R_{1,n},R_{2,n},D_1,D_2|Q_{XY})} P_{XY}^n(\calT_{Q_{XY}})\\
&=\frac{1}{2}\Pr\left(R_0^*+\rho_{0,n}'<\rvR_0(R_{1,n},R_{2,n},D_1,D_2|\hat{T}_{X^nY^n})\right).
\end{align}
Applying Taylor's expansion for $\hatT_{x^ny^n}\in\calA_n'(P_{XY})$ (this typical set was defined in \eqref{eqn:typ}) in a similar manner as \eqref{assumptionuse}, we obtain
\begin{align}
\nn&\rvR_0(R_{1,n},R_{2,n},D_1,D_2| \hatT_{x^ny^n})-\left(R_0^*+\rho_{0,n}'\right)\\*
&=-\left(\theta_0+\lambda_1^*\theta_1+\lambda_2^*\theta_2\right)\rho_n+\frac{1}{n}\sum_{i=1}^n\jmath_{XY}(x_i,y_i|R_1^*,R_2^*,D_1,D_2,P_{XY})-\rvR_0(R_1^*,R_2^*,D_1,D_2|P_{XY})+o\left(\rho_n\right).
\end{align}
Hence,
\begin{align}
\epsilon_n(D_1,D_2)
&\geq \frac{1}{2}\Pr\left(R_0^*+\rho_{0,n}'<\rvR_0(R_{1,n},R_{2,n},D_1,D_2|\hat{T}_{X^nY^n}),\hat{T}_{X^nY^n}\in\calA_n'(P_{XY})\right)\\*
\nn&\geq \frac{1}{2}\Pr\left(\sum_{i=1}^n\left(\jmath_{XY}(X_i,Y_i|R_1^*,R_2^*,D_1,D_2,P_{XY})-\rvR_0(R_1^*,R_2^*,D_1,D_2|P_{XY})\right)>n\left(\theta \rho_n+o\left(\rho_n\right)\right)\right)\\*
&\qquad -\frac{1}{2}\Pr\left(\hat{T}_{X^nY^n}\notin\calA_n'(P_{XY})\right),\label{eqn:lower_mdp}
\end{align}  
where \eqref{eqn:lower_mdp} follows from the same reasoning as \eqref{eqn:rev_union_bd}. Note that the second term in \eqref{eqn:lower_mdp} is of the same order as \eqref{eqn:weiss}.

Invoking \cite[Theorem 3.7.1]{dembo2009large} and the fact that $\rho_n\to 0$,  %we complete the proof as follows:
\begin{align}
\nn&\lim_{n\to\infty}-\frac{1}{n\rho_n^2}\log\Pr\left(\sum_{i=1}^n\left(\jmath_{XY}(X_i,Y_i|R_1^*,R_2^*,D_1,D_2,P_{XY})-\rvR_0(R_1^*,R_2^*,D_1,D_2|P_{XY})\right)>n\left(\theta \rho_n+o\left(\rho_n\right)\right)\right) \\
&=\frac{\theta^2\log e}{2\mathrm{V}(R_1^*,R_2^*,D_1,D_2|P_{XY})}.
\end{align}
Note that this calculation applies to both \eqref{termendsec} and \eqref{eqn:lower_mdp}. This completes the proof.

\section{Conclusion}
\label{conclusion}
In this paper, we derived the second-order coding region, the error exponent and moderate deviations constant for the discrete lossy Gray-Wyner problem under   mild conditions on the source. In general, it is not easy to calculate the second-order coding region but we provide an example where the second-order region calculation can be simplified. The proofs make use of a  novel type covering lemma that is suited to the discrete lossy Gray-Wyner problem. We also establish new results on the  uniform continuity of conditional rate-distortion function that may be of independent interest elsewhere. We hope the solution to this problem may lead to the solution of second-order regions for other multi-terminal lossy source coding problems \cite{watanabe2015,el2011network}.  

In the future, we aim to solve for the second-order asymptotics of  the lossy Gray-Wyner problem with correlated Gaussian sources and the quadratic  distortion measure~\cite[Example~2.5(B)]{gray1974source}.  Lastly, we hope to derive the exact asymptotics for this problem~\cite{altug14a,altug14b}.

\appendix
%A
\subsection{Proof of Lemma \ref{propertytilted}}
\label{prooflemmatilted}
For given test channel $P_{W|XY}P_{\hatX|XW}P_{\hatY|YW}$, let $P_{XYW}$, $P_{XW}, P_{YW}$ be induced joint and marginal distributions. 
For any $Q_{W}\in\calP(\calW)$, define
\begin{align}
&\nn F(P_{W|XY},Q_{W},Q_{\hat{X}|W},Q_{\hat{Y}|W},D_1,D_2)\\
\nn&:=D(P_{W|XY}\|Q_W|P_{XY})+\lambda_1^*\left(D(P_{\hatX|XW}\|Q_{\hatX|W}|P_{XW})-R_1^*\right)+\lambda_2^*\left(D(P_{\hatY|YW}\|Q_{\hatY|W}|P_{YW})-R_2^*\right)\\
&+\gamma_1^*(\mathbb{E}[d_X(X,\hatX)-D_1])+\gamma_2^*(\mathbb{E}[d_Y(Y,\hatY)]-D_2)\\
\nn&=I(X,Y;W)+D(P_W\|Q_W)+\lambda_1^*\left(I(X;\hatX|W)+D(P_{\hatX|W}\|Q_{\hatX|W}|P_W)-R_1^*\right)\\
&\qquad+\lambda_2^*\left(I(Y;\hatY|W)+D(P_{\hatY|W}\|Q_{\hatY|W}|P_W)-R_2^*\right)+\gamma_1^*(\mathbb{E}[d_X(X,\hatX)-D_1])+\gamma_2^*(\mathbb{E}[d_Y(Y,\hatY)]-D_2).
\end{align}
We can relate $\rvR_0(R_1^*,R_2^*,D_1,D_2|P_{XY})$ to $F(P_{W|XY},Q_{W},Q_{\hat{X}|W},Q_{\hat{Y}|W})$ as follows:
\begin{align}
\label{linkrf}
\rvR_0(R_1^*,R_2^*,D_1,D_2|P_{XY})=\min_{P_{W|XY}P_{\hatX|XW}P_{\hatY|YW}}\min_{Q_W}\min_{Q_{\hat{X}|W}}\min_{Q_{\hat{Y}|W}}F(P_{W|XY},P_{\hatX|XW},P_{\hatY|YW},Q_{W},Q_{\hat{X}|W},Q_{\hat{Y}|W},D_1,D_2).
\end{align}

For given $Q_W$, $Q_{\hat{X}|W}$, $Q_{\hat{Y}|W}$, $\lambda_1,\lambda_2>0$ and $\gamma_1,\gamma_2\geq 0$, define
\begin{align}
&\Lambda(x,w,\hatx,Q_{\hatX|W}|\lambda_1,\gamma_1)
:=\log \frac{1}{\sum_{\hatx} Q_{\hatX|W}(\hatx|w)\exp\Big(\frac{\gamma_1}{\lambda_1}(D_1-d_X(x,\hatx))\Big)}\label{def:jmathx}\\
&\Lambda(y,w,\haty,Q_{\hatY|W}|\lambda_2,\gamma_2)
:=\log \frac{1}{\sum_{\haty} Q_{\hatY|W}(\haty|w)\exp\Big(\frac{\gamma_2}{\lambda_2}(D_2-d_Y(y,\haty))\Big)}\label{def:jmathy}\\
\nn&\Lambda(x,y|Q_W,Q_{\hat{X}|W},Q_{\hat{Y}|W},\lambda_1,\lambda_2,\gamma_1,\gamma_2)\\
&:=\log \frac{1}{\sum_{w}Q_W(w)\exp\Big(\lambda_1(R_1-\jmath(x,w,\hatx,Q_{\hatX|W}|\lambda_1,\gamma_1))+\lambda_2(R_2-\jmath(y,w,\haty,Q_{\hatY|W}|\lambda_2,\gamma_2))\Big)}\label{def:jmathxy}
\end{align}
We show the relationship between $F(P_{W|XY},Q_{W},Q_{\hat{X}|W},Q_{\hat{Y}|W})$ and $\Lambda(x,y|Q_W,Q_{\hat{X}|W},Q_{\hat{Y}|W},\lambda_1,\lambda_2)$ in the following Lemma.
\begin{lemma}
\label{linkflambda}
For any $Q_W,Q_{\hat{X}|W},Q_{\hat{Y}|W}$,
\begin{align}
\!\!\!\min_{P_{W|XY}\!\!\!P_{\hatX|XW}P_{\hatY|YW}}F(P_{W|XY},P_{\hatX|XW},P_{\hatY|YW},Q_{W},Q_{\hat{X}|W},Q_{\hat{Y}|W},D_1,D_2)=\mathbb{E}_{P_{XY}}\left[\Lambda(X,Y|Q_W,Q_{\hat{X}|W},Q_{\hat{Y}|W},\lambda_1^*,\lambda_2^*)\right],
\end{align}
where the minimization is achieved by $P_{W|XY}P_{\hatX|XW}P_{\hatY|YW}$ s.t. 
\begin{align}
P_{\hatX|XW}&=\frac{Q_{\hatX|W}(\hatx|w)\exp\Big(\frac{\gamma_1^*}{\lambda_1^*}(D_1-d_X(x,\hatx))\Big)}{\sum_{\hatx} Q_{\hatX|W}(\hatx|w)\exp\Big(\frac{\gamma_1^*}{\lambda_1^*}(D_1-d_X(x,\hatx))\Big)},\\
P_{\hatY|YW}&=\frac{Q_{\hatY|W}(\haty|w)\exp\Big(\frac{\gamma_2^*}{\lambda_2^*}(D_2-d_Y(y,\haty))\Big)}{\sum_{\haty} Q_{\hatY|W}(\haty|w)\exp\Big(\frac{\gamma_2^*}{\lambda_2^*}(D_2-d_Y(y,\haty))\Big)},\\
P_{W|XY}(w|xy)&=\frac{Q_W(w)\exp\Big(\lambda_1^*(R_1-\Lambda(x,w,\hatx,Q_{\hatX|W}|\lambda_1^*,\gamma_1^*))+\lambda_2^*(R_2-\Lambda(y,w,\haty,Q_{\hatY|W}|\lambda_2^*,\gamma_2^*))\Big)}{\sum_{w}Q_W(w)\exp\Big(\lambda_1^*(R_1-\Lambda(x,w,\hatx,Q_{\hatX|W}|\lambda_1^*,\gamma_1^*))+\lambda_2^*(R_2-\Lambda(y,w,\haty,Q_{\hatY|W}|\lambda_2^*,\gamma_2^*))\Big)}.
\end{align}
\end{lemma}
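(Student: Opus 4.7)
The plan is to exploit the sequential structure of $F$: each of the minimizations over $P_{\hatX|XW}$, $P_{\hatY|YW}$, and $P_{W|XY}$ couples to the others only through the $P_{W|XY}$-induced marginals $P_{XW}, P_{YW}$, so I can minimize in the inside-out order and apply the Donsker--Varadhan/Gibbs variational identity $\min_P [\,D(P\|Q) + \mathbb{E}_P[g]\,] = -\log \mathbb{E}_Q[e^{-g}]$ at each stage, with Gibbs minimizer $P^{*}(\cdot)\propto Q(\cdot)e^{-g(\cdot)}$.

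First I would fix $P_{W|XY}$ and minimize over $P_{\hatX|XW}$ and $P_{\hatY|YW}$ (which decouple). For each fixed $(x,w)$, the $\hatX$-relevant portion of $F$ equals $\lambda_1^{*}\bigl[D(P_{\hatX|XW}(\cdot|x,w)\|Q_{\hatX|W}(\cdot|w)) + (\gamma_1^{*}/\lambda_1^{*})\sum_{\hatx} P_{\hatX|XW}(\hatx|x,w)\,d_X(x,\hatx)\bigr]$, which is of Gibbs form with reference $Q_{\hatX|W}(\cdot|w)$ and tilt $g(\hatx)=(\gamma_1^{*}/\lambda_1^{*})d_X(x,\hatx)$. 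The variational identity gives minimum $-\lambda_1^{*}\log\sum_{\hatx}Q_{\hatX|W}(\hatx|w)\exp(-(\gamma_1^{*}/\lambda_1^{*})d_X(x,\hatx)) = \lambda_1^{*}\Lambda(x,w,\hatx,Q_{\hatX|W}|\lambda_1^{*},\gamma_1^{*})+\gamma_1^{*}D_1$, by the definition in \eqref{def:jmathx}, and the minimizer is exactly the tilted distribution stated in the lemma. Absorbing this into the $-\lambda_1^{*}R_1^{*}-\gamma_1^{*}D_1$ terms already present in $F$ collapses the total $\hatX$-contribution to $\lambda_1^{*}\,\mathbb{E}[\Lambda(X,W,\hatX,Q_{\hatX|W}|\lambda_1^{*},\gamma_1^{*}) - R_1^{*}]$ under $P_{XY}\times P_{W|XY}$. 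The $\hatY$ minimization is entirely symmetric and produces the analogous expression involving $\Lambda(y,w,\haty,Q_{\hatY|W}|\lambda_2^{*},\gamma_2^{*})$.

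Substituting these optimal inner layers, the residual objective as a functional of $P_{W|XY}$ reads $D(P_{W|XY}\|Q_W|P_{XY}) - \lambda_1^{*}\mathbb{E}[R_1^{*}-\Lambda(X,W,\hatX,Q_{\hatX|W}|\lambda_1^{*},\gamma_1^{*})] - \lambda_2^{*}\mathbb{E}[R_2^{*}-\Lambda(Y,W,\hatY,Q_{\hatY|W}|\lambda_2^{*},\gamma_2^{*})]$. For each fixed $(x,y)$ this is once more of Gibbs form, with reference $Q_W(w)$ and tilt $-\lambda_1^{*}(R_1^{*}-\Lambda(x,w,\hatx,\cdot))-\lambda_2^{*}(R_2^{*}-\Lambda(y,w,\haty,\cdot))$. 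A second application of the variational identity yields minimum value $-\log\sum_w Q_W(w)\exp(\lambda_1^{*}(R_1^{*}-\Lambda)+\lambda_2^{*}(R_2^{*}-\Lambda)) = \Lambda(x,y|Q_W,Q_{\hatX|W},Q_{\hatY|W},\lambda_1^{*},\lambda_2^{*})$, by \eqref{def:jmathxy}, with minimizing $P_{W|XY}$ matching the one stated in the lemma. Taking the $P_{XY}$-expectation then yields the claimed identity.

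I do not anticipate a major obstacle: the argument is essentially mechanical once one sees the sequential decoupling. The only care needed is constant-tracking, namely the correct cancellation of the $\pm\gamma_i^{*}D_i$ and $-\lambda_i^{*}R_i^{*}$ terms, and reconciling the $(D_i - d(\cdot))$ shift built into the definitions \eqref{def:jmathx}--\eqref{def:jmathxy} of $\Lambda$ with the un-shifted Gibbs normalization that naturally arises from the variational identity.
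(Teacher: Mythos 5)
Your proof is correct and takes essentially the same approach as the paper's: the paper invokes the log-sum inequality at the same two stages (inner $\hat X$/$\hat Y$ minimization, then outer $W$ minimization), and in the discrete setting the log-sum inequality is precisely the Gibbs/Donsker--Varadhan variational identity you use. Your phrasing simply names the minimizers directly from the Gibbs form rather than extracting them from the equality conditions of the log-sum bound, and you correctly track the $\gamma_i^* D_i$ cancellation needed to match the shifted definitions \eqref{def:jmathx}--\eqref{def:jmathxy}.
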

\begin{proof}
The proof is similar to the proof of Lemma 1 in \cite{watanabe2015second}. Invoking the log-sum inequality, we obtain
\begin{align}
&\nn F(P_{W|XY},P_{\hatX|XW},P_{\hatY|YW},Q_{W},Q_{\hat{X}|W},Q_{\hat{Y}|W},D_1,D_2)\\
\nn &=\sum_{x,y,w,}P_{XY}(x,y)P_{W|XY}(w|xy)\log\frac{P_{W|XY}(w|xy)}{Q_{W}(w)}\\
\nn &\qquad+\sum_{\hatx} P_{\hatX|XW}(\hatx|x,w)\Bigg(\lambda_1^*\left(\log \frac{P_{\hatX|XW}(\hatx|x,w)}{Q_{\hatX|W}(\hatx|w)}-R_1^*\right)+\gamma_1^*(d_X(x,\hatx)-D_1)\Bigg)\\
&\qquad +\sum_{\haty}P_{\hatY|YW}(\haty|y,w)\Bigg(\lambda_2^*\left(\log \frac{P_{\hatY|YW}(\hatx|x,w)}{Q_{\hatY|W}(\haty|w)}-R_2^*\right)+\gamma_2^*(d_Y(y,\haty)-D_2)\Bigg)\\
\nn&\geq\sum_{x,y,w}P_{XY}(x,y)P_{W|XY}(w|xy))\log\frac{P_{W|XY}(w|xy)}{Q_{W}(w)}+\lambda_1^*\log \frac{1}{\sum_{\hatx} Q_{\hatX|W}(\hatx|w)\exp\Big(\frac{\gamma_1^*}{\lambda_1^*}(D_1-d_X(x,\hatx))\Big)}-\lambda_1^*R_1\\
&\qquad+\lambda_2^*\log \frac{1}{\sum_{\haty} Q_{\hatY|W}(\haty|w)\exp\Big(\frac{\gamma_2^*}{\lambda_2^*}(D_2-d_Y(y,\haty))\Big)}-\lambda_2^*R_2\\
&\geq \sum_{x,y}P_{XY}(x,y)\log \frac{1}{\sum_{w}Q_W(w)\exp\Big(\lambda_1^*(R_1-\Lambda(x,w,\hatx,Q_{\hatX|W}|\lambda_1^*,\gamma_1^*))+\lambda_2^*(R_2-\Lambda(y,w,\haty,Q_{\hatY|W}|\lambda_2^*,\gamma_2^*))\Big)}\label{logsumfinal1}\\
&=\mathbb{E}[\Lambda(X,Y|Q_W,Q_{\hatX|W},Q_{\hatY|W},\lambda_1^*,\lambda2^*,\gamma_1^*,\gamma_2^*)]\label{logsumfinal2},
\end{align}  
where \eqref{logsumfinal1} follows from \eqref{def:jmathx} and \eqref{def:jmathy} while \eqref{logsumfinal2} follows from \eqref{def:jmathxy}.
\end{proof}
Invoking \eqref{linkrf}, we obtain
\begin{align}
\nn&\rvR_0(R_1^*,R_2^*,D_1,D_2|P_{XY})\\*
&=\min_{P_{W|XY}P_{\hatX|XW}P_{\hatY|YW}}
\min_{Q_W}\min_{Q_{\hat{X}|W}}\min_{Q_{\hat{Y}|W}}
F(P_{W|XY},P_{\hatX|XW},P_{\hatY|YW},Q_{W},Q_{\hat{X}|W},Q_{\hat{Y}|W},D_1,D_2)\\*
&\leq \min_{P_{W|XY}P_{\hatX|XW}P_{\hatY|YW}}F(P_{W|XY},P_{\hatX|XW},P_{\hatY|YW},P_{W}^*,P_{\hat{X}|W}^*,P_{\hat{Y}|W}^*,D_1,D_2)\\*
&\leq F(P_{W|XY}^*,P_{\hatX|XW}^*,P_{\hatY|YW}^*,P_{W}^*,P_{\hat{X}|W}^*,P_{\hat{Y}|W}^*,D_1,D_2)\\*
&=\rvR_0(R_1^*,R_2^*,D_1,D_2|P_{XY}).
\end{align}
Invoking Lemma \ref{linkflambda}, we obtain
\begin{align}
\rvR_0(R_1^*,R_2^*,D_1,D_2|P_{XY})&=\mathbb{E}_{P_{XY}}\left[\Lambda(X,Y|P^*_W,P^*_{\hat{X}|W},P^*_{\hat{Y}|W},\lambda_1^*,\lambda_2^*)\right],
\end{align}
and for $(w,\hatx,\haty)$ s.t. $P_W^*(w)P_{\hatX|W}^*(\hatx|w)P_{\hatY|W}^*(\haty|w)>0$,
\begin{align}
&\Lambda(x,y|P^*_W,P^*_{\hat{X}|W},P^*_{\hat{Y}|W},\lambda_1^*,\lambda_2^*)\\
&=\log\frac{P_{W|XY}^*(w|xy)}{P_{W}^*(w)}+\lambda_1^*\left(\Lambda(x,w,\hatx,P_{\hatX|W}^*|\lambda_1^*,\gamma_1^*)-R_1^*\right)+\lambda_2^*\left(\Lambda(y,w,\haty,P_{\hatY|W}^*|\lambda_2^*,\gamma_2^*)-R_2^*\right)\\
\nn&=\log\frac{P_{W|XY}^*(w|xy)}{P_{W}^*(w)}+\lambda_1^*\Bigg(\log \frac{P_{\hatX|XW}^*(\hatx|x,w)}{P_{\hatX|W}^*(\hatx|w)}+\frac{\gamma_1^*}{\lambda_1^*}(d_X(x,\hatx)-D_1-R_1^*)\Bigg)\\
&\qquad\qquad+\lambda_2^*\Bigg(\log \frac{P_{\hatY|YW}^*(\haty|y,w)}{P_{\hatY|W}^*(\haty|w)}+\frac{\gamma_2^*}{\lambda_2^*}(d_Y(y,\haty)-D_2)-R_2^*\Bigg)\\
\nn&=\log \frac{P_{W|XY}^*(w|xy)}{P_W^*(w)}+\lambda_1^*\log\frac{P_{\hatX|XW}^*(\hatx|x,w)}{P_{\hatX|W}^*(\hatx|w)}-\lambda_1^*R_1^*+\lambda_2^*\log\frac{P_{\hatY|YW}^*(\haty|y,w)}{P_{\hatY|W}^*(\haty|w)}-\lambda_2^*R_2^*\\
&\qquad+\gamma_1^*(d_X(x,\hatx)-D_1)+\gamma_2^*(d_Y(y,\haty)-D_2).
\label{averageobjrate}
\end{align}

Note that
\begin{align}
\jmath_{XY}(x,y|R_1^*,R_2^*,D_1,D_2,P_{XY}):=\Lambda(x,y|P^*_W,P^*_{\hat{X}|W},P^*_{\hat{Y}|W},\lambda_1^*,\lambda_2^*).
\end{align}
Hence, according to the properties of $\Lambda(x,y|P^*_W,P^*_{\hat{X}|W},P^*_{\hat{Y}|W},\lambda_1^*,\lambda_2^*)$, the properties of $\jmath_{XY}(x,y|R_1^*,R_2^*,D_1,D_2)$ in Lemma \ref{propertytilted} are proved.

\subsection{Proof of Lemma \ref{linkrjxy}}
\label{prooflinkrjxy}
Recall $Q_{W|XY}^*Q_{\hatX|XW}^*Q_{\hatY|YW}^*$ is an optimal test channel achieving the objective function $\rvR_0(R_1^*,R_2^*,D_1,D_2|\Gamma(Q_{XY}))$ (see \eqref{minkey2}) and  $Q_{W}^*, Q_{\hatX|W}^*,Q_{\hatY|W}^*$ are the induced (conditional) distributions. Invoking Lemma \ref{propertytilted}, we obtain
\begin{align}
\rvR_0(R_1^*,R_2^*,D_1,D_2|\Gamma(Q_{XY}))
&=\sum_{k=1}^m \Gamma_k(Q_{XY})\jmath_{XY}(k|R_1^*,R_2^*,D_1,D_2,\Gamma(Q_{XY})).
\end{align}
Hence,
\begin{align}
\frac{\partial \rvR_0(R_1^*,R_2^*,D_1,D_2|\Gamma(Q_{XY}))}{\partial \Gamma_i(Q_{XY})}\bigg|_{Q_{XY}=P_{XY}}
\nn&=\jmath_{XY}(i|R_1^*,R_2^*,D_1,D_2,\Gamma(P_{XY}))-\jmath_{XY}(m|R_1^*,R_2^*,D_1,D_2,\Gamma(P_{XY}))\\
&\qquad+\frac{\partial }{\partial \Gamma_i(Q_{XY})}\Bigg(\sum_{k=1}^m \Gamma_k(P_{XY})\jmath_{XY}(k|R_1^*,R_2^*,D_1,D_2,\Gamma(Q_{XY}))\Bigg)\label{derivativeb}.
\end{align}
We now focus on the third term in \eqref{derivativeb}. For $t=1,2$, denote
\begin{align}
\lambda_{t,Q}^*&=-\frac{\partial \rvR_0(R_1,R_2,D_1,D_2|\Gamma(Q_{XY}))}{\partial R_i}\bigg|_{(R_1,R_2)=(R_1^*,R_2^*)},\\
\gamma_{t,Q}^*&=-\frac{\partial \rvR_0(R_1,R_2,D_1',D_2'|\Gamma(Q_{XY}))}{\partial D_i'}\bigg|_{(D_1',D_2')=(D_1,D_2)}.
\end{align}
Invoking Lemma \ref{propertytilted}, we obtain
\begin{align}
\nn&\sum_{k=1}^m \Gamma_k(P_{XY})\jmath_{XY}(k|R_1^*,R_2^*,D_1,D_2,\Gamma(Q_{XY}))\\
&=\sum_{k=1}^m P_{XY}(x_k,y_k)\jmath_{XY}(k|R_1^*,R_2^*,D_1,D_2,\Gamma(Q_{XY}))\label{useparametric},\\
&\nn=\sum_{k=1}^m P_{XY}(x_k,y_k)\sum_{w,\hatx,\haty} P_{W|XY}^*(w|x_k,y_k)P_{\hatX|XW}^*(\hatx|x_k,w)P_{\hatY|YW}^*(\haty|y_k,w)\Bigg[\log\frac{Q_{W|XY}^*(w|x_k,y_k)}{Q_{W}^*(w)}\\
\nn&\qquad+\lambda_{1,Q}^*\Bigg(\log\frac{Q_{\hatX|XW}^*(\hatx|x_k,w)}{Q_{\hatX|W}^*(\hatx|w)}-R_1^*\Bigg)+\lambda_{2,Q}^*\left(\log\frac{Q_{\hatY|YW}^*(\haty|y_k,w)}{Q_{\hatY|W}^*(\haty|w)}-R_2^*\right)\vphantom{[\log\frac{Q_{W|XY}^*(w|xy)}{Q_{W}^*(w)}}\\
&\qquad+\gamma_{1,Q}^*(d_X(x_k,\hatx)-D_1)+\gamma_{2,Q}^*(d_Y(y_k,\haty)-D_2)\Bigg]\label{expectpq}.
\end{align}
where \eqref{useparametric} follows from the notation introduced in Section \ref{sec:notation} and where $P_{W|XY}^*P_{\hatX|XW}^*P_{\hatY|YW}^*$ achieve $\rvR_0(R_1^*,R_2^*,D_1,D_2|\Gamma(P_{XY}))$. If we let $Q_{XY}=P_{XY}$, then $Q_{W|XY}^*=P_{W|XY}^*$, $Q_{\hat{X}|XW}^*=P_{\hat{X}|XW}^*$ and $Q_{\hat{Y}|YW}^*=P_{\hat{Y}|YW}^*$. In the following, for ease of notation, we will use $Q=P$ for all the above relations.

We claim that
\begin{align}
\frac{\partial }{\partial \Gamma_i(Q_{XY})}\Bigg(\sum_{k=1}^m \Gamma_k(P_{XY})\jmath_{XY}(k|R_1^*,R_2^*,D_1,D_2,\Gamma(Q_{XY}))\Bigg)
&=0\label{derivetiveb2}.
\end{align}

Equation \eqref{derivetiveb2} holds for the following reasons:
\begin{itemize}
\item In a similar manner as \cite[Theorem 2.2]{kostina2013lossy} and noting that $\Gamma_m(Q_{XY})=1-\sum_{i=1}^{m-1}\Gamma_i(Q_{XY})$ and $\Gamma_i(Q_{XY})=Q_{XY}(x_i,y_i)$, we obtain
\begin{align}
\nn&\frac{\partial }{\partial \Gamma_i(Q_{XY})}\Bigg(\sum_{k=1}^m P_{XY}(x_k,y_k)\sum_w P_{W|XY}^*(w|x_k,y_k) \log \frac{Q_{W|XY}^*(w|x_k,y_k)}{Q_W^*(w)}\Bigg)\Bigg|_{Q=P}\\
\nn&=\frac{\partial }{\partial \Gamma_i(Q_{XY})}\Bigg(\sum_{k=1}^m P_{XY}(x_k,y_k)\sum_w P_{W|XY}^*(w|x_k,y_k) \Big(\log Q_{XY|W}^*(x_k,y_k|w)-\log Q_{XY}(x_k,y_k)\Big)\Bigg)\Bigg|_{Q=P}\\
\nn&=\frac{\partial }{\partial \Gamma_i(Q_{XY})}\Bigg\{\sum_w \sum_{k=1}^m P_W^*(w)P_{XY|W}^*(x_k,y_k|w) \Bigg(\frac{Q_{XY|W}^*(x_k,y_k|w)}{P_{XY|W}^*(x_k,y_k|w)}\Bigg)\Bigg\}\Bigg|_{Q=P}\log e\\
&\qquad-\frac{\partial }{\partial \Gamma_i(Q_{XY})}\Bigg(-\sum_{k=1}^{m-1}P_{XY}(x_k,y_k)\log Q_{XY}(x_k,y_k)-P_{XY}(x_m,y_m)\log Q_{XY}(x_m,y_m)\Bigg)\\
&=0\label{derivecommon}.
\end{align}
\item In a similar manner as \cite[(64)-(70)]{watanabe2015second}, we obtain
\begin{align}
\nn&\frac{\partial }{\partial \Gamma_i(Q_{XY})}\Bigg(\sum_{k=1}^m P_{XY}(x_k,y_k)\sum_{w,\hatx}P_{W|XY}^*(w|x_k,y_k)P_{\hatX|XW}^*(\hatx|x_k,w)\lambda_{1,Q}^*\left(\log
\frac{Q_{\hatX|XW}^*(\hatx|x_k,w)}{Q_{\hatX|W}^*(\hatx|w)}-R_1^*\right)\Bigg)\\
\nn&=\frac{\partial \lambda_{1,Q}^*}{\partial \Gamma_i(Q_{XY})}\Bigg|_{Q=P}\sum_{k=1}^m P_{XY}(x_k,y_k)\sum_{w,\hatx}P_{W|XY}^*(w|x_k,y_k)P_{\hatX|XW}^*(\hatx|x_k,w)\left(\log\frac{P_{\hatX|XW}^*(\hatx|x_k,w)}{P_{\hatX|W}^*(\hatx|w)}-R_1^*\right)\\
&\qquad+\lambda_{1,Q}^*\sum_{k=1}^m P_{XY}(x_k,y_k)\sum_{w,\hatx}P_{W|XY}^*(w|x_k,y_k)P_{\hatX|XW}^*(\hatx|x_k,w)\frac{\partial }{\partial \Gamma_i(Q_{XY})}\left(\log\frac{Q_{\hatX|XW}^*(\hatx|x_k,w)}{Q_{\hatX|W}^*(\hatx|w)}\right)\Bigg|_{Q=P}\\
&=0\label{derivelambda1q},
\end{align}
where \eqref{derivelambda1q} follows because: (i) under the optimal test channel, we have $R_1^*=I(X;\hatX|W)$; (ii) the second term in \eqref{derivelambda1q} equals to 0, which results from a similar manner to \eqref{derivecommon}.

Symmetrically, we have
\begin{align}
\nn&\frac{\partial }{\partial \Gamma_i(Q_{XY})}\Bigg(\sum_{k=1}^m P_{XY}(x_k,y_k)\sum_{w,\haty}P_{W|XY}^*(w|x_k,y_k)P_{\hatY|YW}^*(\haty|y_k,w)\lambda_{2,Q}^*\left(\log
\frac{Q_{\hatY|YW}^*(\haty|y_k,w)}{Q_{\hatY|W}^*(\haty|w)}-R_2^*\right)\Bigg)\\*
&=0.
\end{align}
\item Under the optimal test channel, $\mathbb{E}[d_X(X,\hatX)]=D_1$ and $\mathbb{E}[d_Y(Y,\hatY)]=D_2$. Thus, the last two terms in \eqref{expectpq} are zero.
\end{itemize}

%C
\subsection{Proof of Lemma \ref{propertyjointrd}}
\label{proofpropertyjointrd}
We offer the proof resembling \cite[Lemma 1]{watanabe2015second}. Note that $R_{XY}(P_{XY},D_1,D_2)$ is convex and non-increasing in $(D_1,D_2)$.
Hence,
\begin{align}
R_{XY}(P_{XY},D_1,D_2)&=\inf_{P_{\hat{X}\hat{Y}|XY}:\mathbb{E}[d_{X}(X,\hat{X})]\leq D_1,~\mathbb{E}[d_{Y}(Y,\hat{Y})]\leq D_2} I(XY;\hat{X}\hat{Y})
\end{align}
is a convex optimization problem. The dual problem is given by 
\begin{align}
\label{dualproblem}
R_{XY}(P_{XY},D_1,D_2)=\max_{\nu_1\geq 0,\nu_2\geq 0} \inf_{P_{\hat{X}\hat{Y}|XY}} I(XY;\hat{X}\hat{Y})+\nu_1(\mathbb{E}[d_{X}(X,\hat{X})]-D_1)+\nu_2(\mathbb{E}[d_{Y}(Y,\hat{Y})]-D_2).
\end{align}
Hence, the dual optimal values are $\nu_1^*$ in \eqref{defjrdnu1} and $\nu_2^*$ in \eqref{defjrdnu2}.

Given $Q_{\hat{X}\hat{Y}}$, define
\begin{align}
\nn&F(P_{\hat{X}\hat{Y}|XY},Q_{\hat{X}\hat{Y}},D_1,D_2)\\
&:=D(P_{\hat{X}\hat{Y}|XY}\|Q_{\hat{X}\hat{Y}}|P_{XY})+\nu_1^*\left(\mathbb{E}[d_{X}(X,\hat{X})]-D_1\right)+\nu_2^*\left(\mathbb{E}[d_{Y}(Y,\hat{Y})]-D_2\right)\\
&=I(XY;\hat{X}\hat{Y})+D(P_{\hat{X}\hat{Y}}\|Q_{\hat{X}\hat{Y}})+\nu_1^*\left(\mathbb{E}[d_{X}(X,\hat{X})]-D_1\right)+\nu_2^*\left(\mathbb{E}[d_{Y}(Y,\hat{Y})]-D_2\right).
\end{align}
Then considering the dual problem of $R_{XY}(P_{XY},D_1,D_2)$ in \eqref{dualproblem}, we obtain
\begin{align}
R_{XY}(P_{XY},D_1,D_2)=\inf_{Q_{\hat{X}\hat{Y}}}\inf_{P_{\hat{X}\hat{Y}|XY}}F(P_{\hat{X}\hat{Y}|XY},Q_{\hat{X}\hat{Y}},D_1,D_2).
\end{align}
For $\nu_1>0$ and $\nu_2>0$, define
\begin{align}
\Lambda(x,y|Q_{\hat{X}\hat{Y}},\nu_1,\nu_2)
:=-\log \mathbb{E}_{Q_{\hat{X}\hat{Y}}}\left[\exp\left(\nu_1(D_1-d_{X}(x,\hat{X}))-\nu_2 (D_2-d_{Y}(y,\hat{Y}))\right)\right].
\end{align}
We can relate $F(P_{\hat{X}\hat{Y}|XY},Q_{\hat{X}\hat{Y}},D_1,D_2)$ with $\Lambda(x,y|Q_{\hat{X}\hat{Y}},\nu_1,\nu_2)$ as follows.
\begin{align}
\min_{P_{\hat{X}\hat{Y}|XY}}F(P_{\hat{X}\hat{Y}|XY},Q_{\hat{X}\hat{Y}},D_1,D_2)
&=\mathbb{E}_{P_{XY}}\left[\Lambda(X,Y|Q_{\hat{X}\hat{Y}},\nu_1^*,\nu_2^*)\right],
\end{align}
where the minimization is achieved by the optimal test channel 
\begin{align}
P_{\hat{X}\hat{Y}|XY}^{*(Q_{\hat{X}\hat{Y}})}(\hat{x},\hat{y}|x,y)
&=Q_{\hat{X}\hat{Y}}(\hat{x},\hat{y})\exp\left(\Lambda(x,y|Q_{\hat{X}\hat{Y}},\nu_1^*,\nu_2^*)+\nu_1^*(D_1-d_{X}(x,\hatx))-\nu_2^* (D_2-d_Y(y,\haty))\right).
\end{align}
\begin{proof}
Invoking the log-sum inequality, we have
\begin{align}
\nn&F(P_{\hat{X}\hat{Y}|XY},Q_{\hat{X}\hat{Y}},D_1,D_2)\\
&=\sum_{x,y,\hat{x},\hat{y}}P_{XY}(x,y)P_{\hat{X}\hat{Y}|XY}(\hat{x},\hat{y}|x,y)\left(\log\frac{P_{\hat{X}\hat{Y}|XY}(\hat{x},\hat{y}|x,y)}{Q_{\hat{X}\hat{Y}}(\hat{x},\hat{y})}+\nu_1^*(d_{X}(x,\hatx)-D_1)+\nu_2^*(d_{Y}(y,\hat{y})-D_2)\right)\\
&=\sum_{x,y}P_{XY}(x,y)\sum_{\hat{x},\hat{y}}P_{\hat{X}\hat{Y}|XY}(\hat{x},\hat{y}|x,y)\log\frac{P_{\hat{X}\hat{Y}|XY}(\hat{x},\hat{y}|x,y)}{Q_{\hat{X}\hat{Y}}(\hat{x},\hat{y})\exp\left(\nu_1(D_1-d_{X}(x,\hatx))-\nu_2 (D_2-d_{Y}(y,\haty))\right)}\\
&\geq \sum_{x,y}P_{XY}(x,y)\log\frac{\sum_{\hat{x},\hat{y}}P_{\hat{X}\hat{Y}|XY}(\hat{x},\hat{y}|x,y)}{\sum_{\hat{x},\hat{y}}Q_{\hat{X}\hat{Y}}(\hat{x},\hat{y})\exp\left(\nu_1^*(D_1-d_{X}(x,\hatx))-\nu_2^* (D_2-d_{Y}(y,\haty))\right)}\\
&=\mathbb{E}_{P_{XY}}\left[\Lambda(X,Y|Q_{\hat{X}\hat{Y}},\nu_1^*,\nu_2^*)\right],
\end{align}
with equality if and only if $P_{\hat{X}\hat{Y}|XY}$ is $P_{\hat{X}\hat{Y}|XY}^{*(Q_{\hat{X}\hat{Y}})}$.
\end{proof}

Let $P_{\hat{X}\hat{Y}|XY}^*$ be the optimal channel achieving $R_{XY}(P_{XY},D_1,D_2)$ and $P_{\hat{X}\hat{Y}}^*$ be induced by $P_{\hat{X}\hat{Y}|XY}^*$ and $P_{XY}$. Then,
\begin{align}
R_{XY}(P_{XY},D_1,D_2)
&=\inf_{Q_{\hat{X}\hat{Y}}}\inf_{P_{\hat{X}\hat{Y}|XY}}F(P_{\hat{X}\hat{Y}|XY},Q_{\hat{X}\hat{Y}},D_1,D_2)\\
&\leq \inf_{Q_{\hat{X}\hat{Y}}}F(P_{\hat{X}\hat{Y}|XY}^*,Q_{\hat{X}\hat{Y}},D_1,D_2)\\
&=F(P_{\hat{X}\hat{Y}|XY}^*,P_{\hat{X}\hat{Y}}^*,D_1,D_2)\\
&=\mathbb{E}_{P_{XY}}\left[\Lambda(X,Y|P_{\hat{X}\hat{Y}}^*,\nu_1^*,\nu_2^*)\right]\\
&=R_{XY}(P_{XY},D_1,D_2).
\end{align}
Hence, we obtain
\begin{align}
P_{\hat{X}\hat{Y}|XY}^*(\hat{x},\hat{y}|x,y)
&=P_{\hat{X}\hat{Y}}^*(\hat{x},\hat{y})\exp\left(\Lambda(x,y|P_{\hat{X}\hat{Y}}^*,\nu_1^*,\nu_2^*)+\nu_1^*(D_1-d_{X}(x,\hat{X}))-\nu_2^* (D_2-d_{Y}(y,\hat{Y}))\right),
\end{align}
i.e.,
\begin{align}
\Lambda(x,y|P_{\hat{X}\hat{Y}}^*,\nu_1^*,\nu_2^*)=\log \frac{P_{\hat{X}\hat{Y}|XY}^*(\hat{x},\hat{y}|x,y)}{P_{\hat{X}\hat{Y}}^*(\hat{x},\hat{y})}+\nu_1^*(d_{X}(x,\hat{X})-D_1)-\nu_2^* (d_{Y}(y,\hat{Y})-D_2).
\end{align}
Note that $\imath_{XY}(x,y|D_1,D_2,P_{XY})=\Lambda(x,y|P_{\hat{X}\hat{Y}}^*,\nu_1^*,\nu_2^*)$. The proof is now complete.

%D
\subsection{Proof of Lemma \ref{panglosstilted}}
\label{proofpanglosstilted}
Considering a rate triplet on the Pangloss plane, i.e., $(R_0^*,R_1^*,R_2^*)\in\calR_{\mathrm{pg}}(D_1,D_2|P_{XY})$ and $R_0^*>0$, we obtain
\begin{align}
R_0^*=\rvR_0(R_1^*,R_2^*,D_1,D_2|P_{XY})=R_{XY}(P_{XY},D_1,D_2)-R_1^*-R_2^*\label{sumjointrd}.
\end{align}
Hence, $\lambda_1^*=\lambda_2^*=1$ and 
\begin{align}
\gamma_1^*=-\frac{\partial \rvR_0(R_1^*,R_2^*,D,D_2|P_{XY})}{\partial D}\bigg|_{D=D_1}
&=-\frac{\partial R_{XY}(P_{XY},D,D_2)}{\partial D}\bigg|_{D=D_1}=\nu_1^*,\\
\gamma_2^*=-\frac{\partial \rvR_0(R_1^*,R_2^*,D_1,D|P_{XY})}{\partial D}\bigg|_{D=D_2}
&=-\frac{\partial R_{XY}(P_{XY},D_1,D)}{\partial D}\bigg|_{D=D_1}=\nu_2^*.
\end{align} 
Let $\calP(P_{XY},D_1,D_2)$ be the set of all joint distributions $P_{XYW\hat{X}\hat{Y}}$ satisfying
\begin{itemize}
\item The $\calX\times\calY$-marginal is $P_{XY}$;
\item The conditional distribution $P_{\hat{X}\hat{Y}|XY}$ achieving $R_{XY}(P_{XY},D_1,D_2)$
\item The following Markov chains hold: $\hat{X}\to W\to \hat{Y}$ and $(X,Y)\to (\hat{X},\hat{Y})\to W$.
\end{itemize}
The lossy Wyner's common information \cite{viswanatha2014}, is
\begin{align}
C_{\mathrm{W}}(D_1,D_2|P_{XY})
&:=\min\left\{R_0:(R_0,R_1,R_2)\in\calR_{\mathrm{pg}}(D_1,D_2|P_{XY})\right\}\label{lossywci}\\
&=\min_{P_{XYW\hat{X}\hat{Y}}\in\calP(P_{XY},D_1,D_2)}\left\{I(X,Y;W)\right\}.
\end{align}

Denote the joint distribution achieving \eqref{lossywci} as $P_{XYW\hat{X}\hat{Y}}^*$. According to Corollary 1 in \cite{viswanatha2014}, the random variables $(XYW\hat{X}\hat{Y})$ following $P_{XYW\hat{X}\hat{Y}}^*$ satisfy the following Markov chains: $\hat{X}\to(X,Y,W)\to \hat{Y}$, $\hat{X}\to(X,W)\to Y$ and $\hat{Y}\to (Y,W) \to X$. All the distributions used in this proof are marginals of $P_{XYW\hat{X}\hat{Y}}^*$. Invoking Lemma \ref{propertytilted}, we obtain that for every $(w,\hatx,\haty)$ such that $P_W^*(w)P_{\hatX|W}^*(\hatx|w)P_{\hatY|W}^*(\haty|w)>0$,
\begin{align}
&\nn\jmath_{XY}(x,y|R_1^*,R_2^*,D_1,D_2,P_{XY})\\
&=\log \frac{P_{W|XY}^*(w|xy)}{P_{W}^*(w)}+\log\frac{P_{\hatX|XW}^*(\hatx|x,w)}{P_{\hatX|W}^*(\hatx|w)}+\log\frac{P_{\hatY|YW}^*(\haty|y,w)}{P_{\hatY|W}^*(\haty|w)}-R_1^*-R_2^*+\gamma_1^*(d_X(x,\hatx)-D_1)+\gamma_2^*(d_Y(y,\haty)-D_2)\\
&=\log \frac{P_{W|XY}^*(w|xy)P_{\hat{X}|XW}^*(\hat{x}|xw)P_{\hat{Y}|YW}^*(\hat{y}|yw)}{P_{W}^*(w)P_{\hat{X}|W}^*(\hat{x}|w)P_{\hat{Y}|W}^*(\hat{y}|w)}+\nu_1^*(d_{X}(x,\hat{x})-D_1)+\nu_2^*(d_{Y}(y,\hat{y})-D_2)-R_1^*-R_2^*\\
&=\log \frac{P_{\hat{X}\hat{Y}|XY}^*(\hat{x},\hat{y}|x,y)}{P_{\hat{X}\hat{Y}}^*(\hat{x},\hat{y})}+\nu_1^*(d_{X}(x,\hat{x})-D_1)+\nu_2^*(d_{Y}(y,\hat{y})-D_2)-R_1^*-R_2^*\label{markovchains}\\
&=\imath_{XY}(x,y|D_1,D_2,P_{XY})-R_1^*-R_2^*,
\end{align}
where \eqref{markovchains} follow from the Markov chains implied by $P_{XYW\hat{X}\hat{Y}}^*$ in \cite{viswanatha2014}.
\begin{comment}
i.e.,
\begin{align}
\frac{P_{W|XY}^*P_{\hat{X}|XW}^*P_{\hat{Y}|YW}^*}{P_{W}^*P_{\hat{X}|W}^*P_{\hat{Y}|W}^*}
&=\frac{P_{W|XY}^*P_{\hat{X}\hat{Y}|XYW}^*}{P_{W}^*P_{\hat{X}\hat{Y}|W}^*}\label{s1markov3}\\
&=\frac{P_{\hat{X}\hat{Y}XYW}^*}{P_{XY}P_{\hat{X}\hat{Y}W}^*}\label{s2}\\
&=\frac{P_{XY|\hat{X}\hat{Y}W}^*}{P_{XY}}\label{s3}\\
&=\frac{P_{XY|\hat{X}\hat{Y}}^*}{P_{XY}}\label{s4markov1}\\
&=\frac{P_{\hat{X}\hat{Y}|XY}^*}{P_{\hat{X}\hat{Y}}},
\end{align}
\end{comment}

%E
\subsection{Justification of Remark \ref{remarkprop} for $D_1=0$ and $D_2>0$}
\label{justifyremark}
For $D_1=0$ and $D_2>0$, the joint rate-distortion function is
\begin{align}
R_{XY}(P_{XY},0,D_2)
&=\min_{P_{\hat{X}\hat{Y}|XY}:\mathbb{E}[d_{X}(X,\hat{X})]\leq 0,~\mathbb{E}[d_{Y}(Y,\hat{Y})]\leq D_2} I(XY;\hat{X}\hat{Y})\\
&=\min_{P_{\hat{Y}|XY}:\mathbb{E}[d_{Y}(Y,\hat{Y})]\leq D_2} I(XY;X\hat{Y})\\
&=H(X)+R_{Y|X}(P_{XY},D_2).
\end{align}
The properties of $R_{XY}(P_{XY},0,D_2)$ is still valid with $(\hat{x},\hat{X})$ replaced by $(x,X)$ by invoking Lemma \ref{propertytilted}. Then we need to verify that Lemma \ref{panglosstilted} still holds. For $D_1=0$, recalling Lemma \ref{propertytilted} and Remark \ref{remarkmain}, we obtain
\begin{align}
\nn&\jmath_{XY}(x,y|R_1^*,R_2^*,D_1,D_2,P_{XY})\\*
&=\log\frac{P_{W|XY}^*(w|xy)}{P_{W}^*(w)}+\lambda_1^*\left(\log\frac{1}{P_{X|W}^*(x|w)}-R_1^*\right)+\lambda_2^*\left(\jmath_{Y|W}(y,D_2|w,P_{YW}^*)-R_2^*\right).
\end{align}
The rest of the proof is similar to Appendix \ref{proofpanglosstilted} by replacing $(\hat{x},\hat{X})$ replaced by $(x,X)$.

\subsection{Proof of Lemma \ref{achievable}}
\label{proofachievable}
\subsubsection{Properties on $Q_{W|XY}$ and $\calC_n$}
We first present Lemma 4 in \cite{watanabe2015second} which is the initial step for the proof of Lemma \ref{achievable}.
\begin{lemma}
\label{wtypesize}
Suppose $n$ is sufficiently large such that $(n+1)^4>n\log |\calX|\cdot|\calY|$. Given type $Q_{XY}\in\calP_{n}(\calX\times\calY)$ and any test channel $P_{W|XY}$, there exists a conditional type $Q_{W|XY}\in\calV_{n}(\calW,Q_{XY})$ such that for every triplet $(x,y,w)$ with $Q_{XY}(x,y)P_{W|XY}(w|xy)>0$,
\begin{align}
\label{approxtype}
\left|Q_{W|XY}(w|xy)-P_{W|XY}(w|xy)\right|\leq \frac{1}{nQ_{XY}(x,y)}.
\end{align}
Let $Q_{W}$ be the marginal type of $W$ induced by $Q_{XY}$ and $Q_{W|XY}$. In addition, there exists a set $\calC_n\subset\calT_{Q_{W}}$ such that
\begin{align}
|\calC_n|\leq \exp\left(nI(Q_{XY},Q_{W|XY})+|\calX|\cdot|\calY|\cdot|\calW|\log (n+1)\right),
\end{align}
and for each $(x^n,y^n)\in\calT_{Q_{XY}}$, there exists a $w^n\in\calC_n$ so that $(x^n,y^n,w^n)\in\calT_{Q_{XYW}}$, where $Q_{XYW}$ is the joint type induced by $Q_{XY}$ and $Q_{W|XY}$.
\end{lemma}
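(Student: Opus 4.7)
The plan is to prove the two assertions separately, following the template of \cite[Lemma 4]{watanabe2015second}.

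For the approximation claim, I would construct $Q_{W|XY}$ by a standard rounding procedure. Fix any $(x,y)$ with $Q_{XY}(x,y)>0$; since $Q_{XY}\in\calP_n(\calX\times\calY)$, the quantity $m_{xy}:=nQ_{XY}(x,y)$ is a positive integer. I need to produce nonnegative integers $\{N(w|x,y)\}_{w\in\calW}$ summing to $m_{xy}$ such that $N(w|x,y)/m_{xy}$ is close to $P_{W|XY}(w|xy)$. Hamilton's largest-remainder rounding applied to the vector $(m_{xy}P_{W|XY}(w|xy))_{w\in\calW}$ gives $|N(w|x,y)-m_{xy}P_{W|XY}(w|xy)|\leq 1$, which upon setting $Q_{W|XY}(w|xy):=N(w|x,y)/m_{xy}$ translates immediately into the bound \eqref{approxtype}. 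By construction $Q_{W|XY}\in\calV_n(\calW;Q_{XY})$, completing this part.

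For the covering claim, I would use a random-coding / random-covering argument. Let $Q_{XYW}:=Q_{XY}\times Q_{W|XY}$ and let $Q_W$ be its $\calW$-marginal. Set $L:=\lceil\exp(nI(Q_{XY},Q_{W|XY})+|\calX|\cdot|\calY|\cdot|\calW|\log(n+1))\rceil$ and draw $W_1^n,\ldots,W_L^n$ i.i.d.\ uniformly from $\calT_{Q_W}$. For any fixed $(x^n,y^n)\in\calT_{Q_{XY}}$, standard conditional type-size bounds \cite[Lemma 1.2.5]{csiszar2011information} give
\begin{align}
\Pr\bigl[(x^n,y^n,W_1^n)\in\calT_{Q_{XYW}}\bigr]
=\frac{|\calT_{Q_{W|XY}}(x^n,y^n)|}{|\calT_{Q_W}|}
\geq (n+1)^{-|\calX|\cdot|\calY|\cdot|\calW|}\exp(-nI(Q_{XY},Q_{W|XY})).
\end{align}
Hence the probability that \emph{no} $W_j^n$ covers this particular $(x^n,y^n)$ is at most $(1-p)^L\leq \exp(-pL)$, which by the choice of $L$ decays doubly exponentially. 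A union bound over the at most $\exp(nH(Q_{XY}))$ sequences in $\calT_{Q_{XY}}$, combined with the hypothesis $(n+1)^4>n\log(|\calX|\cdot|\calY|)$ (which allows the $|\calT_{Q_{XY}}|$ factor to be absorbed), shows that the expected number of uncovered source pairs is strictly less than one. Consequently some realization of $(W_1^n,\ldots,W_L^n)$ covers every $(x^n,y^n)\in\calT_{Q_{XY}}$; taking $\calC_n$ to be the set of distinct sequences in that realization yields the required covering set of the claimed cardinality.

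The only real obstacle is the careful bookkeeping of the $(n+1)$-polynomial factors arising from type-counting, so that the final exponent matches $|\calX|\cdot|\calY|\cdot|\calW|\log(n+1)$ exactly. Since the statement reproduces \cite[Lemma 4]{watanabe2015second} verbatim, the cleanest route is to simply invoke that result; the sketch above indicates what is being invoked and why the hypothesis $(n+1)^4>n\log(|\calX|\cdot|\calY|)$ appears (it is precisely what makes the union bound succeed).
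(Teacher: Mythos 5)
Your approximation-by-rounding argument is correct, and your concluding remark---that the cleanest route is to invoke Watanabe's result directly---is exactly what the paper does: no proof of this lemma appears in the paper, which simply quotes it as \cite[Lemma~4]{watanabe2015second} and proceeds.

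The random-covering sketch, however, does not close at the codebook size you have chosen, and the role of the hypothesis $(n+1)^4>n\log(|\calX|\cdot|\calY|)$ is misplaced. From $|\calT_{Q_{W|XY}}(x^n,y^n)|\geq (n+1)^{-|\calX|\cdot|\calY|\cdot|\calW|}\exp(nH(Q_{W|XY}|Q_{XY}))$ and $|\calT_{Q_W}|\leq\exp(nH(Q_W))$, the guaranteed per-codeword hit probability is $p\geq(n+1)^{-|\calX|\cdot|\calY|\cdot|\calW|}\exp(-nI(Q_{XY},Q_{W|XY}))$. With $L=\lceil\exp(nI+|\calX||\calY||\calW|\log(n+1))\rceil$ this gives only $pL\geq 1$, so $\exp(-pL)\leq e^{-1}$ is a constant---not doubly exponentially small---and the union bound over the (up to) $\exp(nH(Q_{XY}))$ pairs in $\calT_{Q_{XY}}$ leaves an expected number of uncovered pairs of order $\exp(nH(Q_{XY})-1)$, which is nowhere near less than one. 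No numerical hypothesis can absorb that after the fact; it has to be built into $L$. If you instead take $L=\lceil(n+1)^{|\calX||\calY||\calW|+4}\exp(nI)\rceil$, you get $pL\geq(n+1)^4>n\log(|\calX|\cdot|\calY|)\geq nH(Q_{XY})\geq\log|\calT_{Q_{XY}}|$, so $|\calT_{Q_{XY}}|\exp(-pL)<1$ and the derandomization succeeds. This is precisely the constant the paper extracts when it applies the lemma, namely $\frac{1}{n}\log|\calC_n|\leq I(Q_{XY},Q_{W|XY})+(|\calX|\cdot|\calY|\cdot|\calW|+4)\frac{\log(n+1)}{n}$; the lemma statement as displayed here, and your sketch, both omit the $+4$ polynomial factor that the covering argument actually requires. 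So the step where you claim double-exponential decay is where the proof would break.
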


Given type $Q_{XY}$, let $P_{W|XY}^{*}$ be an optimal test channel which achieves $\rvR_0(R_1,R_2,D_1,D_2|Q_{XY})$, i.e., $I(Q_{XY},P_{W|XY}^*)=\rvR_0(R_1,R_2,D_1,D_2|Q_{XY})$. Let $P_{XW}^{*}$ be the joint distribution induced by $Q_{XY}$ and $P_{W|XY}^{*}$. Note that $R_1\geq R_{X|W}(P_{XW}^{*},D_1)$ and $R_2\geq R_{Y|W}(P_{YW}^{*},D_2)$. Lemma \ref{wtypesize} shows that there exists a conditional type $Q_{W|XY}$ such that for every $(x,y,w)$ with $Q_{XY}(x,y)P_{W|XY}^{*}(w|xy)>0$,
\begin{align}
\left|Q_{W|XY}(w|xy)-P_{W|XY}^{*}(w|xy)\right|\leq \frac{1}{nQ_{XY}(x,y)}\label{typeoptimal},
\end{align}
and there exists a set $\calC_n\in\calT_{Q_{W}}$ with size
\begin{align}
\frac{1}{n}\log |\calC_n|\leq I(Q_{XY},Q_{W|XY})+\left(|\calX|\cdot|\calY|\cdot|\calW|+4\right)\frac{\log (n+1)}{n},
\end{align}
and with the property that for each $(x^n,y^n)$, there exists $w^n\in\calC_n$ satisfying $(x^n,y^n,w^n)\in\calT_{Q_{XYW}}$. 
Hence, we obtain from \eqref{typeoptimal} that
\begin{align}
\left\|Q_{XYW}-Q_{XY}\times P_{W|XY}^{*}\right\|_{1}\leq \frac{|\calX|\cdot|\calY|\cdot|\calW|}{n}.
\end{align}
According to Corollary 1 in \cite{watanabe2015second},
\begin{align}
\left|I\left(Q_{XY},Q_{W|XY}\right)-I\left(Q_{XY},P_{W|XY}^{*}\right)\right|
\leq \frac{2|\calX|\cdot|\calY|\cdot|\calW|\log n}{n}.
\end{align}
Hence, we conclude 
\begin{align}
\frac{1}{n}\log |\calC_n|\leq \rvR_0(R_1,R_2,D_1,D_2|Q_{XY})+\left(3|\calX|\cdot|\calY|\cdot|\calW|+4\right)\frac{\log (n+1)}{n}.
\end{align}

\subsubsection{Properties of $\calB_{\hat{X}}(w^n)$ and $\calB_{\hat{Y}}(w^n)$}
We modify the proof of \cite[Lemma 8]{no2016} and \cite[Corollary 1]{watanabe2015second} to prove the existence of $\calB_{\hat{X}}(w^n)$ and $\calB_{\hat{Y}}(w^n)$ in Lemma \ref{achievable}. We prove here only for the properties $\calB_{\hat{X}}(w^n)$ since the properties of $\calB_{\hat{Y}}(w^n)$ can be proved in a similar manner. Define $D_1 ^*=D_1-\frac{|\calX|\cdot|\calW|\cdot|\hat{\calX}|}{n}\overline{d}_{X}$. Let $Q_{XW}$ and $Q_{X|W}$ be induced by $Q_{XY}$ and $Q_{W|XY}$. Let $Q_{\hat{X}|XW}^*$ be the optimal test channel achieving $R_{X|W}(Q_{XW},D_1^*)$, i.e.,
\begin{align}
R_{X|W}(Q_{XW},D_1^*)=I(Q_{X|W},Q_{\hat{X}|XW}^*|Q_{W}),
\end{align}
and
\begin{align}
\mathbb{E}[d(X,\hat{X})]
&=\sum_{x,w,\hat{x}}Q_{XW}(x,w)Q_{\hat{X}|XW}^*(\hat{x}|xw)d_X(x,\hat{x})\leq D_1^*.
\end{align}
Following the same procedure to prove \eqref{approxtype} in Lemma \ref{wtypesize}, we can prove that there exists a conditional type $Q_{\hat{X}|XW}$ such that such that for all $(x,w,\hat{x})$,
\begin{align}
\left|Q_{\hat{X}|XW}(\hat{x}|xw)-Q_{\hat{X}|XW}^*(\hat{x}|xw)\right|\leq 
\frac{1}{nQ_{XW}(x,w)}\label{xratedis}.
\end{align}
Let $\calT_{Q_{\hat{X}|XW}}(x^n,w^n)$ be the conditional type class given $(x^n,w^n)$, i.e., $\{\hat{x}^n:(x^n,w^n,\hat{x}^n)\in\calT_{Q_{XYW}}\}$. Then the following lemma shows that $(x^n,\hat{x}^n)$ satisfies the distortion level at $D_1$.
\begin{lemma}
\label{uppdistortion}
For any $x^n\in\calT_{Q_{X|W}}(w^n)$, there exists $\hat{x}^n\in\calT_{Q_{\hat{X}|XW}}(x^n,w^n)$ such that 
\begin{align}
d_{X}(x^n,\hat{x}^n)\leq D_1.
\end{align}
\end{lemma}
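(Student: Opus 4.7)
The plan is to exploit the fact that for any $\hat{x}^n \in \calT_{Q_{\hat{X}|XW}}(x^n,w^n)$, the empirical average distortion $d_X(x^n,\hat{x}^n)$ is completely determined by the joint type $Q_{XW} \times Q_{\hat{X}|XW}$, so it suffices to bound a deterministic quantity rather than a random one. Specifically, whenever $x^n \in \calT_{Q_{X|W}}(w^n)$ and $\hat{x}^n \in \calT_{Q_{\hat{X}|XW}}(x^n,w^n)$, one has
\begin{align}
d_X(x^n,\hat{x}^n) = \sum_{x,w,\hat{x}} Q_{XW}(x,w)\, Q_{\hat{X}|XW}(\hat{x}|xw)\, d_X(x,\hat{x}).
\end{align}
The first step will be to verify that the conditional type class $\calT_{Q_{\hat{X}|XW}}(x^n,w^n)$ is nonempty, which follows from the construction of $Q_{\hat{X}|XW}$ as a legitimate conditional type in $\calV_n(\hat{\calX};Q_{XW})$, by the same argument as in Lemma \ref{wtypesize}.

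Next, I will compare the expectation under the type $Q_{\hat{X}|XW}$ with that under the optimal channel $Q_{\hat{X}|XW}^*$ achieving $R_{X|W}(Q_{XW},D_1^*)$, for which by definition
\begin{align}
\sum_{x,w,\hat{x}} Q_{XW}(x,w)\, Q_{\hat{X}|XW}^*(\hat{x}|xw)\, d_X(x,\hat{x}) \leq D_1^*.
\end{align}
Writing the difference of the two expectations and invoking the approximation bound \eqref{xratedis}, namely $|Q_{\hat{X}|XW}(\hat{x}|xw) - Q_{\hat{X}|XW}^*(\hat{x}|xw)| \leq \frac{1}{nQ_{XW}(x,w)}$, the $Q_{XW}(x,w)$ factors cancel, so
\begin{align}
\Big|\sum_{x,w,\hat{x}} Q_{XW}(x,w)[Q_{\hat{X}|XW}-Q_{\hat{X}|XW}^*](\hat{x}|xw)\,d_X(x,\hat{x})\Big| \leq \frac{|\calX|\cdot|\calW|\cdot|\hat{\calX}|\,\overline{d}_X}{n}.
\end{align}

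Combining the two inequalities yields
\begin{align}
d_X(x^n,\hat{x}^n) \leq D_1^* + \frac{|\calX|\cdot|\calW|\cdot|\hat{\calX}|\,\overline{d}_X}{n} = D_1,
\end{align}
by the choice of $D_1^* = D_1 - \frac{|\calX|\cdot|\calW|\cdot|\hat{\calX}|}{n}\overline{d}_X$ made just before \eqref{xratedis}. There is no real obstacle here beyond carefully tracking the two approximations: (i) the type-approximation of the optimal test channel, which introduces an $O(1/n)$ error on each $(x,w,\hat{x})$ that accumulates linearly in $|\calX|\cdot|\calW|\cdot|\hat{\calX}|$; and (ii) the built-in slack $D_1 - D_1^*$ chosen precisely to absorb this accumulated error. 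The nonemptiness of the conditional type class is a standard method-of-types fact but worth stating explicitly so that the ``there exists $\hat{x}^n$'' claim in the statement is unambiguous.
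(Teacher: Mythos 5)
Your proof is correct and takes essentially the same approach as the paper: both express $d_X(x^n,\hat{x}^n)$ in terms of the joint type, substitute the approximation bound \eqref{xratedis} against the optimal test channel $Q_{\hat{X}|XW}^*$, and absorb the accumulated $O(1/n)$ error into the slack $D_1 - D_1^*$. The only cosmetic differences are that you phrase the substitution via an absolute-value/triangle-inequality bound where the paper uses the one-sided inequality $Q_{\hat{X}|XW} \le Q_{\hat{X}|XW}^* + \frac{1}{nQ_{XW}}$ directly, and you add the (correct, useful) observation that nonemptiness of the conditional type class underlies the existential claim.
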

The proof of Lemma \ref{uppdistortion} is similar to \cite[Lemma 17]{no2016} and is deferred to the end of this subsection. Let $Q_{\hat{X}|W}$ be induced by $Q_{\hat{X}|XW}$ and $Q_{XW}$, i.e.,
\begin{align}
Q_{\hat{X}|W}(\hat{x}|w)=\sum_{x}Q_{X|W}(x|w)Q_{\hat{X}|XW}(x|xw).
\end{align}
Define the sets $\calA(w^n)=\calT_{Q_{\hat{X}|W}}(w^n)$ and $\calA(x^n,w^n)=\calT_{Q_{\hat{X}|XW}}(x^n,w^n)$. Given $w^n$, we randomly and uniformly generate $M_1$ codewords $(Z_1,Z_2,\ldots,Z_{M_1})$ from $\calA(w^n)$ to form the codebook $\calZ^{M_1}(w^n)$. Define the set of source sequences in $\calX^n$ that are not $D_1$-covered by the codebook $\calZ^{M_1}(w^n)$ as
\begin{align}
\calU_{X}(\calZ^{M_1},w^n):=\{x^n:(x^n,w^n)\in\calT_{Q_{XW}},d_{X}(x^n,Z_i)>D_1,~\forall~i\in[1:M_1]\}.
\end{align}
Following standard arguments (e.g.,  \cite{csiszar2011information}), we now upper bound the average size of $\calU_{X}(\calZ^{M_1},w^n)$  as follows
\begin{align}  
\nn &\mathbb{E}\left[\calU_{X}(\calZ^{M_1},w^n)\right]\\
&=\sum_{x^n\in\calT_{Q_{X|W}}(w^n)}\left(1-\Pr\left(d_{X}(x^n,Z_1)\right)\right)^{M_1}\\
&=\sum_{x^n\in\calT_{Q_{X|W}}(w^n)}\left(1-\frac{|\calA(x^n,w^n)|}{|\calA(w^n)|}\right)^{M_1}\\
&\leq \sum_{x^n\in\calT_{Q_{X|W}}(w^n)}\exp\left(-\frac{|\calA(x^n,w^n)|}{|\calA(w^n)|}M_1\right)\\
&\leq \sum_{x^n\in\calT_{Q_{X|W}}(w^n)}\exp\left(-M_1(n+1)^{-|\calX|\cdot|\calW|\cdot|\hat{\calX}|}\exp\left(nH(Q_{\hat{X}|XW}|Q_{XW})-nH(Q_{\hat{X}|W}|Q_{W})\right)\right)\\
&= |\calT_{Q_{X|W}}(w^n)|\exp\left(-M_1(n+1)^{-|\calX|\cdot|\calW|\cdot|\hat{\calX}|}\exp\left(-nI(Q_{X|W},Q_{\hat{X}|XW}|Q_{W})\right)\right)\\
&\leq \exp\left(-M_1(n+1)^{-|\calX|\cdot|\calW|\cdot|\hat{\calX}|}\exp\left(-nI(Q_{X|W},Q_{\hat{X}|XW}|Q_{W})\right)+nH(Q_{X|W}|Q_{W})\right).
\end{align}
Now choose $M_1$ such that
\begin{align}
M_1\leq (n+1)^{|\calX|\cdot|\calW|\cdot|\hat{\calX}|+4}\exp\left(nI(Q_{X|W},Q_{\hat{X}|XW}|Q_{W})\right).
\end{align}
Hence, for sufficiently large $n$ such that $nH(Q_{X|W}|Q_{W})\leq n\log|\calX|<(n+1)^4$, we have
\begin{align}
\mathbb{E}\left[\calU_{X}(\calZ^{M_1},w^n)\right]<1.
\end{align}
Hence, there exists set $\calB_{\hat{X}}(w^n)\in\hat{\calX}^n$ such that
\begin{align}
\frac{1}{n}\log |\calB_{\hat{X}}(w^n)|\leq I(Q_{X|W},Q_{\hat{X}|XW}|Q_{W})+\frac{(|\calX|\cdot|\calW|\cdot|\hat{\calX}|+4)\log n}{n},
\end{align}
and for every $x^n\in\calT_{Q_{X|W}}(w^n)$, there exists $\hat{x}^n\in\calB_{\hat{X}}(w^n)$ satisfying $d_{X}(x^n,\hat{x^n})\leq D_1$. Then we bound the difference between $I(Q_{X|W},Q_{\hat{X}|XW}|Q_{W})$ and $I(Q_{X|W},Q_{\hat{X}|XW}^*|Q_{W})$.

\begin{lemma}
\label{diffmutual}
\begin{align}
\left|I(Q_{X|W},Q_{\hat{X}|XW}|Q_{W})-I(Q_{X|W},Q_{\hat{X}|XW}^*|Q_{W})\right|\leq \frac{2|\calX|\cdot|\calW|\cdot|\hat{\calX}|\log n}{n}.
\end{align}
\end{lemma}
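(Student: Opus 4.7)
\medskip

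\noindent\textbf{Proof proposal for Lemma \ref{diffmutual}.} The plan is to reduce the bound on the difference of conditional mutual informations to a bound on the total variation distance between the joint distributions $Q_{XW}Q_{\hat{X}|XW}$ and $Q_{XW}Q_{\hat{X}|XW}^*$, and then invoke the standard continuity of entropy. This is essentially the same calculation that was used to obtain $\frac{2|\calX|\cdot|\calY|\cdot|\calW|\log n}{n}$ immediately after Lemma~\ref{wtypesize} via \cite[Corollary~1]{watanabe2015second}, so in principle one may simply appeal to that corollary.

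First, I would translate the pointwise estimate \eqref{xratedis} into an $\ell_1$-bound on joint distributions. Multiplying \eqref{xratedis} by $Q_{XW}(x,w)$ and summing over all $(x,w,\hat{x})$ with $Q_{XW}(x,w) > 0$ yields
\begin{align}
\bigl\|Q_{XW}Q_{\hat{X}|XW} - Q_{XW}Q_{\hat{X}|XW}^*\bigr\|_1 \;\le\; \frac{|\calX|\cdot|\calW|\cdot|\hat{\calX}|}{n}.
\end{align}
Marginalizing to the $(\hat{X},W)$ coordinates, the induced marginals $Q_{\hat{X}W}$ and $Q_{\hat{X}W}^*$ also satisfy the same $\ell_1$ bound, and their $W$-marginals coincide (both equal $Q_W$).

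Next, I would write each conditional mutual information as a difference of conditional entropies,
\begin{align}
I(Q_{X|W},Q_{\hat{X}|XW}|Q_W) &= H(Q_{\hat{X}|W}|Q_W) - H(Q_{\hat{X}|XW}|Q_{XW}),\\
I(Q_{X|W},Q_{\hat{X}|XW}^*|Q_W) &= H(Q_{\hat{X}|W}^*|Q_W) - H(Q_{\hat{X}|XW}^*|Q_{XW}),
\end{align}
and apply the standard continuity bound for entropy (e.g., \cite[Lemma~2.7]{csiszar2011information}): if $\|P-Q\|_1 \le \theta \le 1/2$ on an alphabet $\calZ$, then $|H(P) - H(Q)| \le -\theta \log(\theta/|\calZ|)$. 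Taking $\theta = \frac{|\calX|\cdot|\calW|\cdot|\hat{\calX}|}{n}$ (which is at most $1/2$ for $n$ large) and applying the bound to each of the two joint-entropy differences (using the joint-distribution $\ell_1$ bound above), one gets that each term is of order $\frac{|\calX|\cdot|\calW|\cdot|\hat{\calX}|\log n}{n}$. Combining the two terms via the triangle inequality yields the claimed bound with the factor $2$.

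The main obstacle is simply to verify that the resulting constant matches the $2|\calX|\cdot|\calW|\cdot|\hat{\calX}|$ in the statement; this requires either absorbing the $\log |\calX|\cdot|\calW|\cdot|\hat{\calX}|$ inside $\log n$ for $n$ large (so that $-\theta \log(\theta/|\calZ|) \le \theta \log n$) or mildly adjusting constants. Since the same argument has already been deployed in the sentence preceding \eqref{typeoptimal}, the cleanest presentation is to simply cite \cite[Corollary~1]{watanabe2015second} with the appropriate alphabet $\calX\times\calW\times\hat{\calX}$ playing the role of $\calX\times\calY\times\calW$ there, and absorb the verification of the constant into that reference.
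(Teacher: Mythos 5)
Your proof is correct and establishes the same bound, but it takes a different route from the paper's. The paper works \emph{conditionally}: for the term $H(\hat X|W)$ it bounds $\|Q_{\hat X|W=w}-Q^*_{\hat X|W=w}\|_1 \le |\calX|\cdot|\hat\calX|/(nQ_W(w))$ for each $w$, applies \cite[Lemma 1.2.7]{csiszar2011information} to each conditional distribution separately, and sums weighted by $Q_W(w)$; for the term $H(\hat X|XW)$ it instead invokes the pointwise inequality $|f(x)-f(y)|\le f(|x-y|)$ with $f(t)=-t\log t$ and sums over $(x,w,\hat x)$. You instead observe that $Q_W$ and $Q_{XW}$ are common to both channels, so $H(Q_{\hat X|W}|Q_W)-H(Q^*_{\hat X|W}|Q_W)=H(Q_{\hat XW})-H(Q^*_{\hat XW})$ and likewise for $H(\hat X|XW)$, reduce to an $\ell_1$ bound $\|Q_{XW\hat X}-Q^*_{XW\hat X}\|_1\le |\calX|\cdot|\calW|\cdot|\hat\calX|/n$ on the full joint, and apply the entropy-continuity lemma once to each joint distribution. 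Both give the factor $2|\calX|\cdot|\calW|\cdot|\hat\calX|(\log n)/n$; your version is a bit cleaner and more uniform (one lemma applied twice, rather than two different pointwise estimates), and it sidesteps the delicate issue that the conditional $\ell_1$ bound $|\calX|\cdot|\hat\calX|/(nQ_W(w))$ in the paper's argument is not guaranteed to satisfy the $\le 1/2$ hypothesis of \cite[Lemma 1.2.7]{csiszar2011information} when $Q_W(w)$ is small, whereas your global $\theta=O(1/n)$ does for $n$ large. The appeal to \cite[Corollary 1]{watanabe2015second} with the alphabet $\calX\times\calW\times\hat\calX$ is indeed the same mechanism used for the $I(Q_{XY},Q_{W|XY})$ bound in the same appendix, so citing it directly, as you suggest, is a legitimate alternative to writing out the calculation.
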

The proof of Lemma \ref{diffmutual} is similar to \cite[Lemma 18]{no2016} and is given in Appendix \ref{proofdiffmutual}. Invoking Lemma \ref{diffmutual}, we have proved that 
\begin{align}
\frac{1}{n}\log |\calB_{\hat{X}}(w^n)|
&\leq I(Q_{X|W},Q_{\hat{X}|XW}^*|Q_{W})+\frac{(3|\calX|\cdot|\calW|\cdot|\hat{\calX}|+4)\log n}{n}\\
&= R_{X|W}(Q_{XW},D_1^*)+\frac{(3|\calX|\cdot|\calW|\cdot|\hat{\calX}|+4)\log n}{n}\label{provestep1}.
\end{align}

The next step to prove Lemma \ref{achievable} is to bound the difference between $R_{X|W}(Q_{XW},D_1^*)$ and $R_{X|W}(Q_{XW},D_1)$.
\begin{lemma}
\label{continuityd2}
For $n$ sufficiently large such that $\log n\geq \frac{|\calX|\cdot|\calW|\cdot|\hat{\calX}|\overline{d}_{X}\log|\calX|}{D_1}$,  we obtain
\begin{align}
R_{X|W}(Q_{XW},D_1^*)\leq R_{X|W}(Q_{XW},D_1)+\frac{\log n}{n}\label{provestep2}.
\end{align}
\end{lemma}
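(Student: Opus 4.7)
The plan is to prove uniform continuity of $R_{X|W}(Q_{XW},\cdot)$ at $D_1$ via an explicit test-channel perturbation argument: given the optimizer achieving distortion exactly $D_1$, mix it with a zero-distortion reconstruction map so that the mixture achieves the smaller distortion level $D_1^*=D_1-\frac{|\calX|\cdot|\calW|\cdot|\hat\calX|}{n}\overline{d}_X$, while paying only a small penalty in the conditional mutual information. The quantitative condition on $n$ in the lemma will then follow by matching the mixing weight $\alpha$ against the slack $\log n/n$.

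More concretely, I would first invoke the assumption (from the beginning of Section~\ref{existingresult}) that for each $x\in\calX$ there exists $\hat x\in\hat\calX$ with $d_X(x,\hat x)=0$. Pick any such selector $\pi:\calX\to\hat\calX$ and define the deterministic channel $P^{(0)}_{\hat X|XW}(\hat x|x,w):=\mathbf{1}\{\hat x=\pi(x)\}$, which incurs zero distortion. Let $P^*_{\hat X|XW}$ achieve $R_{X|W}(Q_{XW},D_1)$, so that $\mathbb{E}[d_X(X,\hat X)]\le D_1$ under $Q_{XW}\times P^*_{\hat X|XW}$. For $\alpha\in[0,1]$, set
\begin{align}
P^{(\alpha)}_{\hat X|XW}:=(1-\alpha)P^*_{\hat X|XW}+\alpha P^{(0)}_{\hat X|XW}.
\end{align}
Then $\mathbb{E}_{\alpha}[d_X(X,\hat X)]\le(1-\alpha)D_1$, so choosing $\alpha=\frac{|\calX|\cdot|\calW|\cdot|\hat\calX|\overline{d}_X}{nD_1}$ ensures $\mathbb{E}_\alpha[d_X(X,\hat X)]\le D_1^*$. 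Hence $P^{(\alpha)}_{\hat X|XW}$ is feasible for $R_{X|W}(Q_{XW},D_1^*)$.

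Next I would bound the resulting rate. Because $(P_{X|W}, P_{\hat X|XW})\mapsto I(P_{X|W},P_{\hat X|XW}\mid Q_W)$ is convex in the second argument for fixed $Q_{XW}$,
\begin{align}
I(Q_{X|W},P^{(\alpha)}_{\hat X|XW}\mid Q_W)
&\le (1-\alpha)\,I(Q_{X|W},P^*_{\hat X|XW}\mid Q_W)+\alpha\,I(Q_{X|W},P^{(0)}_{\hat X|XW}\mid Q_W)\\
&\le R_{X|W}(Q_{XW},D_1)+\alpha\log|\calX|,
\end{align}
where the second line uses $I(Q_{X|W},P^{(0)}_{\hat X|XW}\mid Q_W)\le H(\hat X\mid W)\le H(X\mid W)\le\log|\calX|$ since $\hat X=\pi(X)$ is a deterministic function of $X$. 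Substituting the choice of $\alpha$ gives
\begin{align}
R_{X|W}(Q_{XW},D_1^*)\le R_{X|W}(Q_{XW},D_1)+\frac{|\calX|\cdot|\calW|\cdot|\hat\calX|\overline{d}_X\log|\calX|}{nD_1}.
\end{align}
Finally I would invoke the hypothesis $\log n\ge\frac{|\calX|\cdot|\calW|\cdot|\hat\calX|\overline{d}_X\log|\calX|}{D_1}$, which is precisely what is needed to absorb the last term into $\frac{\log n}{n}$, completing the proof.

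I do not foresee any serious obstacle here: the construction is explicit, convexity of mutual information in the channel is standard, and the $n$-threshold in the statement is tailored exactly to the bound produced by this perturbation. The only minor subtlety to double-check is that the mixture $P^{(\alpha)}_{\hat X|XW}$ is genuinely feasible (which is immediate from linearity of expected distortion and the choice of $\alpha$) and that we use convexity in the correct slot; concavity in the input distribution would be the wrong property.
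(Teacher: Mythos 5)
Your proof is correct, and it arrives at exactly the same intermediate bound $\frac{|\calX|\cdot|\calW|\cdot|\hat\calX|\,\overline{d}_X\log|\calX|}{nD_1}$ as the paper does, so the threshold on $n$ falls out identically. The route is mildly different in presentation: the paper invokes the convexity and monotonicity of $D\mapsto R_{X|W}(Q_{XW},D)$ as known properties and compares secant slopes, bounding
\begin{align}
\frac{R_{X|W}(Q_{XW},D_1^*)-R_{X|W}(Q_{XW},D_1)}{D_1-D_1^*}
\le \frac{R_{X|W}(Q_{XW},0)-R_{X|W}(Q_{XW},D_1)}{D_1}
\le \frac{\log|\calX|}{D_1},
\end{align}
whereas you re-derive the convexity inline by mixing the optimizer for $D_1$ with a zero-distortion deterministic channel and applying convexity of $I(X;\hatX|W)$ in $P_{\hatX|XW}$. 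Your inequality $I(Q_{X|W},P^{(0)}_{\hatX|XW}\mid Q_W)\le\log|\calX|$ plays exactly the role of the paper's $R_{X|W}(Q_{XW},0)\le\log|\calX|$, and your mixture with weight $\alpha=\frac{D_1-D_1^*}{D_1}$ reproduces the secant-slope estimate. In short, yours is the constructive, self-contained version of the same argument; the paper's is the abbreviated version that cites convexity of the (conditional) rate-distortion function as a standard fact. Either is acceptable, and you correctly flagged the one real pitfall (using convexity in the channel slot, not concavity in the source slot).
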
 
The proof of Lemma \ref{continuityd2} is similar to \cite[Lemma 19]{no2016} and is given in the end of this subsection. 

The remaining step to prove Lemma \ref{achievable} is to bound the difference between $R_{X|W}(Q_{XW},D_1)$ and $R_{X|W}(P_{XW}^*,D_1)$. Invoking \eqref{typeoptimal}, the difference between $Q_{XW}(x,w)$ and $P_{XW}^{*}(x,w)$ is bounded as
\begin{align}
\left|Q_{XW}(x,w)-P_{XW}^{*}(x,w)\right|
&=\left|\sum_{y}Q_{XY}(x,y)\left(Q_{W|XY}(w|xy)-P_{W|XY}^{*}(w|xy)\right)\right|\\
&\leq \sum_{y}Q_{XY}(x,y)\left|\left(Q_{W|XY}(w|xy)-P_{W|XY}^{*}(w|xy)\right)\right|\\
&\leq \sum_{y}Q_{XY}(x,y)\frac{1}{nQ_{XY}(x,y)}\\*
&=\frac{|\calY|}{n}.
\end{align}
Hence,
\begin{align}
\left\|Q_{XW}-P_{XW}^{*}\right\|_{1}
=\sum_{x,w}\left|Q_{XW}(x,w)-P_{XW}^{*}(x,w)\right|\leq \frac{|\calX|\cdot|\calY|\cdot|\calW|}{n}.
\end{align}
We now present a uniform continuity lemma for the conditional rate-distortion function. This serves to bound the difference between $R_{X|W}(Q_{XW},D_1)$ and $R_{X|W}(P_{XW}^*,D_1)$.
\begin{lemma}
\label{continuityp}
Given distortion measure $d:\calX\times\hat{\calX}\to[0,\infty]$, we define $\underline{d}:=\min_{x,\hat{x}:d(x,\hat{x})>0}d(x,\hat{x})$ and $\overline{d}:=\max_{x,\hat{x}}d(x,\hat{x})$. If the distortion measure $d$ satisfies that for each $x\in\calX$, there exists $\hat{x}\in\hat{\calX}$ such that $d(x,\hat{x})=0$, then for any two joint distributions $P_{XW}$ and $Q_{XW}$,
\begin{align}
\left|R_{X|W}(P_{XW},D)-R_{X|W}(Q_{XW},D)\right|\leq 10\frac{\overline{d}}{\underline{d}}\left\|P_{XW}-Q_{XW}\right\|_{1}\log \frac{|\calX|\cdot|\calW|\cdot|\hat{\calX|}}{\left\|P_{XW}-Q_{XW}\right\|_{1}}.
\end{align}
\end{lemma}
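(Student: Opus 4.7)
The plan is to adapt the Palaiyanur--Sahai uniform-continuity argument \cite{palaiyanur2008uniform} (proved for the unconditional rate--distortion function) to account for the side information $W$. Write $\Delta:=\|P_{XW}-Q_{XW}\|_1$, and assume without loss of generality that $R_{X|W}(Q_{XW},D)\ge R_{X|W}(P_{XW},D)$. When $\Delta$ is larger than a fixed constant the right-hand side already exceeds $\log|\hatcalX|$, so the bound is trivially true; I therefore focus on the regime $\Delta\ll 1$. The strategy is to take a near-optimal test channel for $Q_{XW}$, apply it to the source $P_{XW}$, and then repair the resulting joint so that it remains admissible for $P_{XW}$ at distortion level $D$. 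The claimed bound then falls out of two standard continuity estimates.

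\textbf{Step 1 (transfer the test channel).} Fix $\epsilon>0$ and pick $P_{\hat X|XW}^{*}$ with $I_Q(X;\hat X|W)\le R_{X|W}(Q_{XW},D)+\epsilon$ and $\bbE_Q[d(X,\hat X)]\le D$. The joint laws $P_{XW}P_{\hat X|XW}^{*}$ and $Q_{XW}P_{\hat X|XW}^{*}$ differ in total variation by exactly $\Delta$.

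\textbf{Step 2 (continuity of conditional mutual information).} Using the tight continuity of conditional entropy ($|H(P)-H(Q)|\le \|P-Q\|_1\log(|\mathcal{A}|/\|P-Q\|_1)$ for $\|P-Q\|_1\le 1/\rme$, see \cite[Lemma 2.7]{csiszar2011information}) applied separately to $H(X|W)$, $H(\hat X|W)$ and $H(X,\hat X|W)$, I obtain, for an absolute constant $C_1$,
\begin{align}
\bigl|I_{P}(X;\hat X|W)-I_{Q}(X;\hat X|W)\bigr|\le C_1\,\Delta\log\frac{|\calX|\cdot|\calW|\cdot|\hatcalX|}{\Delta}.
\end{align}

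\textbf{Step 3 (restore admissibility and control the extra rate).} Under $P_{XW}P_{\hat X|XW}^{*}$, the expected distortion may exceed $D$ by at most $\overline d\cdot\Delta$. Let $\hat x_0(x)$ denote a zero-distortion reproduction for $x$, which exists by hypothesis, and form the perturbed test channel
\begin{align}
P'_{\hat X|XW}(\hat x|x,w)=(1-\alpha)\,P_{\hat X|XW}^{*}(\hat x|x,w)+\alpha\,\mathbf{1}\{\hat x=\hat x_0(x)\}.
\end{align}
Choose the mixing weight $\alpha$ of order $\overline d\,\Delta/\underline d$, which is enough to restore $\bbE_P[d(X,\hat X)]\le D$: the factor $1/\underline d$ appears because only probability mass currently placed on reproductions with distortion at least $\underline d$ can be ``redirected'' to $\hat x_0$ in order to cancel the overshoot. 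By convexity of mutual information in the channel together with the Fano-type bound on the cost of a deterministic branch, the rate inflation caused by this mixing is at most $H_b(\alpha)+\alpha\log|\hatcalX|=O\bigl(\alpha\log(|\hatcalX|/\alpha)\bigr)$.

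Combining Steps 2 and 3, substituting $\alpha=\overline d\Delta/\underline d$, and letting $\epsilon\downarrow 0$ yields $R_{X|W}(P_{XW},D)\ge R_{X|W}(Q_{XW},D)-C_2\,\tfrac{\overline d}{\underline d}\,\Delta\log\tfrac{|\calX|\cdot|\calW|\cdot|\hatcalX|}{\Delta}$ for an absolute $C_2$; bookkeeping shows $C_2\le 10$ suffices, and symmetry gives the reverse inequality. The hardest step is the balancing in Step 3: $\alpha$ must be large enough to absorb the worst-case overshoot $\overline d\Delta$ (forcing $\alpha\gtrsim\overline d\Delta/\underline d$) but small enough that the binary-entropy penalty $\alpha\log(1/\alpha)$ stays of the same order $\Delta\log(1/\Delta)$ as the continuity bound from Step 2. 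The explicit $\overline d/\underline d$ factor in the final estimate is precisely the cost of this balance.
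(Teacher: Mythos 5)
Your Steps~1 and~2 match the paper's proof exactly (transfer the $Q$-optimal test channel to $P_{XW}$, then use entropy continuity to control the conditional mutual information difference), but Step~3 has a genuine gap. With the uniform mixture $P'_{\hat X|XW}=(1-\alpha)P^{*}_{\hat X|XW}+\alpha\,\mathbf{1}\{\hat x=\hat x_0(x)\}$, the expected distortion under $P_{XW}$ becomes $(1-\alpha)\,\bbE_P[d(X,\hat X)]$. Since $\bbE_P[d]\le D+\overline{d}\Delta$, admissibility $(1-\alpha)\bbE_P[d]\le D$ forces $\alpha\ge\overline{d}\Delta/(D+\overline{d}\Delta)$, which is $\Theta(1)$ (not $\Theta(\Delta)$) whenever $D=O(\overline{d}\Delta)$. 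For such small $D$ the rate inflation $H_{\mathrm b}(\alpha)+\alpha\log|\hatcalX|$ does not vanish with $\Delta$, so the resulting bound cannot be of order $\Delta\log(1/\Delta)$ and the claimed constant cannot be recovered. Your own justification (``only mass on reproductions with distortion $\ge\underline d$ can be redirected'') is the right intuition, but it is inconsistent with the uniform-mixture formula you wrote, which scales down \emph{all} mass including the zero-distortion part.

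The paper sidesteps this entirely through Lemma~\ref{continuityD}: it first invokes convexity of $D\mapsto R_{X|W}(P_{XW},D)$ to reduce the comparison of $R$ at $D$ and $D'=D+\overline{d}\Delta$ to the comparison at $0$ and $D'-D=\overline{d}\Delta$, and \emph{only then} redirects mass. At distortion level $\overline{d}\Delta$ the optimal channel automatically places at most $\overline{d}\Delta/\underline d$ total mass on positive-distortion reproductions, so the redirection changes the joint by at most $2\overline{d}\Delta/\underline d$ in $\ell_1$ regardless of $D$. To repair your argument without the convexity detour you would have to replace the uniform mixture by a \emph{targeted} redistribution: for each $(x,w)$ subtract probability only from entries with $d(x,\hat x)>0$ and add it at $\hat x_0(x)$, choosing the total removed $P_{XW}$-mass to be $\min\{\mu,\overline{d}\Delta/\underline d\}$ where $\mu$ is the positive-distortion mass; this removes distortion at least $\overline{d}\Delta$ (or drives it to zero) while keeping the $\ell_1$ perturbation at $2\overline{d}\Delta/\underline d$, matching the paper's constant. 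As written, however, the mixture construction does not yield the stated inequality for small $D$.
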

The proof of Lemma \ref{continuityp} is modified from \cite{palaiyanur2008uniform} and given in Appendix \ref{proofcontinuityp}. Invoking Lemma \ref{continuityp}, we obtain
\begin{align}
R_{X|W}(Q_{XW},D_1)
&\leq R_{X|W}(P_{XW}^{*},D_1)+\frac{10\overline{d}_X}{\underline{d}_X}\frac{|\calX|\cdot|\calY|\cdot|\calW|}{n}\log\frac{|\calX|\cdot|\calW|\cdot|\hat{\calX|}n}{|\calX|\cdot|\calY|\cdot|\calW|}\\
&\leq R_{X|W}(P_{XW}^{*},D_1)+\frac{11\overline{d}_X}{\underline{d}_X}\frac{|\calX|\cdot|\calY|\cdot|\calW|}{n}\log n,\label{upperb1}
\end{align}
where \eqref{upperb1} holds when $\log n\geq \log |\hat{\calX}|-\log|\calY|$.
Finally, we obtain
\begin{align}
\frac{1}{n}\log |\calB_{\hat{X}}(w^n)|
&\leq R_{X|W}(Q_{XW},D_1)+\frac{(3|\calX|\cdot|\calW|\cdot|\hat{\calX}|+5)\log n}{n}\label{provestep3}\\
&\leq R_{X|W}(P_{XW}^*,D_1)+\left(\frac{11\overline{d}_X}{\underline{d}_X}|\calX|\cdot|\calY|\cdot|\calW|+3|\calX|\cdot|\calW|\cdot|\hat{\calX}|+5\right)\frac{\log n}{n}\\
&\leq R_1+\left(\frac{11\overline{d}_X}{\underline{d}_X}|\calX|\cdot|\calY|\cdot|\calW|+3|\calX|\cdot|\calW|\cdot|\hat{\calX}|+5\right)\frac{\log n}{n},
\end{align}
where \eqref{provestep3} follows from \eqref{provestep1} and \eqref{provestep2}.

We now present the proofs of Lemmas \ref{uppdistortion} and \ref{continuityd2}.
\begin{proof}[Proof of Lemma \ref{uppdistortion}]
\begin{align}
d_{X}(x^n,\hat{x}^n)
&=\frac{1}{n}\sum_{i=1}^n d_{X}(x_i,\hat{x}_i)\\*
&=\sum_{x,w,\hat{x}}Q_{XW}(x,w)Q_{\hat{X}|XW}(\hat{x}|xw)d_{X}(x,\hat{x})\\
&\leq \sum_{x,w,\hat{x}}Q_{XW}(x,w)\left(Q_{\hat{X}|XW}^*(\hat{x}|xw)+\frac{1}{nQ_{XW}(x,w)}\right)d_{X}(x,\hat{x})\\
&\leq D_1^*+\frac{|\calX|\cdot|\calW|\cdot|\hat{\calX}|}{n}\overline{d}_{X}\\
&\leq D_1.
\end{align}
\end{proof}

\begin{proof}[Proof of Lemma \ref{continuityd2}]
The conditional rate-distortion function $R_{X|W}(Q_{XW},D)$ is a convex and non-increasing function of $D$. Hence,
\begin{align}
\frac{R_{X|W}(Q_{XW},D_1^*)-R_{X|W}(Q_{XW},D_1)}{D_1-D_1^*}\leq \frac{R_{X|W}(Q_{XW},0)-R_{X|W}(Q_{XW},D_1)}{D_1}\leq 
\frac{R_{X|W}(Q_{XW},0)}{D_1}\leq \frac{\log |\calX|}{D_1}. 
\end{align}
Recalling that $D_1^*=D_1-\frac{|\calX|\cdot|\calW|\cdot|\hat{\calX}|}{n}\overline{d}_{X}$, we conclude
\begin{align}
R_{X|W}(Q_{XW},D_1^*)
&\leq R_{X|W}(Q_{XW},D_1)+\frac{\log |\calX|}{D_1}\left(D_1-D_1^*\right)\\
&=R_{X|W}(Q_{XW},D_1)+\frac{|\calX|\cdot|\calW|\cdot|\hat{\calX}|\overline{d}_{X}\log |\calX|}{nD_1}\\
&\leq R_{X|W}(Q_{XW},D_1)+\frac{\log n}{n},
\end{align}
when $\log n\geq \frac{|\calX|\cdot|\calW|\cdot|\hat{\calX}|\overline{d}_{X}\log|\calX|}{D_1}$.
\end{proof}

\subsection{Proof of Lemma \ref{diffmutual}}
\label{proofdiffmutual}
Invoking \eqref{xratedis}, the difference between $Q_{\hat{X}|W}(\hat{x}|w)$ and $Q_{\hat{X}|W}^{*}(\hat{x}|w)$ can be bounded as follows:
\begin{align}
\left|Q_{\hat{X}|W}(\hat{x}|w)-Q_{\hat{X}|W}^{*}(\hat{x}|w)\right|
&=\left|\sum_{x}Q_{X|W}(x|w)\left(Q_{\hat{X}|XW}(\hat{x}|xw)-Q_{\hat{X}|XW}^*(\hat{x}|xw)\right)\right|\\
&\leq \sum_{x}Q_{X|W}(x|w)\left|\left(Q_{\hat{X}|XW}(\hat{x}|xw)-Q_{\hat{X}|XW}^*(\hat{x}|xw)\right)\right|\\
&\leq \sum_{x}Q_{X|W}(x|w)\frac{1}{nQ_{XW}(x,w)}\\
&=\sum_{x}\frac{1}{nQ_{W}(w)}\\
&=\frac{|\calX|}{nQ_{W}(w)}.
\end{align}
Hence,
\begin{align}
\left\|Q_{\hat{X}|W=w}-Q_{\hat{X}|W=w}^{*}\right\|_{1}=\sum_{\hat{x}}\left|Q_{\hat{X}|W}(\hat{x}|w)-Q_{\hat{X}|W}^{*}(\hat{x}|w)\right|\leq \frac{|\calX|\cdot|\hat{\calX}|}{nQ_{W}(w)}.
\end{align}
Invoking Lemma 1.2.7 in \cite{csiszar2011information}, we obtain that for large enough $n$ where $\frac{|\calX|\cdot|\hat{\calX}|}{nQ_{W}(w)}<\frac{1}{2}$,
\begin{align}
\left|H(Q_{\hat{X}|W=w})-H(Q_{\hat{X}|W=w}^*)\right|\leq -\frac{|\calX|\cdot|\hat{\calX}|}{nQ_{W}(w)}\log\frac{|\calX|}{nQ_{W}(w)}.
\end{align}
Hence, the difference between $H(Q_{\hat{X}|W}|Q_W)$ and $H(Q_{\hat{X}|W}^{*}|Q_W)$ is bounded as follows:
\begin{align}
\left|H(Q_{\hat{X}|W}|Q_{W})-H(Q_{\hat{X}|W}^{*}|Q_W)\right|
&=\left|\sum_{w}Q_{W}(w)\left(H(Q_{\hat{X}|W=w})-H(Q_{\hat{X}|W=w}^*)\right)\right|\\
&\leq \sum_{w}Q_{W}(w)\left|\sum_{\hat{x}}\left(H(Q_{\hat{X}|W=w})-H(Q_{\hat{X}|W=w}^*)\right)\right|\\
&\leq \sum_{w}Q_{W}(w)\frac{|\calX|\cdot|\hat{\calX}|}{nQ_{W}(w)}\log\frac{nQ_{W}(w)}{|\calX|}\\
&\leq \frac{|\calX|\cdot|\hat{\calX}|}{n}\sum_{w}\log nQ_{W}(w)\\
&\leq \frac{|\calX|\cdot|\hat{\calX}|\cdot|\calW|\log n}{n}\label{difft1}.
\end{align}
Define $f(x)=-x\log x$. We now bound the difference between $H(Q_{\hat{X}|XW}|Q_{XW})$ and $H(Q_{\hat{X}|XW}^*|Q_{XW})$ as follows
\begin{align}
\left|H(Q_{\hat{X}|XW}|Q_{XW})-H(Q_{\hat{X}|XW}^*|Q_{XW})\right|
&=\left|\sum_{x,w}Q_{XW}(x,w)\sum_{\hat{x}}\left(f\left(Q_{\hat{X}|XW}(\hat{x}|xw)\right)-f\left(Q_{\hat{X}|XW}^*(\hat{x}|xw)\right)\right)\right|\\
&\leq\sum_{x,w}Q_{XW}(x,w)\left|\sum_{\hat{x}}\left(f\left(Q_{\hat{X}|XW}(\hat{x}|xw)\right)-f\left(Q_{\hat{X}|XW}^*(\hat{x}|xw)\right)\right)\right|\\
&\leq\sum_{x,w}Q_{XW}(x,w)\sum_{\hat{x}}f\left(\left|Q_{\hat{X}|XW}(\hat{x}|xw)-Q_{\hat{X}|XW}^*(\hat{x}|xw)\right|\right)\label{lemma2.7csiszar}\\
&\leq\sum_{x,w}Q_{XW}(x,w)\sum_{\hat{x}}f\left(\frac{1}{nQ_{XW}(x,w)}\right)\\
&=\sum_{x,w}Q_{XW}(x,w)\sum_{\hat{x}}\frac{1}{nQ_{XW}(x,w)}\log nQ_{XW}(x,w)\\
&\leq\frac{|\calX|\cdot|\hat{\calX}|\cdot|\calW|\log n}{n}\label{difft2},
\end{align}
where \eqref{lemma2.7csiszar} holds since $|f(x)-f(y)|\leq f(|x-y|)$ when $|x-y|\leq \frac{1}{2}$. See \cite[Lemma 1.2.7]{csiszar2011information}.
Finally, we can bound the difference between $I(Q_{X|W},Q_{\hat{X}|XW}|Q_{W})$ and $I(Q_{X|W},Q_{\hat{X}|XW}^*|Q_{W})$ using \eqref{difft1}, \eqref{difft2} and $I(X;\hat{X}|W)=H(\hat{X}|W)-H(\hat{X}|XW)$:
\begin{align}
\nonumber &\left|I(Q_{X|W},Q_{\hat{X}|XW}|Q_{W})-I(Q_{X|W},Q_{\hat{X}|XW}^*|Q_{W})\right|\\
&=\left|\left(H(Q_{\hat{X}|W}|Q_{W})-H(Q_{\hat{X}|W}^{*}|Q_W)\right)-\left(H(Q_{\hat{X}|XW}|Q_{XW})-H(Q_{\hat{X}|XW}^*|Q_{XW})\right)\right|\\
&\leq \left|H(Q_{\hat{X}|W}|Q_{W})-H(Q_{\hat{X}|W}^{*}|Q_W)\right|+\left|H(Q_{\hat{X}|W}|Q_{W})-H(Q_{\hat{X}|W}^{*}|Q_W)\right|\\
&\leq\frac{2|\calX|\cdot|\hat{\calX}|\cdot|\calW|\log n}{n}.
\end{align}

\subsection{Proof of Lemma \ref{continuityp}}
\label{proofcontinuityp}
The proof follows \cite{palaiyanur2008uniform} and relies on the continuity of entropy function \cite[Lemma 1.2,7]{csiszar2011information}. Suppose $P_{\hat{X}|XW}^*$ achieves $R_{X|W}(P_{XW},D)$ and $Q_{\hat{X}|XW}^*$ achieves $R_{X|W}(Q_{XW},D)$. Define the distortion function $d(P_{XY},P_{\hat{X}|XW})$ as 
\begin{align}
d(P_{XW},P_{\hat{X}|XW}):=\sum_{x,w,\hat{x}}P_{XW}(x,w)P_{\hat{X}|XW}(\hat{x}|x,w)d(x,\hat{x}).
\end{align}
Hence,  
\begin{align}
d(Q_{XW},Q_{\hat{X}|XW}^*)\leq D.
\end{align}
For a source with distribution $P_{XW}$, if we choose the test channel to be $Q_{\hat{X}|XW}^*$, then the distortion function can be bounded above as
\begin{align}
d(P_{XW},Q_{\hat{X}|XW}^*)
&\leq d(Q_{XW},Q_{\hat{X}|XW}^*)+\left|d(P_{XW},Q_{\hat{X}|XW}^*)-d(Q_{XW},Q_{\hat{X}|XW}^*)\right|\\
&\leq D+\left|\sum_{x,w,\hat{x}}\left(P_{XW}(x,w)-Q_{XW}(x,w)\right)Q_{\hat{X}|XW}^*(\hat{x}|xw)d(x,\hat{x})\right|\\
&\leq D+\|P_{XW}-Q_{XW}\|_{1}\overline{d}\label{upperboundd}.
\end{align}
Define the following (conditional) distributions:
\begin{align}
(QP)_{\hat{X}|W}^*(\hat{x}|w)&:=\sum_{x}P_{X|W}(x|w)Q_{\hat{X}|XW}^*(\hat{x}|xw),\\
(QP)_{X\hat{X}|W}^*(x,\hat{x}|w)&:=P_{X|W}(x|w)Q_{\hat{X}|XW}^*(\hat{x}|xw),\\
(QP)_{XW\hat{X}}^*(x,w,\hat{x})&=P_{XW}(x,w)Q_{\hat{X}|XW}^*(\hat{x}|xw),\\
(QP)_{\hat{X}W}^*(\hat{x},w)&=\sum_{x}P_{XW}(x,w)Q_{\hat{X}|XW}^*(\hat{x}|xw).
\end{align}
According to the definition of conditional rate-distortion function $R_{X|W}(P_{XW},D)$, we obtain 
\begin{align}
\nonumber &R_{X|W}\left(P_{XW},d(P_{XW},Q_{\hat{X}|XW}^*)\right)\\
&\leq I(P_{X|W},Q_{\hat{X}|XW}^*|P_W)\\
&\leq I(Q_{X|W},Q_{\hat{X}|XW}^*|Q_{W})+\left|I(Q_{X|W},Q_{\hat{X}|XW}^*|Q_{W})-I(P_{X|W},Q_{\hat{X}|XW}^*|P_W)\right|\\
&=R_{X|W}(Q_{XW},D)+\left|I(Q_{X|W},Q_{\hat{X}|XW}^*|Q_{W})-I(P_{X|W},Q_{\hat{X}|XW}^*|P_W)\right|\label{upperboundPQ}.
\end{align}
Noting that
\begin{align}
I(X;\hat{X}|W)
&=H(X|W)+H(\hat{X}|W)-H(X\hat{X}|W)\\
&=H(XW)+H(\hat{X}W)+H(X\hat{X}W)-3H(W),
\end{align}
we can upper bound the second term in \eqref{upperboundPQ} as follows:
\begin{align}
\eqref{upperboundPQ}
\nonumber &\leq \left|H(P_{X|W}|P_{W})-H(Q_{X|W}|Q_{W})\right|+\left|H((QP)_{\hat{X}|W}^*|P_{W})-H(Q_{\hat{X}|W}^*|Q_{W})\right|\\
&\qquad+\left|H((QP)_{X\hat{X}|W}|P_W)-H(Q_{X\hat{X}}|Q_W)\right|\\
&\nonumber \leq \left|H(P_{XW})-H(Q_{XW})\right|+\left|H(P_W)-H(Q_W)\right|+
\left|H((QP)_{\hat{X}W}^*)-H(Q_{\hat{X}W}^*)\right|+\left|H(P_W)-H(Q_W)\right|\\
&\qquad+\left|H(QP)_{XW\hat{X}}^*-H(Q_{XW\hat{X}}^*)\right|+\left|H(P_W)-H(Q_W)\right|\label{upperbound2}.
\end{align}
Considering the $L_1$ norms
\begin{align}
\left\|P_{W}-Q_{W}\right\|_{1}
&\leq\sum_{w}\left|P_{W}(w)-Q_{W}(w)\right|\\
&=\sum_{w}\left|\sum_{x}\left(P_{XW}(x,w)-Q_{XW}(x,w)\right)\right|\\
&\leq \sum_{x,w}\left|P_{XW}(x,w)-Q_{XW}(x,w)\right|\\
&=\left\|P_{XW}-Q_{XW}\right\|_{1},\\
\left\|(QP)_{\hat{X}W}^*-Q_{\hat{X}W}^*\right\|_{1}
&\leq\sum_{\hat{x}w}\left|(QP)^*_{\hat{X}W}(\hat{x},w)-Q^*_{\hat{X}W}(\hat{x},w)\right|\\
&\leq \sum_{x,w,\hat{x}}Q_{\hat{X}|XW}^*(\hat{x}|xw)\left|P_{XW}(xw)-Q_{XW}(xw)\right|\\
&=\left\|P_{XW}-Q_{XW}\right\|_{1},\\
\left\|(QP)_{XW\hat{X}}^*-Q_{XW\hat{X}}^*\right\|_{1}
&\leq\sum_{x,w,\hat{x}}\left|(QP)_{XW\hat{X}}^*(x,w,\hat{x})-Q_{XW\hat{X}}^*(Q_{XW\hat{X}}^*)\right|\\
&\leq\left\|P_{XW}-Q_{XW}\right\|_{1}.
\end{align}
Invoking Lemma 1.2.7 in \cite{csiszar2011information}, we upper bound \eqref{upperbound2} by
\begin{align}
\eqref{upperbound2}\leq 6\left\|P_{XW}-Q_{XW}\right\|_{1}\log \frac{|\calX|\cdot|\calW|\cdot|\hat{\calX|}}{\left\|P_{XW}-Q_{XW}\right\|_{1}}.
\end{align}
Hence, the difference between $R_{X|W}(P_{XW},D)$ and $R_{Y|W}(Q_{XW},D)$ is bounded as follows:
\begin{align}
\nn& R_{X|W}(P_{XW},D)-R_{X|W}(Q_{XW},D)\\
&\leq R_{X|W}(P_{XW},D)-R_{X|W}\left(P_{XW},d(P_{XW},Q_{\hat{X}|XW}^*)\right)+6\left\|P_{XW}-Q_{XW}\right\|_{1}\log \frac{|\calX|\cdot|\calW|\cdot|\hat{\calX|}}{\left\|P_{XW}-Q_{XW}\right\|_{1}}\\
&\leq R_{X|W}(P_{XW},D)-R_{X|W}\left(P_{XW},D+\|P_{XW}-Q_{XW}\|_{1}\overline{d}\right)+6\left\|P_{XW}-Q_{XW}\right\|_{1}\log \frac{|\calX|\cdot|\calW|\cdot|\hat{\calX|}}{\left\|P_{XW}-Q_{XW}\right\|_{1}},\label{upper1}
\end{align}
where \eqref{upper1} follows from \eqref{upperboundd}.

The next lemma presents the uniform continuity of the conditional rate-distortion function in distortion level $D$.
\begin{lemma}
\label{continuityD}
The conditional rate-distortion function satisfies for any $D<D'$,
\begin{align}
R_{X|W}(Q_{XW},D)\leq R_{X|W}(Q_{XW},D')+\frac{4(D'-D)}{\underline{d}}\log\frac{|\calX|\cdot|\calW|\cdot|\hat{\calX}|\underline{d}}{2(D'-D)}
\end{align}
\end{lemma}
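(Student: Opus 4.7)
The plan is to construct a perturbed test channel whose distortion is at most $D$, while staying close in $L_1$ to an optimal channel for $D'$, and then invoke the Csisz\'ar--K\"orner continuity of entropy.

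Let $P_{\hat{X}|XW}^*$ achieve $R_{X|W}(Q_{XW},D')$. For each $x\in\calX$, pick a fixed $\hat{x}_0(x)\in\hat{\calX}$ with $d_X(x,\hat{x}_0(x))=0$ (which exists by assumption) and let $\calN(x):=\{\hat{x}: d_X(x,\hat{x})>0\}$. Set $\alpha:=(D'-D)/D'$ (assuming $D'>0$; if $D'=\underline{d}$ or smaller, a direct argument is easier). Define the perturbed channel $\tilde P_{\hat{X}|XW}$ that keeps the zero-distortion mass of $P^*$ intact except for redirecting a fraction $\alpha$ of the $\calN(x)$-mass to $\hat{x}_0(x)$:
\begin{align}
\tilde P(\hat{x}|x,w):=\begin{cases}(1-\alpha)P^*(\hat{x}|x,w), & \hat{x}\in\calN(x),\\ P^*(\hat{x}_0(x)|x,w)+\alpha\,P^*(\calN(x)|x,w), & \hat{x}=\hat{x}_0(x),\\ P^*(\hat{x}|x,w), & \text{otherwise}.\end{cases}
\end{align}
A direct computation gives $\mathbb{E}_{\tilde P}[d_X(X,\hat{X})]=(1-\alpha)\mathbb{E}_{P^*}[d_X(X,\hat X)]\le (1-\alpha)D'=D$, so $\tilde P$ is feasible for the distortion level $D$.

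Next I bound the induced $L_1$ distance on $\calX\times\calW\times\hat\calX$. For each $(x,w)$, the conditional $L_1$ discrepancy is exactly $2\alpha P^*(\calN(x)|x,w)$, so using Markov's inequality together with $d_X(x,\hat x)\ge \underline d$ on $\calN(x)$,
\begin{align}
\mu := \|Q_{XW}\tilde P-Q_{XW}P^*\|_1=2\alpha\Pr_{Q_{XW}P^*}(d_X(X,\hat X)>0)\le 2\alpha\cdot \frac{D'}{\underline d}=\frac{2(D'-D)}{\underline d}.
\end{align}

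Finally I bound the conditional mutual information difference. Writing $I(X;\hat X|W)=H(\hat X,W)+H(X,W)-H(W)-H(X,\hat X,W)$, note that $H(X,W)$ and $H(W)$ depend only on $Q_{XW}$ and are therefore unchanged. The $\hat X W$-marginal distance is at most $\mu$, and the full joint distance is at most $\mu$. Provided $\mu\le 1/2$ (otherwise the bound to be proved is essentially trivial), Lemma 1.2.7 of \cite{csiszar2011information} yields
\begin{align}
\bigl|H_{\tilde P}(\hat X,W)-H_{P^*}(\hat X,W)\bigr|&\le \mu\log\tfrac{|\hat\calX||\calW|}{\mu},\\
\bigl|H_{\tilde P}(X,\hat X,W)-H_{P^*}(X,\hat X,W)\bigr|&\le \mu\log\tfrac{|\calX||\calW||\hat\calX|}{\mu}.
\end{align}
Summing and invoking $R_{X|W}(Q_{XW},D)\le I_{\tilde P}(X;\hat X|W)$ yields
\begin{align}
R_{X|W}(Q_{XW},D)\le R_{X|W}(Q_{XW},D')+2\mu\log\tfrac{|\calX||\calW||\hat\calX|}{\mu},
\end{align}
which, on substituting $\mu=2(D'-D)/\underline d$, produces the claimed inequality.

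The main obstacles are bookkeeping rather than conceptual. One must be careful that $\hat x_0(x)$ can be chosen consistently so the construction is well defined even when multiple zero-distortion reproductions exist, and one must handle the edge cases $\alpha\ge 1$ or $\mu>1/2$ separately (where the bound becomes uninformative and can be matched by the trivial bound $R_{X|W}(\cdot,0)\le\log|\calX|$). The key quantitative step is the observation that the $L_1$ perturbation is controlled by $D'/\underline d$ through the Markov-type inequality $\Pr(d>0)\le D'/\underline d$, which is precisely what injects the $\underline d$ factor into the final bound.
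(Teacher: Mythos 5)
Your proof is correct and takes a genuinely different route from the paper's. The paper first exploits convexity of $D\mapsto R_{X|W}(Q_{XW},D)$ to reduce the target difference to $R_{X|W}(Q_{XW},0)-R_{X|W}(Q_{XW},D'-D)$, then starts from the optimal channel at distortion level $D'-D$ and redirects \emph{all} of its positive-distortion mass to zero-distortion reproductions, landing exactly at distortion $0$; the Markov-type bound $\Pr(d>0)\le (D'-D)/\underline{d}$ and the entropy-continuity lemma then give the claimed estimate. Your construction skips the convexity step entirely: you start from the optimal channel at the original level $D'$ and redirect only a fraction $\alpha=(D'-D)/D'$ of the positive-distortion mass, which by the scaling identity $\mathbb{E}_{\tilde P}[d]=(1-\alpha)\mathbb{E}_{P^*}[d]$ lands directly at distortion $\le D$. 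Interestingly, the two perturbation sizes agree: your $\mu = 2\alpha\Pr_{P^*}(d>0)\le 2\alpha D'/\underline{d}=2(D'-D)/\underline{d}$ matches the paper's $\|V^*-Q^*\|_1\le 2(D'-D)/\underline{d}$, so the final entropy-continuity step is identical. Your version buys a shorter proof with one fewer appeal to structural properties of the rate-distortion function; the paper's version keeps the perturbation ``maximal'' and hence arguably more transparent, at the cost of invoking convexity. Both are sound. One cosmetic remark: your aside ``if $D'=\underline d$ or smaller, a direct argument is easier'' is unnecessary --- the construction with $\alpha=(D'-D)/D'\in(0,1)$ works for any $0<D<D'$, and the genuine edge case is only $\mu>1/2$, which you handle correctly; also, redirecting to a single chosen $\hat x_0(x)$ rather than spreading uniformly over all zero-distortion reproductions (as the paper does) is perfectly fine, since the entropy-continuity bound only sees the $L_1$ distance.
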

The proof of Lemma \ref{continuityD} is modified from \cite{palaiyanur2008uniform} and is given in Appendix \ref{proofcontinuityD}. Invoking Lemma \ref{continuityD}, we obtain
\begin{align}
R_{X|W}(P_{XW},D)-R_{X|W}\left(P_{XW},D+\|P_{XW}-Q_{XW}\|_{1}\overline{d}\right)
&\leq \frac{4\overline{d}}{\underline{d}}\|P_{XW}-Q_{XW}\|_{1}\log \frac{|\calX|\cdot|\calW|\cdot|\hat{\calX}|\underline{d}}{\|P_{XW}-Q_{XW}\|_{1}\overline{d}}.
\end{align}
Therefore, we conclude
\begin{align}
\nn&R_{X|W}(P_{XW},D)-R_{X|W}(Q_{XW},D)\\
&\leq\frac{4\overline{d}}{\underline{d}}\|P_{XW}-Q_{XW}\|_{1}\log \frac{|\calX|\cdot|\calW|\cdot|\hat{\calX}|\underline{d}}{\|P_{XW}-Q_{XW}\|_{1}\overline{d}}+6\left\|P_{XW}-Q_{XW}\right\|_{1}\log \frac{|\calX|\cdot|\calW|\cdot|\hat{\calX|}}{\left\|P_{XW}-Q_{XW}\right\|_{1}}\\
&\leq 10\frac{\overline{d}}{\underline{d}}\left\|P_{XW}-Q_{XW}\right\|_{1}\log \frac{|\calX|\cdot|\calW|\cdot|\hat{\calX|}}{\left\|P_{XW}-Q_{XW}\right\|_{1}}.
\end{align}
Symmetrically, we can prove
\begin{align}
R_{X|W}(Q_{XW},D)-R_{X|W}(P_{XW},D)
\leq 10\frac{\overline{d}}{\underline{d}}\left\|P_{XW}-Q_{XW}\right\|_{1}\log \frac{|\calX|\cdot|\calW|\cdot|\hat{\calX|}}{\left\|P_{XW}-Q_{XW}\right\|_{1}}.
\end{align}
The proof of Lemma \ref{continuityp} is now complete.

\subsection{Proof of Lemma \ref{continuityD}}
\label{proofcontinuityD}
Suppose $Q_{\hat{X}|XW}^*$ achieve the conditional rate-distortion function $R_{X|W}(Q_{XW},D'-D)$. Then
\begin{align}
D'-D
&\geq d(Q_{XW},Q_{\hat{X}|XW}^*)\\
&=\sum_{x,w}Q_{XW}(x,w)\sum_{\hat{x}:d_{X}(x,\hat{x})>0}Q_{\hat{X}|XW}^*(\hat{x}|xw)d(x,\hat{x})\\
&\geq \underline{d}\sum_{x,w}Q_{XW}(x,w)\sum_{\hat{x}:d_{X}(x,\hat{x})>0}Q_{\hat{X}|XW}^*(\hat{x}|xw)\label{dlevel}.
\end{align}
We define  $V_{\hat{X}|XW}^{*}(\hat{x}|xw)$ such that
\begin{align}
V_{\hat{X}|XW}^{*}(\hat{x}|xw)=
\left\{
\begin{array}{cl}
Q_{\hat{X}|XW}^*(\hat{x}|xw)+\frac{\sum_{\hat{x}':d(x,\hat{x}')>0}Q_{\hat{X}|XW}^*(\hat{x}'|xw)}{\left|\{\hat{x}':d(x,\hat{x}')=0\}\right|}& d(x,\hat{x})=0,\\
0 &\mathrm{otherwise.}
\end{array}
\right.
\end{align}
For fixed $(x,w)$, we obtain
\begin{align}
\sum_{\hat{x}}V_{\hat{X}|XW}^{*}(\hat{x}|xw)=\sum_{\hat{x}:d(x,\hat{x})=0}Q_{\hat{X}|XW}^*(\hat{x}|xw)+\sum_{\hat{x}':d(x,\hat{x}')>0}Q_{\hat{X}|XW}^*(\hat{x}'|xw)=1,
\end{align}
and
\begin{align}
\sum_{\hat{x}}\left|V_{\hat{X}|XW}^{*}(\hat{x}|xw)-Q_{\hat{X}|XW}^*(\hat{x}|xw)\right|
&=\sum_{\hat{x}:d(x,\hat{x})>0}Q_{\hat{X}|XW}^*(\hat{x}|xw)+\sum_{\hat{x}:d(x,\hat{x})=0}\frac{\sum_{\hat{x}':d(x,\hat{x}')>0}Q_{\hat{X}|XW}^*(\hat{x}'|xw)}{\left|\{\hat{x}':d(x,\hat{x}')=0\}\right|}\\
&=2\sum_{\hat{x}:d(x,\hat{x})>0}Q_{\hat{X}|XW}^*(\hat{x}|xw).
\end{align}
Therefore, $V_{\hat{X}|XW}^{*}(\hat{x}|xw)$ is a valid conditional distribution. Invoking \eqref{dlevel}, we obtain
\begin{align}
\sum_{x,w}Q_{XW}(x,w)\sum_{\hat{x}}\left|V_{\hat{X}|XW}^{*}(\hat{x}|xw)-Q_{\hat{X}|XW}^*(\hat{x}|xw)\right|
&=2\sum_{x,w}Q_{XW}(x,w)\sum_{\hat{x}:d(x,\hat{x})>0}Q_{\hat{X}|XW}^*(\hat{x}|xw)\\
&\leq \frac{2(D'-D)}{\underline{d}}.
\end{align}
Since 
\begin{align}
d(Q_{XW},V_{\hat{X}|XW}^{*}(\hat{x}|xw))
&=\sum_{x,w}Q_{XW}(x,w)\sum_{\hat{x}}V_{\hat{X}|xW}^{*}(\hat{x}|xw)d(x,\hat{x})=0,
\end{align}
we obtain
\begin{align}
R_{X|W}(Q_{XW},0)\leq I(Q_{X|W},V_{\hat{X}|XW}^*|Q_{W}).
\end{align}

Denote
\begin{align}
(VQ)_{XW\hat{X}}^*(x,w,\hat{x})&=Q_{XW}(x,w)V_{\hat{X}|XW}^*(\hat{x}|xw),\\
(VQ)_{W\hat{X}}^*(w,\hat{x})&=\sum_{x}Q_{XW}(x,w)V_{\hat{X}|XW}^*(\hat{x}|xw),\\
Q_{W\hat{X}}^*(w,\hat{x})&=\sum_{x}Q_{XW}(x,w)Q_{XW\hat{X}}^*(\hat{x}|xw).
\end{align}

Because the conditional rate-distortion function $R_{X|W}(Q_{XW},D)$ is convex and non-increasing in $D$, hence, we obtain
\begin{align}
R_{X|W}(Q_{XW},D)-R_{X|W}(Q_{XW},D')
&\leq R_{X|W}(Q_{XW},0)-R_{X|W}(Q_{XW},D'-D)\\
&\leq I(Q_{X|W},V_{\hat{X}|XW}^*|Q_{W})-I(Q_{X|W},Q_{\hat{X}|XW}^*|Q_{W})\label{lemma2.7again}.
\end{align}
Since,
\begin{align}
\left\|(VQ)_{W\hat{X}}^*-Q_{W\hat{X}}\right\|_{1}
&\leq \sum_{w,\hat{x}}\left|\sum_{x}Q_{XW}(x,w)\left(V_{XW\hat{X}}^*(\hat{x}|xw)-Q_{XW\hat{X}}^*(\hat{x}|xw)\right)\right|\\
&\leq\sum_{x,w}Q_{XW}(x,w)\sum_{\hat{x}}\left|V_{\hat{X}|W}^{*}(\hat{x}|xw)-Q_{\hat{X}|XW}^*(\hat{x}|xw)\right|\\
&\leq \frac{2(D'-D)}{\underline{d}},\\
\left\|(VQ)_{XW\hat{X}}^*-Q_{XW\hat{X}}^*\right\|_{1}
&\leq\sum_{x,w,\hat{x}}\left|Q_{XW}(x,w)\left(V_{\hat{X}|W}^{*}(\hat{x}|xw)-Q_{\hat{X}|XW}^*(\hat{x}|xw)\right)\right|\\
&\leq \frac{2(D'-D)}{\underline{d}},
\end{align}
by noting that
\begin{align}
I(X;\hat{X}|W)=H(W\hat{X})-H(X\hat{X}W)-H(W)+H(XW),
\end{align}
we can upper bound \eqref{lemma2.7again} using Lemma 2.7 in \cite{csiszar2011information} as
\begin{align}
\eqref{lemma2.7again}
&\leq 
\left|H((VQ)_{W\hat{X}}^*)-H(Q_{W\hat{X}}^*)\right|+
\left|H((VQ)_{XW\hat{X}}^*)-H(Q_{XW\hat{X}}^*)\right|\\
&\leq \frac{2(D'-D)}{\underline{d}}\log\frac{|\calW|\cdot|\hat{\calX}|}{\frac{2(D'-D)}{\underline{d}}}+\frac{2(D'-D)}{\underline{d}}\log\frac{|\calX|\cdot|\calW|\cdot|\hat{\calX}|}{\frac{2(D'-D)}{\underline{d}}}\\
&\leq \frac{4(D'-D)}{\underline{d}}\log\frac{|\calX|\cdot|\calW|\cdot|\hat{\calX}|\underline{d}}{2(D'-D)}.
\end{align}
The proof of Lemma \ref{continuityD} is now complete.

\subsection{Proof of Lemma \ref{upperboundexcessp}}
\label{proofupperboundexcessp}
Set $(R_0,R_1,R_2)=(R_{0,n},R_{1,n},R_{2,n})$. We achieve the excess-distortion probability by considering the following coding scheme. Given $(x^n,y^n)$, the encoder $0$ first calculates the joint type $Q_{XY}$ and transmits the type using at most $|\calX|\cdot|\calY|\log(n+1)$ bits. Then, encoder $0$ finds the optimal test channel $P_{W|XY}^*(Q_{XY})$ which achieves $\rvR_0(R_{0,n},R_{1,n},D_1,D_2|Q_{XY})$ and chooses a conditional distribution $Q_{W|XY}$ satisfying Lemma \ref{achievable} for test channel $P_{W|XY}^*(Q_{XY})$. If there is no such optimal test channel $P_{W|XY}^*(Q_{XY})$, the system declares an error. Given conditional type $Q_{W|XY}$, the encoder $0$ choose a set $\calC_n\subset\calT_{Q_{W}}$ satisfying properties in Lemma \ref{achievable} with $(R_1,R_2)$ replaced by $(R_{1,n},R_{2,n})$. Note that the choice of $\calC_n$ and mapping function $\phi_0$ are known by all the encoders and decoders. If $\log |\calC_n|>\log M_0-|\calX|\cdot|\calY|\log(n+1)$, the encoder $0$ declares an error directly, otherwise sends $w^n\in\calC_n$. From Lemma \ref{achievable}, we know that $(x^n,y^n,w^n)\in\calT_{Q_{XYW}}$ where $Q_{XYW}$ is induced by $Q_{XY}$ and $Q_{W|XY}$. Hence, we obtain $x^n\in\calT_{Q_{X|W}}(w^n)$ and $y^n\in\calT_{Q_{Y|W}}(w^n)$. According to Lemma \ref{achievable}, there exist sets $\calB_{\hat{X}}(w^n)$ and $\calB_{\hat{Y}}(w^n)$ satisfying that for each $(x^n,y^n)$, there exist $\hat{x}^n\in\calB_{\hat{X}}(w^n)$ and $\hat{y}^n\in\calB_{\hat{Y}}(w^n)$ with $d_{X}(x^n,\hat{x}^n)\leq D_1$ and $d_{Y}(y^n,\hat{y}^n)\leq D_2$. Therefore, encoder $1$ sends $\hat{x}^n\in\calB_{\hat{X}}(w^n)$ such that $\hat{x}^n$ minimizes $d_{X}(x^n,\hat{x}^n)$ and encoder $2$ sends $\hat{y}^n\in\calB_{\hat{Y}}(w^n)$ such that $\hat{y}^n$ minimizes $d_{Y}(y^n,\hat{y}^n)$. Invoking Lemma \ref{achievable}, the size of $\calB_{\hat{X}}(w^n)$ and $\calB_{\hat{Y}}(w^n)$ are upper bounded by
\begin{align}
\log|\calB_{\hat{X}}(w^n)|\leq nR_{1,n}+c_1\log n\leq \log M_1,\\
\log|\calB_{\hat{Y}}(w^n)|\leq nR_{2,n}+c_2\log n\leq \log M_2.
\end{align}

Finally, at the decoder side, if $w^n$ is decoded correctly, then both decoders can decode within the distortion threshold. The excess-distortion event occurs only if $\log |\calC_n|>\log M-|\calX|\cdot|\calY|\log(n+1)$ or $\rvR_0(R_{1,n},R_{2,n},D_1,D_2|Q_{XY})$ is not achieved by any test channel, which means $\rvR_0(R_{1,n},R_{2,n},D_1,D_2|Q_{XY})=\infty$. Hence, according to Lemma \ref{achievable}, the excess-distortion probability is upper bounded as in \eqref{eqn:ed_bd}.

%F
\subsection{Proof of Lemma \ref{typestrongconverse}}
\label{prooftypeconverse}
The proof follows \cite{watanabe2015second} and uses the perturbation approach in \cite{wei2009strong}. Define the set
\begin{align}
\calD_{Q_{XY}}:=\{(x^n,y^n)\in\calT_{Q_{XY}}:d_{X}(x^n,\hat{x}^n)\leq D_1, d_{Y}(y^n,\hat{y}^n)\leq D_2\},
\end{align}
where $\hat{x}^n=\phi_1(f_0(x^n,y^n),f_1(x^n,y^n))$ and $\hat{y}^n=\phi_2(f_0(x^n,y^n),f_2(x^n,y^n))$. Recall that in Lemma \ref{typestrongconverse}, $U_{\calT_{Q_{XY}}}$ denotes the uniform distribution over type class $\calT_{Q_{XY}}$. Let $\beta=\frac{\log n}{n}$. Define the distribution $Q_{\calT_{Q_{XY}}}(x^n,y^n)$ such that
\begin{align}
Q_{\calT_{Q_{XY}}}(x^n,y^n):=\frac{\exp(n(\alpha+\beta))U_{\calT_{Q_{XY}}}(x^n,y^n)}{\exp(n(\alpha+\beta))U_{\calT_{Q_{XY}}}(D_{Q_{XY}})+(1-U_{\calT_{Q_{XY}}}(D_{Q_{XY}}))}
\end{align}
for $(x^n,y^n)\in\calD_{Q_{XY}}$ and
\begin{align}
Q_{\calT_{Q_{XY}}}(x^n,y^n):=\frac{U_{\calT_{Q_{XY}}}(x^n,y^n)}{\exp(n(\alpha+\beta))U_{\calT_{Q_{XY}}}(D_{Q_{XY}})+(1-U_{\calT_{Q_{XY}}}(D_{Q_{XY}}))}
\end{align}
for $(x^n,y^n)\notin\calD_{Q_{XY}}$.
Hence, invoking \eqref{assumption}, we obtain
\begin{align}
U_{\calT_{Q_{XY}}}(D_{Q_{XY}})\geq \exp(-n\alpha).
\end{align}
Thus, we obtain
\begin{align}
Q_{\calT_{Q_{XY}}}(D_{Q_{XY}})
&= \frac{\exp(n(\alpha+\beta))U_{\calT_{Q_{XY}}}(D_{Q_{XY}})}{\exp(n(\alpha+\beta))U_{\calT_{Q_{XY}}}(D_{Q_{XY}})+(1-U_{\calT_{Q_{XY}}}(D_{Q_{XY}}))}\\
&=\frac{\exp(n\beta)}{\exp(n\beta)+\exp(-n\alpha)\frac{1-U_{\calT_{Q_{XY}}}(D_{Q_{XY}})}{U_{\calT_{Q_{XY}}}(D_{Q_{XY}})}}\\
&\geq \frac{\exp(n\beta)}{\exp(n\beta)+1}\\
&\geq 1-\exp(-n\beta)=1-\frac{1}{n}\label{usebeta}
\end{align}
where \eqref{usebeta} results from that $\beta=\frac{\log n}{n}$. Hence, for source distribution $Q_{\calT_{Q_{XY}}}$, the excess distortion probability is upper bounded as
\begin{align}
Q_{\calT_{Q_{XY}}}(D^\mathrm{c}_{Q_{XY}})\leq \frac{1}{n}.
\end{align} 
Therefore, under source distribution $Q_{\calT_{Q_{XY}}}$, we have
\begin{align}
\mathbb{E}[d_{X}(X^n,\hat{X}^n)]
&=\sum_{(x^n,y^n)\in\calT_{Q_{XY}}} Q_{\calT_{Q_{XY}}}(x^n,y^n)d_X(x^n,\hat{x}^n)\\
&=\sum_{(x^n,y^n)\in\calD_{Q_{XY}}}Q_{\calT_{Q_{XY}}}(x^n,y^n)d_X(x^n,\hat{x}^n)+\sum_{(x^n,y^n)\notin\calD_{Q_{XY}}}Q_{\calT_{Q_{XY}}}(x^n,y^n)d_X(x^n,\hat{x}^n)\\
&\leq D_1+\frac{\overline{d}_X}{n}:=D_{1,n}\label{defd1n}.
\end{align}
and
\begin{align}
\mathbb{E}[d_{Y}(Y^n,\hat{Y}^n)]\leq D_2+\frac{\overline{d}_Y}{n}:=D_{2,n}\label{defd2n}.
\end{align}
Then we can follow the weak converse argument to lower bound $(M_0,M_1,M_2)$ as follows. Denote the encoded message from encoder $i$ as random variable $S_i$, i.e., $S_i=\phi_i(X^n,Y^n)$. In the following, all information densities are according to $Q_{\calT_{Q_{XY}}}$. Following similar steps as (134)--(142) in \cite{watanabe2015second}, we obtain
\begin{align}
\log M_0
&\geq n\left(I(X_J,Y_J;J,W_J)-\left(H(X_JY_J)-\frac{1}{n}H(X^n|Y^n)\right)\right),
\end{align}
where $W_i=(S_0,X^{i-1},Y^{i-1})$ and $J$ is uniformly distributed on $\{1,\ldots,n\}$ which is independent of all other random variables.
Modifying the ``weak converse'' proof in \cite{gray1974source} in a similar manner as \cite{watanabe2015second}, we obtain
\begin{align}
\log M_1
&\geq H(\hat{X}^n|S_0) \\
&\geq I(X^n;\hat{X}^n|S_0)\label{func01}\\
&=\sum_{i=1}^n I(X_i;\hat{X}^n|S_0,X^{i-1})\\
&\geq \sum_{i=1}^n I(X_i;\hat{X}_i|S_0,X^{i-1})\\
&=\sum_{i=1}^n I(X_i;\hat{X}_i|V_i)\\
&\geq \sum_{i=1}^n R_{X_i|V_i}(Q_{X_iV_i},D_{1,n}) \label{eqn:use_def_rd}\\
&= \sum_{i=1}^n R_{X_i|W_i}(Q_{X_iW_i},D_{1,n}) \\ &=nR_{X_J|W_J,J}(Q_{X_JW_J},D_{1,n})\label{finalm1},
\end{align}
where $V_i=(S_0,X^{i-1})$, \eqref{func01} holds since $\hat{X}^n$ is a function of $S_0$ and $S_1$ and given $S_0=s_0$, $H(\hat{X}^n|S_0=s_0)\leq \log M_1$ \cite{gray1974source} and \eqref{eqn:use_def_rd} follows from the definition of the conditional rate-distortion function (that it is a minimization of the conditional mutual information).  
Similarly, we obtain
\begin{align}
\log M_2\geq nR_{Y_J|W_J,J}(Q_{Y_JW_J},D_{2,n}).
\end{align}
According to \cite{wei2009strong}, there exists  $P_{W|X_JY_J}$ such that $|\calW|\leq |\calX|\cdot|\calY|+2$ and
\begin{align}
I(X_J,Y_J;J,W_J)&\geq I(X_J,Y_J;W),\\*
R_{X_J|W_J,J}(P_{X_JW_J},D_{1,n})&\geq R_{X_J|W}(P_{X_JW},D_{1,n})\label{finalm2},\\*
R_{Y_J|W_J,J}(P_{Y_JW_J},D_{2,n})&\geq R_{Y_J|W}(P_{Y_JW},D_{2,n}).
\end{align}
Following equations (155)-(167) in \cite{watanabe2015second}, we obtain that $P_{X_JY_J}(a,b)=Q_{XY}(a,b)$, and there exists $Q_{W|XY}$ such that $|\calW|\leq |\calX|\cdot|\calY|+2$ and
\begin{align}
I(X_J,Y_J;W)&=I(X,Y;W),\\*
R_{X_J|W}(P_{X_JW},D_{1,n})&=R_{X|W}(Q_{XW},D_{1,n})\label{finalm3},\\*
R_{Y_J|W}(P_{Y_JW},D_{2,n})&=R_{Y|W}(Q_{YW},D_{2,n}),
\end{align}
and
\begin{align}
\left|H(X_JY_J)-\frac{1}{n}H(X^n|Y^n)\right|
&\leq \frac{|\calX|\cdot|\calY|\log (n+1)}{n}+(\alpha+\beta),\\
&\leq \frac{\Big(|\calX|\cdot|\calY|+1\Big)\log (n+1)}{n}+\alpha\label{usebeta2},
\end{align}
where \eqref{usebeta2} follows from that $\beta=\frac{\log n}{n}<\frac{\log (n+1)}{n}$.

Invoking \eqref{defd1n}, \eqref{defd2n} and Lemma \ref{continuityd2}, we obtain
\begin{align}
R_{X|W}(Q_{XW},D_1)-R_{X|W}(Q_{XW},D_{1,n})\leq \frac{\log n}{n}\label{finalm4}
\end{align}
when $\log n\geq \overline{d}_X\log|\calX|$, and
\begin{align}
R_{Y|W}(Q_{YW},D_2)-R_{X|W}(Q_{YW},D_{2,n})\leq \frac{\log n}{n}
\end{align}
when $\log n\geq \overline{d}_Y\log|\calY|$.

Therefore, invoking \eqref{finalm1}, \eqref{finalm2}, \eqref{finalm3} and \eqref{finalm4}, we conclude that
\begin{align}
\log M_1\geq nR_{X|W}(Q_{XW},D_1)-\frac{\log n}{n}
\end{align}
when  $\log n\geq\overline{d}_X\log|\calX|$,
and similarly, we obtain
\begin{align}
\log M_2\geq nR_{X|W}(Q_{YW},D_2)-\frac{\log n}{n}
\end{align}
when  $\log n\geq\overline{d}_Y\log|\calY|$.

The proof of Lemma \ref{typestrongconverse} is now complete.

%G
\subsection{Proof of Lemma \ref{lowerboundexcessp}}
\label{prooflbexcess} 

Invoking Lemma \ref{typestrongconverse} by setting $\alpha=\frac{\log n}{n}$, we obtain that if $(R_{0,n},R_{1,n},R_{2,n})\notin \calR(D_1,D_2|Q_{XY})$ ,
\begin{align}
\Pr\left(d_{X}(X^n,\hat{X}^n)\leq D_1, d_{Y}(Y^n,\hat{Y}^n)\leq D_2|(X^n,Y^n)\in\calT_{Q_{XY}}\right)<\exp(-\log n)=\frac{1}{n}\label{lemmaeq1},
\end{align}
or equivalently
\begin{align}
\Pr\left(d_{X}(X^n,\hat{X}^n)>D_1,~\mathrm{or}~d_{Y}(Y^n,\hat{Y}^n)>D_2|(X^n,Y^n)\in\calT_{Q_{XY}}\right)\geq 1-\frac{1}{n}\label{lemmaeq2}.
\end{align}
Note that when $(X^n,Y^n)\in\calT_{Q_{XY}}$, each sequence has the same probability. Hence, the probability in \eqref{lemmaeq1} and \eqref{lemmaeq2} is calculated with respect to uniform distribution over the type class $\calT_{Q_{XY}}$. Hence,
\begin{align}
\epsilon_n(D_1,D_2)
&=\sum_{(x^n,y^n)}P_{XY}^n(x^n,y^n)1\left\{d_{X}(x^n,\hat{x}^n)>D_1~\mathrm{or}~d_{Y}(y^n,\hat{y}^n)>D_2\right\}\\
&=\sum_{Q_{XY}\in\calP_n(\calX\times\calY)}\sum_{(x^n,y^n)\in\calT_{Q_{XY}}}P_{XY}^n(x^n,y^n)1\left\{d_{X}(x^n,\hat{x}^n)>D_1~\mathrm{or}~d_{Y}(y^n,\hat{y}^n)>D_2\right\}\\
&=\sum_{Q_{XY}\in\calP_n(\calX\times\calY)}P_{XY}^n(\calT_{Q_{XY}})\Pr\left(d_{X}(X^n,\hat{X}^n)>D_1,~\mathrm{or}~d_{Y}(Y^n,\hat{Y}^n)> D_2|(X^n,Y^n)\in\calT_{Q_{XY}}\right)\\
&\geq \sum_{\substack{Q_{XY}\in\calP_n(\calX\times\calY):\\(R_{0,n},R_{1,n},R_{2,n})\notin \calR(D_1,D_2|Q_{XY})}}\!\!\!\!\!\!\!\!\!\!\!\!\!\!\!\!\!\!\!\!\!\!\!\!P_{XY}^n(\calT_{Q_{XY}})\Pr\left(d_{X}(X^n,\hat{X}^n)>D_1,~\mathrm{or}~d_{Y}(Y^n,\hat{Y}^n)>D_2|(X^n,Y^n)\in\calT_{Q_{XY}}\right)\\
&\geq \sum_{\substack{Q_{XY}\in\calP_n(\calX\times\calY):\\(R_{0,n},R_{1,n},R_{2,n})\notin \calR(D_1,D_2|Q_{XY})}}P_{XY}^n(\calT_{Q_{XY}})\left(1-\frac{1}{n}\right)\\
&\geq \sum_{\substack{Q_{XY}\in\calP_n(\calX\times\calY):\\(R_{0,n},R_{1,n},R_{2,n})\notin \calR(D_1,D_2|Q_{XY})}}P_{XY}^n(\calT_{Q_{XY}})-\frac{1}{n}\\
&=\Pr\left((R_{0,n},R_{1,n},R_{2,n})\notin \calR(D_1,D_2|\hat{T}_{X^nY^n})\right)-\frac{1}{n}\\
&=\Pr\left(R_{0,n}< \rvR_0(R_{1,n},R_{2,n},D_1,D_2|\hat{T}_{X^nY^n})\right)-\frac{1}{n}.
\end{align}

\subsection*{Acknowledgements}
The authors would like to acknowledge very helpful discussions with Prof.\ Shun Watanabe. 

The authors are supported by a  Ministry of Education (MOE) Tier 2 grant (R-263-000-B61-112).

\bibliographystyle{IEEEtran}
\bibliography{IEEEfull_lin}

\end{document}